\documentclass[12pt]{article}
\usepackage{algorithm}
\usepackage[noend]{algpseudocode}
\usepackage{mathtools}
\usepackage{amsthm,amsmath,amssymb}
\usepackage{subcaption}
\usepackage{times}
\usepackage{newtxtext}
\usepackage{graphicx}
\usepackage{multirow,booktabs,array}
\usepackage{tabularx,enumerate}
\usepackage[left]{lineno} 
\usepackage{float}

\usepackage{tikz}

\newcommand{\bigwhitestar}{%
  \begin{tikzpicture}[scale=0.09, baseline=-0.5ex]
    \draw (90:1) -- (234:1) -- (18:1) -- (162:1) -- (306:1) -- cycle;
    \draw (90:1) -- (162:1) -- (234:1) -- (306:1) -- (18:1) -- cycle;
  \end{tikzpicture}%
}

\newtheorem{theorem}{Theorem}[section]
\newtheorem{lemma}{Lemma}[section]
\newtheorem{corollary}{Corollary}[section]
\newtheorem{claim}{Claim}

\newtheorem{definition}{Definition}

\newcommand{\yes}{\mbox{Yes}}
\newcommand{\no}{\mbox{No}}
\newcommand{\yesins}{{\yes}-instance}
\newcommand{\noins}{{\no}-instance}

\newcommand{\np}{\textsf{\mbox{NP}}}
\newcommand{\nph}{{\np}-{hard}}

\newcommand{\npc}{{\np}-{complete}}
\newcommand{\npcns}{\np-completeness}

\newcommand{\prob}[1]{{\sc{#1}}}
\newcommand{\vset}[1]{V(#1)}
\newcommand{\eset}[1]{E(#1)}
\newcommand{\neighbor}[2]{N_{#2}(#1)}
\newcommand{\degree}[2]{{\textsf{deg}}_{#2}({#1})}
\newcommand{\edge}[2]{\{#1,#2\}}
\newcommand{\abs}[1]{|#1|}
\newcommand{\setmid}{:}
\newcommand{\term}[1]{{\textit{#1}}}
\newcommand{\prealg}[2]{
	\begin{tabular}{l p{0.85\textwidth}}
		{\bf{Input:}} & #1.\\
		{\bf{Output:}}&  #2.\\ 
	\end{tabular}
}
\newcommand{\tp}[2]{\textsf{top}(#1,#2)}  
\newcommand{\ttp}[1]{\textsf{top}(#1)}
\newcommand{\second}[2]{\textsf{second}(#1,#2)}  
\newcommand{\bottom}[2]{\textsf{last}(#1,#2)}  
\newcommand{\attg}[1]{\textsf{AttDiG}(#1)}
\newcommand{\attug}[1]{\textsf{AttG}(#1)}
\newcommand{\parent}[2]{\textsf{parent}(#1,#2)}  
\newcommand{\forced}{forced}
\newcommand{\free}{free}
\newcommand{\sbo}[2]{#1_{[\dots #2]}} 

\newtheorem{observation}{Observation}
\newtheorem*{purificationrule}{Purification Rule}
\newtheorem{openquestion}{Open Question}

\newcommand{\guide}[2]{\textsf{guider}(#2,#1)}

\newcommand{\dfssu}[1]{T^{\text{DFS}}_{#1}}

\newcommand{\EPP}[3]
{\begin{center}
		{\small
			\begin{tabularx}{0.85\columnwidth}{ll}
				\toprule
				\multicolumn{2}{l}{\textsc{#1}} \\ \midrule
				{\bf Input:}   & \parbox[t]{0.8\columnwidth}{#2\vspace*{1mm}}  \\
				{\bf Question:}& \parbox[t]{0.8\columnwidth}{#3\vspace*{.5mm}} \\ \bottomrule
			\end{tabularx}
		}
\end{center}}

\DeclareMathOperator*{\argmin}{arg\,min}

\title{When Graph Traversal Meets Structured Preferences: Unified Framework and Complexity Results}
\author{Guozhen Rong$^1$, Xin Li$^1$, Yongjie Yang$^2$}
\date{$^1$School of Computer and Communication Engineering, Changsha University of Science and Technology, Changsha 410083, China\\
$^2$Department of Economics, Saarland University, Saarbr\"{u}cken 66123, Germany}

 \usepackage{geometry}
\begin{document}
\maketitle

\begin{abstract}
Preference restrictions have played a significant role in computational social choice. This paper studies a framework that connects preference restrictions with classical graph search paradigms. We model candidates as vertices of a graph and interpret each voter's preference ordering as the outcome of traversing the graph according to a graph search. We focus on six fundamental paradigms: breadth-first search (BFS), depth-first search (DFS), breadth-first search (LexBFS), lexicographic depth-first (LexDFS), maximum cardinality search (MCS), and maximal neighborhood search (MNS).

Within this framework, we study the problem of determining whether a given preference profile admits a graph support subject to structural restrictions, that is, whether there exists a graph such that each preference ordering can be generated by traversing the graph under the chosen paradigm. For all considered paradigms, we show that this problem is NP-hard when the graph support is required to have at most~$k$ edges. We further extend these hardness results to the case where the graph support is required to have maximum degree~$k$.
For DFS, we prove that recognizing whether a preference profile admits a tree support can be solved in polynomial time. Existing results imply polynomial-time solvability of the problem for all remaining graph traversals, except BFS and LexBFS, for which the complexity remains open.
\end{abstract}

\section{Introduction}
\subsection{Background}
The study of structured preferences is a central theme in computational social choice.
Classical domain restrictions such as single-peakedness~\cite{Black1948}, single-crossingness~\cite{Mirrlees71,Roberts1977329}, and group-separability~\cite{Inada1964,Inada1969} have been extensively studied, as they not only render many otherwise intractable computational problems polynomial-time solvable and provide a way to circumvent classical impossibility theorems, but also capture structural properties observed in many real-world applications~\cite{DBLP:journals/jair/BrandtBHH15,DBLP:conf/ijcai/CornazGS13,DBLP:journals/iandc/FaliszewskiHHR11,DBLP:conf/atal/Yang15,DBLP:journals/jcss/YangG17}.
Generalizations such as single-peakedness on trees~\cite{Demangesinglepeaked82,DBLP:conf/aaai/PetersE16,MichaeTricksinglepeaked89} and on circles~\cite{DBLP:journals/jair/PetersL20}, and single-crossingness on median graphs~\cite{DBLP:conf/ijcai/ClearwaterPS15} have further broadened the scope of this approach. A common feature of these domains is that they impose an underlying structure---often a graph from a specific class---that constrains how preferences can be organized.
The substantial body of existing work naturally invites a broader reflection:
\begin{quote}
    {\textit{Perhaps the time has come to examine graph-based preferences in a more general and systematic way?}}
\end{quote}

In this context, a graph search paradigm (a.k.a.~graph traversal), which specifies the order in which the vertices of a graph are visited, provides a natural framework for addressing this question.
Graph search paradigms such as generic search (GS), breadth-first search (BFS), depth-first search (DFS), lexicographic breadth-first search (LexBFS), lexicographic depth-first search (LexDFS), maximum cardinality search (MCS), and maximal neighborhood search (MNS) are among the most fundamental tools in graph theory and algorithm design~\cite{DBLP:journals/siamdm/BretscherCHP08,DBLP:journals/siamcomp/RoseTL76,DBLP:journals/siamcomp/Tarjan72}.
The generic search GS captures the weakest requirement, ensuring that at every step the set of visited vertices induces a connected subgraph. Viewed from a preference-theoretic perspective, this requirement naturally corresponds to the idea that preferences evolve through connected local comparisons.
Other paradigms can be seen as refinements of GS that introduce additional ordering principles.
Consequently, for a fixed graph, any ordering obtained by traversing the graph according to one of these paradigms is also a GS-ordering of the same graph.

While graph traversals were originally developed to characterize important graph classes and to support efficient recognition algorithms, their use has expanded far beyond these classical purposes, and their potential for further applications continues to grow.
For instance, BFS and DFS play a central role in querying and reasoning over knowledge graphs, where they are employed to explore semantic neighborhoods and compute path-based similarities~\cite{DBLP:conf/icde/YangHWY16,DBLP:conf/jist/ZhuRLWTY15}.
Graph search paradigms also underpin many areas of artificial intelligence, including heuristic search algorithms~\cite{KORF198597} and reasoning in abstract argumentation systems~\cite{NOFAL201423}.
Fundamentally, these diverse applications benefit from the fact that graph traversals can be viewed procedurally as models of sequential exploration.
This renders them a natural model for generating structured preference orderings from an underlying graph of alternatives. The connection is further supported by recent developments in decision-making frameworks, which increasingly model alternatives (candidates) as vertices of a graph, and relations such as interdependence, similarity, or interaction as edges~\cite{AMGOUD2014585,KleinerMo2020,DBLP:journals/scw/ReyEH25,DBLP:conf/ijcai/YangW18}.

\subsection{Our Contributions}
In exploring the above question, the work of Escoffier, Spanjaard, and Tydrichova~\cite{DBLP:journals/dam/EscoffierST24} constitutes an early step in this direction.
Given a set of orderings, they studied whether there exists a graph with at most~$k$ edges (or maximum degree at most~$k$) such that every prefix of every ordering induces a connected subgraph, and proved this problem to be {\nph}.
Although not stated explicitly, their notion naturally corresponds to the generic search paradigm.

We generalize this setting by considering the problems \prob{Edge-Bounded-$\mathcal{S}$-SP} and \textsc{Deg-Bounded-$\mathcal{S}$-SP}, which ask whether there exists a graph with at most~$k$ edges (or maximum degree at most~$k$) that admits all given orderings as $\mathcal{S}$-orderings, for a graph search paradigm~$\mathcal{S}$.
We complement the result of~\cite{DBLP:journals/dam/EscoffierST24} by showing that, for $\mathcal{S}\in\{\textnormal{DFS},\textnormal{BFS},\textnormal{LexDFS},\textnormal{LexBFS},\textnormal{MNS},\textnormal{MCS}\}$, both problems are {\nph}.

We also study the variant \textsc{SP-$\mathcal{S}$-Tree}, where the support is required to be a tree.
For GS, this coincides with the classical problem of recognizing single-peakedness on a tree, which is polynomial-time solvable~\cite{Demangesinglepeaked82,DBLP:journals/dam/EscoffierST24,MichaeTricksinglepeaked89,PetersYCE22}.
Since GS, MNS, and MCS coincide on trees, \textsc{SP-$\mathcal{S}$-Tree} is also polynomial-time solvable for $\mathcal{S}\in\{\textnormal{MNS},\textnormal{MCS}\}$.
Moreover, LexBFS coincides with BFS on trees, and LexDFS with DFS, leaving BFS and DFS as the only remaining open cases.
We resolve the DFS case by presenting a polynomial-time algorithm and, in doing so, establish several structural properties of DFS on trees that may be of independent interest.
The complexity of the BFS case remains open.

\subsection{Motivation}
While the preceding sections establish the theoretical context and our formal contributions,
the following discussion highlights the broader motivations of our framework.

In many real-world applications, preferences are far from arbitrary.
Instead, they often exhibit strong structural regularities that can be captured by well-defined domains such as single-peaked or single-crossing preferences.
These structural restrictions are of practical and theoretical importance:
on the one hand, they arise naturally in applications ranging from political elections (where candidates can often be placed along an ideological spectrum) to resource allocation and matching theory~\cite{DBLP:journals/aamas/BeynierMRS21,Bonifacio2015Socialchoicewelfsinglepeaked,RAD_ROY_2021};
on the other hand, they enable efficient algorithms in settings where general preference aggregation is computationally intractable~\cite{DBLP:journals/jair/BrandtBHH15,DBLP:conf/ijcai/CornazGS13,DBLP:journals/iandc/FaliszewskiHHR11,DBLP:conf/atal/Yang15,DBLP:journals/jcss/YangG17}.

Our work is motivated by the observation that preferences can be interpreted as the outcome of a \emph{sequential exploration} of alternatives embedded in a graph in certain scenarios.
For example, in decision-making scenarios, an agent may start from a known alternative and explore others step by step, comparing each new alternative to those already considered.
This process closely mirrors classical graph search paradigms and motivates the study of a unified framework in which preferences arise from traversals of an underlying graph of alternatives.

This connection is both conceptually and practically appealing.
Graph search provides a rich toolbox of paradigms that model different exploration strategies.
For instance, DFS corresponds to a ``deep dive'' into one branch of the search space before backtracking, while BFS represents a more cautious exploration level by level.
More sophisticated searches such as MCS or MNS are designed may be interpreted as plausible models of decision-making in contexts where agents prioritize alternatives with larger or more informative connections.
In MCS, vertices are visited in an order that successively maximizes the number of previously explored neighbors, whereas MNS prioritizes vertices whose set of visited neighbors is inclusion-wise maximal.
Both procedures capture a notion of exploration guided by connectivity or information richness, rather than simple connectivity.

In addition to the aforementioned potential applications, our study is further motivated by the generality of our model, which encompasses several classical domains as special cases.
For instance, a set of orderings is single-peaked if and only if it admits a GS-support that is a path.
Since any set of orderings on a set~$V$ admits the clique on~$V$ as a support, the model naturally bridges many restricted preference domains with the general domain.
Further connections between our framework and previously studied domains will be discussed in the next section.

\section{Preliminaries}
For an integer $i$, let $[i] = \{1, 2, \ldots, i\}$ denote the set of positive integers no greater than $i$. A {\term{pair}} is a set of cardinality two. For a set $X$, let $\overrightarrow{X}$ denote any arbitrary fixed ordering of $X$, unless stated otherwise.

\subsection{Graphs}
For a graph $G=(V, E)$, we also denote its vertex set and edge set by~$\vset{G}$ and~$\eset{G}$, respectively. An edge between two vertices $u$ and $v$ is denoted as $\edge{u}{v}$. We write~$\neighbor{v}{G}$ for the set of {\term{neighbors}} of~$v$ in~$G$, that is, $\neighbor{v}{G}=\{u \setmid \edge{u}{v}\in E\}$,
and let~$\degree{v}{G}=\abs{\neighbor{v}{G}}$ denote the degree of~$v$ in~$G$. When the graph under consideration is clear from the context, we omit~$G$ from the notation. 
Vertices of degree one (respectively, zero) are referred to as {\term{pendant}} (respectively, {\term{isolated}}) vertices. If a pendant vertex~$v$ is adjacent to~$u$, then~$v$ is called a {\term{pendant}} of~$u$. For a vertex subset~$S\subseteq\vset{G}$, we denote by~$G-S$ the graph obtained from~$G$ by deleting all vertices in~$S$ (together with their incident edges). When $S=\{v\}$ is a singleton, we simply write~$G-v$.

\subsection{Graph Search Paradigms}

Let $\sigma$ be an ordering of a set $V$. By convention, we write the elements of~$\sigma$ from left to right. For $u, v \in V$, if~$u$ appears to the left of~$v$ in~$\sigma$, we say that~$u$ \term{precedes}~$v$, or equivalently, that~$v$ \term{succeeds}~$u$. We denote this relation by $u <_\sigma v$. For~$u, v, w\in V$ with $u <_\sigma v$ and $v <_\sigma w$, we may write $u <_\sigma v <_\sigma w$, and say that~$v$ lies between~$u$ and~$w$ in~$\sigma$. This notation naturally extends to sequences of arbitrary length.
The \emph{position} of an element~$v \in V$ in~$\sigma$, denoted~$\sigma(v)$, is one more than the number of elements that precede~$v$ in~$\sigma$. We may represent an ordering by listing its elements in parentheses; for example, $(a,b,c)$ denotes an ordering in which~$a$,~$b$, and~$c$ have positions~$1$,~$2$, and~$3$, respectively.

Let $n=\abs{V}$. For an integer~$i \in [n]$, we use $\sigma_{\leq i}$ to denote~$\sigma$ induced by its first~$i$ elements. We call~$\sigma_{\leq i}$ a \emph{prefix} of~$\sigma$, and refer to the remaining part as its \emph{suffix}. For an element~$u \in V$, $\sbo{\sigma}{u}$ denotes the prefix of~$\sigma$ ending at~$u$.
For a subset~$S \subseteq V$, we write~$\sigma \ominus S$ for the ordering obtained from~$\sigma$ by deleting all vertices in~$S$.
For example, if $\sigma=(c,a,b,v,d)$, then $\sigma_{\le 3}=\sbo{\sigma}{b}(c,a,b)$, and $\sigma\ominus\{a,v\}=(c,b,d)$.

A \emph{graph search paradigm} is an algorithm that visits the vertices of a graph one by one, starting from an arbitrary initial vertex and following a specified rule to choose subsequent vertices. We consider six widely studied graph search paradigms. For each, we give an intuitive explanation and present formal pseudocode in Figure~\ref{fig-pesudocode}. We also characterize these paradigms using four-point orderings established by Corneil and Krueger~\cite{DBLP:journals/siamdm/CorneilK08}.

\begin{figure}
\noindent\begin{minipage}{0.5\textwidth}
BFS

\fbox{%
\begin{minipage}{0.95\linewidth}
\begin{algorithmic}[1]
\ForAll{$v \in \vset{G}$} \State $\mathrm{label}(v) \leftarrow 0$;
\EndFor \For {$i \leftarrow 1$ \textbf{to} $n$}
\State $v \leftarrow$ an unvisited vertex with the largest label;
\State $\sigma(v) \leftarrow i$;
\ForAll{unvisited $u \in \neighbor{v}{G}$}
\State $\mathrm{label}(u) \leftarrow \mathrm{max}(\mathrm{label}(u), n - i)$;
\EndFor
\EndFor
\State
\Return $\sigma$;
\end{algorithmic}
\end{minipage}
}
\end{minipage}\begin{minipage}{0.5\textwidth}
DFS

\fbox{%
\begin{minipage}{0.95\linewidth}
	\begin{algorithmic}[1]
		\ForAll{$v \in \vset{G}$}
		\State $\mathrm{label}(v) \leftarrow 0$;
		\EndFor
		\For {$i \leftarrow 1$ \textbf{to} $n$}
		\State $v \leftarrow$ an unvisited vertex with the largest label;
		\State $\sigma(v) \leftarrow i$;  
		\ForAll{unvisited $u \in \neighbor{v}{G}$}
		\State $\mathrm{label}(u) \leftarrow i$;
		\EndFor
		\EndFor
		\State \Return $\sigma$;
	\end{algorithmic}
\end{minipage}
}
\end{minipage}
\bigskip

\noindent\begin{minipage}{0.5\textwidth}
LexBFS

\fbox{%
\begin{minipage}{0.95\linewidth}
\begin{algorithmic}[1]
		\ForAll{$v \in \vset{G}$}
		\State $\mathrm{label}(v) \leftarrow 0$;
		\EndFor
		\For {$i \leftarrow 1$ \textbf{to} $n$}
		\State $v \leftarrow$ an unvisited vertex with the lexicographically largest label;
		\State $\sigma(v) \leftarrow i$; $j \leftarrow n-i$;  
		\ForAll{unvisited $u \in \neighbor{v}{G}$}
		\State append $j$ to $\mathrm{label}(u)$;
		\EndFor
		\EndFor
		\State \Return $\sigma$;
	\end{algorithmic}
\end{minipage}
}
\end{minipage}\begin{minipage}{0.5\textwidth}
LexDFS

\fbox{%
\begin{minipage}{0.95\linewidth}
\begin{algorithmic}[1]
		\ForAll{$v \in \vset{G}$}
		\State $\mathrm{label}(v) \leftarrow 0$;
		\EndFor
		\For {$i \leftarrow 1$ \textbf{to} $n$}
		\State $v \leftarrow$ an unvisited vertex with the lexicographically largest label;
		\State $\sigma(v) \leftarrow i$;  
		\ForAll{unvisited $u \in \neighbor{v}{G}$}
\State prepend $i$ to $\mathrm{label}(u)$;
		\EndFor
		\EndFor
		\State \Return $\sigma$;
	\end{algorithmic}
\end{minipage}
}
\end{minipage}
\bigskip

\noindent\begin{minipage}{0.5\textwidth}
MCS

\fbox{%
\begin{minipage}{0.95\linewidth}
\begin{algorithmic}[1]
		\ForAll{$v \in \vset{G}$}
		\State $\mathrm{label}(v) \leftarrow 0$;
		\EndFor
		\For {$i \leftarrow 1$ \textbf{to} $n$}
		\State $v \leftarrow$ an unvisited vertex with the largest label;
		\State $\sigma(v) \leftarrow i$;  
		\ForAll{unvisited $u \in N(v)$}
		\State $\mathrm{label}(u) \leftarrow \mathrm{label}(u) + 1$;
		\EndFor
		\EndFor
		\State \Return $\sigma$;
	\end{algorithmic}
\end{minipage}
}
\end{minipage}\begin{minipage}{0.5\textwidth}
MNS

\fbox{%
\begin{minipage}{0.95\linewidth}
\begin{algorithmic}[1]
		\ForAll{$v \in \vset{G}$}
		\State $\mathrm{label}(v) \leftarrow \emptyset$;
		\EndFor
		\For {$i \leftarrow 1$ \textbf{to} $n$}
		\State $v \leftarrow$ an unvisited vertex whose label is inclusion-wise maximal;
		\State $\sigma(v) \leftarrow i$; 
		\ForAll{unvisited $u \in N(v)$}
		\State $\mathrm{label}(u) \leftarrow \mathrm{label}(u) \cup \{i\}$;
		\EndFor
		\EndFor
		\State \Return $\sigma$;
	\end{algorithmic}
\end{minipage}
}
\end{minipage}
\caption{Pseudocode for several graph search paradigms.
Each algorithm takes as input a graph~$G$ and outputs an ordering~$\sigma$ of~$V(G)$.}
\label{fig-pesudocode}
\end{figure}

\begin{description}
    \item[Breadth-First Search (BFS).]
BFS explores the vertices of a graph in order of their distance from a selected vertex. It operates by visiting all vertices at distance $d$ from the start vertex before visiting any vertex at distance $d+1$.

\item[Depth-First Search (DFS).]
Starting from an initial vertex, DFS repeatedly visits an unvisited neighbor and continues recursively. When no such neighbor exists, it backtracks to the most recent vertex with unvisited neighbors.

\item[Lexicographic Breadth-First Search (LexBFS).]
LexBFS is a refinement of BFS that uses additional information to break ties between vertices that are eligible to be visited next%
. As in BFS, vertices are explored layer by layer, but when several unvisited vertices are available, LexBFS prefers those that are adjacent to vertices visited earlier in the search. Intuitively, vertices that have stronger connections to earlier explored parts of the graph are visited first, resulting in a more structured traversal than standard BFS.

\item[Lexicographic Depth-First Search (LexDFS).]
LexDFS is a refinement of DFS that uses additional information to break ties when several unvisited vertices are eligible to be explored next. As in DFS, the search proceeds by extending the current path as far as possible before backtracking. When multiple unvisited vertices are available, LexDFS prioritizes those that are adjacent to vertices visited more recently in the search. Intuitively, this reinforces the depth-first behavior by favoring continuations that stay close to the most recently explored parts of the graph.

 \item[Maximum Cardinality Search (MCS).]
    At each iteration, MCS selects an unvisited vertex maximizing the number of already visited neighbors.

    \item[Maximal Neighborhood Search (MNS).]
     At each iteration, MNS selects an unvisited vertex whose set of visited neighbors is inclusion-wise maximal among all unvisited vertices. 
\end{description}

For a graph~$G$ and a graph search paradigm~$\mathcal{S}$, an ordering~$\sigma$ of~$V$ is called an \emph{$\mathcal{S}$-ordering} if it is produced by a traversal of~$G$ according to~$\mathcal{S}$. A prefix of an $\mathcal{S}$-ordering is called a \emph{prefix $\mathcal{S}$-ordering} of~$G$.

\subsection{The Four-Point Ordering Characterizations}
Corneil and Krueger~\cite{DBLP:journals/siamdm/CorneilK08} established \emph{four-point ordering characterizations} for several graph search paradigms, expressing each traversal via simple local conditions on four vertices.

Let~$G=(V, E)$ be a graph and let $\sigma$ be an ordering of~$V$.
We say that~$\sigma$ \emph{satisfies} a property if, for every
$a,b,c\in V$ with $a <_{\sigma} b <_{\sigma} c$,
the corresponding condition holds.

\begin{description}
 \item[Property~S:] If $\edge{a}{c}\in E$ and $\edge{a}{b}\not\in E$, then there exists a vertex $d\in V$ with $d<_{\sigma}b$ and  $\edge{d}{b}\in E$.

    \item[Property~B:] If $\edge{a}{c}\in E$ and $\edge{a}{b}\not\in E$, then there exists a vertex $d\in V$ with $d<_{\sigma}a$ and  $\edge{d}{b}\in E$.

    \item[Property~D:] If  $\edge{a}{c}\in E$ and $\edge{a}{b}\not\in E$, then there exists a vertex $d\in V$ with $a<_{\sigma}d<_{\sigma}b$ and $\edge{d}{b}\in E$.

    \item[Property LB:] If  $\edge{a}{c}\in E$ and $\edge{a}{b}\not\in E$, then there exists a vertex $d\in V$ with $d<_{\sigma}a$ such that  $\edge{d}{b}\in E$ and $\edge{d}{c}\not\in E$.

    \item[Property LD:]  If $\edge{a}{c}\in E$ and $\edge{a}{b}\not\in E$, then there exists a vertex $d\in V$ with $a<_{\sigma}d<_{\sigma}b$ such that $\edge{d}{b}\in E$ and $\edge{d}{c}\not\in E$.

    \item[Property M:] If $\edge{a}{c}\in E$ and $\edge{a}{b}\not\in E$, then there exists a vertex $d<_{\sigma}b$ such that $\edge{d}{b}\in E$ and $\edge{d}{c}\not\in E$.
\end{description}

Figure~\ref{fig-properties} illustrates the above defined properties. 

\begin{figure}
    \centering
    \includegraphics[width=0.95\linewidth]{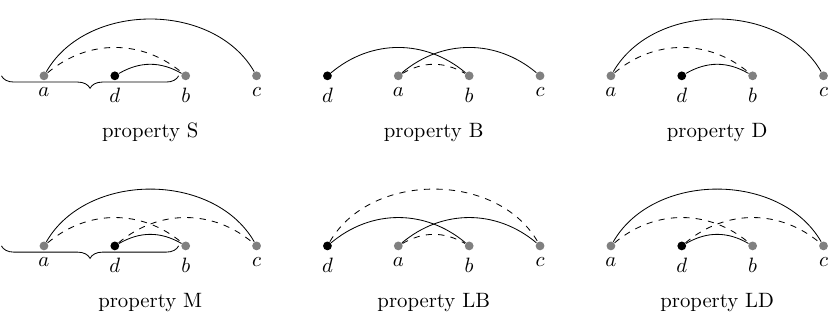}
    \caption{Illustration of four-point ordering characterizations of several graph search paradigms.
Solid black edges indicate required adjacencies, dashed edges required non-adjacencies;
all other adjacencies are irrelevant.
The brace specifies a constrained positional scope for~$d$ relative to~$a$ and~$b$,
whose exact position depends on the paradigm.
Figure adapted from~\cite{DBLP:journals/siamdm/CorneilK08}.}
    \label{fig-properties}
\end{figure}

\begin{lemma}[\cite{DBLP:journals/siamdm/CorneilK08}]
    \label{lem-four-vertex}
    Let $G$ be a graph and $\sigma$ an ordering of~$\vset{G}$. Then:
      \begin{enumerate}[(1)]
        \item\label{four-point-gs} $\sigma$ is a GS-ordering of $G$ if and only if $\sigma$ satisfies property S;
        \item\label{four-point-bfs} $\sigma$ is a BFS-ordering of $G$ if and only if $\sigma$ satisfies property B;
        \item\label{four-point-dfs} $\sigma$ is a DFS-ordering of $G$ if and only if $\sigma$ satisfies property D;
        \item\label{four-point-lexbfs} $\sigma$ is a LexBFS-ordering of $G$ if and only if $\sigma$ satisfies property LB;
        \item\label{four-point-lexdfs} $\sigma$ is a LexDFS-ordering of $G$ if and only if $\sigma$ satisfies property LD;
        \item\label{four-point-mns} $\sigma$ is an MNS-ordering of $G$ if and only if $\sigma$ satisfies property M.
    \end{enumerate}
\end{lemma}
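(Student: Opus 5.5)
The plan is to prove each of the six equivalences by the same two-directional scheme; (2) and (3) are the cleanest. For the ``only if'' direction we fix a run of the algorithm from Figure~\ref{fig-pesudocode} that outputs $\sigma$. Take $a<_\sigma b<_\sigma c$ with $\edge{a}{c}\in E$ and $\edge{a}{b}\notin E$, and look at the iteration in which $b$ is selected. At that moment $c$ is still unvisited and has already received a label from $a$. Since $b$ was chosen over $c$, the label of $b$ must be at least as large (or, for MNS, not strictly contained in that of $c$). We then decode what this comparison means. For GS, $b$ has some visited neighbour $d$, so $d<_\sigma b$, which is Property~S. For BFS, $b$'s label is at least $n-\sigma(a)$, so some neighbour $d$ of $b$ has $\sigma(d)\le\sigma(a)$; since $d\ne a$, we get $d<_\sigma a$, which is Property~B. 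For DFS, the label of $b$ is at least the label of $c$, which is at least $\sigma(a)$; hence some neighbour $d$ of $b$ lies strictly between $a$ and $b$, giving Property~D. For the lexicographic and set-valued labels, we compare the two labels at the first position where they differ. That position is witnessed by a visited $d$ adjacent to $b$ but not to $c$. For LexBFS the label entries are decreasing in visit time and $a$ contributes to $c$'s label, so $d<_\sigma a$. For LexDFS the labels are prepended (most recent first), so $a<_\sigma d$. For MNS, non-containment of $c$'s label in $b$'s label already yields some $d<_\sigma b$ with $\edge{d}{b}\in E$ and $\edge{d}{c}\notin E$. This gives Properties LB, LD and M.

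For the ``if'' direction we argue by minimal counterexample. Suppose $\sigma$ satisfies the property but is not an $\mathcal{S}$-ordering. Let $i$ be the first step at which the algorithm, run with ties broken according to $\sigma$, cannot choose $\sigma^{-1}(i)=:b$. This means there is an unvisited $c$ with $b<_\sigma c$ whose label is strictly better than $b$'s. We then pick a specific $a$, namely the earliest visited vertex adjacent to $c$ that is not adjacent to $b$. This choice is made because it is responsible for the top-priority entry in $c$'s label that $b$ lacks. If all of $c$'s visited neighbours were also neighbours of $b$, then $b$'s label would dominate $c$'s, a contradiction. For DFS and LexDFS we instead choose $a$ to be the latest such vertex. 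Applying the property to $a<_\sigma b<_\sigma c$ produces a $d$ adjacent to $b$ in the prescribed window. We then show that $d$ gives $b$ a label entry at least as strong as the one $a$ gives $c$, contradicting strict superiority. For example, in the DFS case $d$ lies strictly after $a$, so $\mathrm{label}(b)\ge\sigma(d)>\sigma(a)$.

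The main obstacle is the lexicographic cases LB and LD. There, one application of the property does not immediately settle the comparison. The witness $d$ may itself coincide with an entry shared by both labels, or fall in the wrong place relative to the first differing entry. I would handle this by induction on the number of common leading label entries. Alternatively, I would choose $a$ extremal (earliest for LexBFS, latest for LexDFS) among the vertices distinguishing $c$ from $b$, so that the property's extra condition $\edge{d}{c}\notin E$ forces $d$ to be a distinguishing vertex of the opposite kind positioned before (respectively after) $a$. That in turn makes $b$'s label lexicographically larger. MCS is not part of the statement, so no characterization is needed for it. Finally, since this is exactly the cited result of Corneil and Krueger, it suffices to reproduce this argument or simply invoke \cite{DBLP:journals/siamdm/CorneilK08}.
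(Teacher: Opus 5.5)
The paper gives no proof of this lemma. It imports the result verbatim from Corneil and Krueger, so your fallback of simply citing them is exactly the paper's route. What you add is a self-contained verification against the label-based pseudocode of Figure~\ref{fig-pesudocode}, and it is correct. The only-if direction compares the labels of $b$ and $c$ at the moment $b$ is selected. The if direction looks at the first step at which $\sigma$ cannot be the algorithm's choice, with $a$ taken extremal among visited vertices adjacent to $c$ but not to $b$. This buys something the citation does not: it checks the characterization against the paper's own conventions, namely the BFS label $n-i$, the extra $0$ entry in the lexicographic labels, and label-$0$ restarts on disconnected graphs. In particular, for GS the paper only offers the prefix-connectivity description, under which (1) needs $G$ to be connected. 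Your reading (``$b$ must have a visited neighbour whenever some unvisited vertex has one'') is the one under which (1) holds for every graph.

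There are two small corrections. First, the ``main obstacle'' you describe for LB and LD does not exist, and no induction is needed. $\mathrm{label}(c)$ is lexicographically larger than $\mathrm{label}(b)$ exactly when the earliest (for LexBFS) or latest (for LexDFS) vertex of the symmetric difference of the visited neighbourhoods of $b$ and $c$ is adjacent to $c$. So your extremal $a$ is precisely that vertex. The witness $d$ supplied by LB (resp.\ LD) is visited, adjacent to $b$ and not to $c$, hence lies in that symmetric difference. It also lies strictly before (resp.\ after) $a$, which is an immediate contradiction. The clause $\{d,c\}\notin E$ is exactly what prevents $d$ from being a shared label entry. Second, in the MNS only-if step the containment is written backwards. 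You need $\mathrm{label}(b)\not\subseteq\mathrm{label}(c)$. This follows because $\mathrm{label}(b)$ is not strictly contained in $\mathrm{label}(c)$ and differs from it, as witnessed by $a$.
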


We note that no analogous four-point ordering characterization exists for MCS.

\begin{lemma}[\cite{DBLP:journals/siamdm/CorneilK08}]
\label{lem-graph-traversal-relation}
For every graph $G$, the following relations hold among graph search paradigms:
\begin{itemize}
    \item every BFS-, DFS-, and MNS-ordering of $G$ is a GS-ordering of $G$;
    \item every LexBFS-ordering of $G$ is both a BFS-ordering and an MNS-ordering of $G$;
    \item every LexDFS-ordering of $G$ is both a DFS-ordering and an MNS-ordering of $G$;
    \item every MCS-ordering of $G$ is an MNS-ordering of $G$.
\end{itemize}
\end{lemma}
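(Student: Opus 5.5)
The statement is Lemma~\ref{lem-graph-traversal-relation}, which relates the paradigms to one another. Most of its inclusions follow directly from the four-point characterizations in Lemma~\ref{lem-four-vertex}: I will show that each stronger property implies the weaker one. The exception is MCS, which has no four-point characterization and is handled separately.

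\textbf{BFS, DFS and MNS are GS.} Fix $a<_\sigma b<_\sigma c$ with $\edge{a}{c}\in E$ and $\edge{a}{b}\notin E$. Property~S asks only for a vertex $d<_\sigma b$ with $\edge{d}{b}\in E$.
\begin{itemize}
\item Property~B supplies $d<_\sigma a<_\sigma b$ with $\edge{d}{b}\in E$, which is enough.
\item Property~D supplies $d$ with $a<_\sigma d<_\sigma b$ and $\edge{d}{b}\in E$, which is also enough.
\item Property~M supplies $d<_\sigma b$ with $\edge{d}{b}\in E$ (together with the extra condition $\edge{d}{c}\notin E$), which is again enough.
\end{itemize}

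\textbf{LexBFS and LexDFS.}
\begin{itemize}
\item Property~LB is Property~B with the extra requirement $\edge{d}{c}\notin E$, so LB implies B. Since its witness satisfies $d<_\sigma a<_\sigma b$, LB also implies M.
\item Property~LD is Property~D with the extra requirement $\edge{d}{c}\notin E$, so LD implies D. Since its witness lies strictly before $b$, LD also implies M.
\end{itemize}
Applying the equivalences of Lemma~\ref{lem-four-vertex} in both directions turns these property implications into the claimed inclusions of orderings.

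\textbf{MCS is MNS.} Here I argue directly from the pseudocode. At step $i$, the MCS label of an unvisited vertex $u$ equals $|N(u)\cap\{\sigma^{-1}(1),\dots,\sigma^{-1}(i-1)\}|$. This is exactly the cardinality of the MNS label of $u$ at the same step, namely the set of visited neighbours of $u$. Suppose MCS picks $v$ with maximum cardinality. If some unvisited $u$ had a visited-neighbour set strictly containing that of $v$, then $u$ would have strictly larger cardinality, contradicting the choice of $v$. So $v$'s label is inclusion-wise maximal, and $v$ is a legal MNS choice. By induction on $i$, every MCS run is also an MNS run, so it produces an MNS-ordering.

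\textbf{Main obstacle.} The only subtlety is that the MCS case cannot use the four-point lemma. The induction must therefore keep the two processes synchronized on the same prefix, so that at each step the MCS and MNS labels describe the same visited-neighbour sets. I would also check that the "if" directions of Lemma~\ref{lem-four-vertex} are used correctly: the ordering satisfies the weaker property, and that is what certifies it as an ordering of the weaker paradigm.
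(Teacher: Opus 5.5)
Your proposal is correct. The paper gives no proof of this lemma; it imports it from Corneil and Krueger, so there is no in-paper argument to match. Your derivation is the standard justification and it is complete:
\begin{itemize}
\item LB and LD are literally B and D with the extra clause $\{d,c\}\notin E$.
\item The witnesses of B, D, LB and LD all lie before $b$.
\item M strengthens S's requirement.
\end{itemize}
So every property implication is immediate, and Lemma~\ref{lem-four-vertex}, used in both directions, turns each into a containment of orderings. For MCS, the observation that a visited neighbourhood of maximum cardinality cannot be strictly contained in another is exactly the right argument. (The MNS label in Figure~\ref{fig-pesudocode} is a set of positions rather than vertices, but that is a bijective relabelling and changes nothing.)

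Your route has one clear advantage. The paper never gives GS a pseudocode definition, so going through Property~S is the cleanest way to make the first bullet precise. The trade-off is that every bullet except the MCS one rests on the nontrivial four-point characterization theorem, which is itself only cited.

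If you wanted a self-contained proof, the Lex-containments also follow directly from the labels, as your MCS argument does:
\begin{itemize}
\item Lexicographic maximality forces the most significant entry of the chosen label to be extremal. For LexBFS this entry records the earliest visited neighbour, and for LexDFS the latest. That is precisely the BFS (respectively DFS) selection criterion.
\item Labels are decreasing sequences. So when one visited neighbourhood strictly contains another, the smaller label is a proper subsequence of the larger and hence lexicographically smaller. A lexicographically maximal label is therefore inclusion-maximal, which gives the MNS containments.
\end{itemize}
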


Figure~\ref{fig-relation-graph-searches} illustrates these containment relationships.

\begin{figure}
\begin{minipage}{0.5\textwidth}
     \centering
    \includegraphics[width=0.5\linewidth]{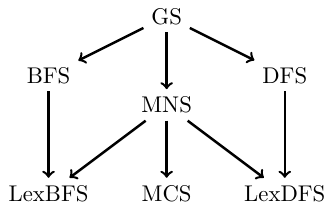}
\end{minipage}\begin{minipage}{0.5\textwidth}
    \caption{Containment relationships among graph search paradigms.
An arc from a search~$\mathcal{S}$ to~$\mathcal{S}'$ indicates that, for any graph~$G$,
every $\mathcal{S}'$-ordering of~$G$ is also an $\mathcal{S}$-ordering of~$G$.
Figure adapted from~\cite{DBLP:journals/siamdm/CorneilK08}.}
    \label{fig-relation-graph-searches}
    \end{minipage}
\end{figure}


\begin{lemma}
\label{lem-bfs-first}
Let~$\sigma$ be a BFS-ordering (respectively, LexBFS-ordering) of a graph~$G$.
Let~$a$ be the first vertex in~$\sigma$, and let~$c$ be a neighbor of $a$ in~$G$.
Then, all vertices between~$a$ and~$c$ in~$\sigma$ are adjacent to~$a$.
\end{lemma}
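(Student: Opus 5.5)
We argue directly from the four-point characterization in Lemma~\ref{lem-four-vertex}\eqref{four-point-bfs}, and reduce the LexBFS case to the BFS case. By Lemma~\ref{lem-graph-traversal-relation}, every LexBFS-ordering of~$G$ is a BFS-ordering of~$G$. It therefore suffices to prove the statement when~$\sigma$ is a BFS-ordering, i.e., when~$\sigma$ satisfies Property~B.

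Let~$a$ be the first vertex of~$\sigma$ and let~$c\in\neighbor{a}{G}$. Suppose, for contradiction, that some vertex~$b$ with $a<_\sigma b<_\sigma c$ is not adjacent to~$a$. Then the triple $a<_\sigma b<_\sigma c$ satisfies $\edge{a}{c}\in E$ and $\edge{a}{b}\notin E$, which is exactly the hypothesis of Property~B. Property~B thus yields a vertex $d$ with $d<_\sigma a$ and $\edge{d}{b}\in E$. But~$a$ is the first vertex of~$\sigma$, so no vertex precedes it, and this is a contradiction. Hence every vertex strictly between~$a$ and~$c$ is adjacent to~$a$.

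There is no real obstacle here; the only point needing care is the reduction for LexBFS. One could instead invoke Property~LB directly, which also demands a vertex $d<_\sigma a$ and fails for the same reason, so either route works. As a sanity check, one can also read the claim off the BFS pseudocode. After~$a$ is visited at step~$1$, all its neighbors receive label $n-1$, and every later label assigned is at most $n-2$. So all neighbors of~$a$ are visited before any non-neighbor, which confirms the statement independently of Lemma~\ref{lem-four-vertex}.
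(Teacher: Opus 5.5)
Your proof is correct and follows essentially the same route as the paper, which also derives the contradiction from Property~B applied to $a<_\sigma b<_\sigma c$ (no vertex can precede the first vertex~$a$). The only difference is that the paper handles LexBFS by invoking Property~LB directly rather than reducing to BFS via Lemma~\ref{lem-graph-traversal-relation}, and you note yourself that either route works.
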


\begin{proof}
If~$c$ is the second vertex in~$\sigma$, the lemma holds trivially. Otherwise, let~$b$ be a vertex between~$a$ and~$c$ in~$\sigma$. Suppose for contradiction that~$a$ is not adjacent to~$b$ in~$G$. Then, by the four-point ordering characterization of BFS, that is, Statement~(2) of Lemma~\ref{lem-four-vertex} (respectively, for LexBFS, Statement~(4) of Lemma~\ref{lem-four-vertex}), there exists a vertex preceding~$a$ in~$\sigma$ that is adjacent to~$b$. This contradicts the fact that~$a$ is the first vertex in~$\sigma$.
\end{proof}

The following observation is straightforward.

\begin{observation}
\label{ob-connected}
Let~$\mathcal{S}$ be one of the graph traversals: BFS, DFS, LexBFS, LexDFS, MCS, or MNS. Let~$G$ be a connected graph on~$n$ vertices, and let~$\sigma$ be an $\mathcal{S}$-ordering of~$G$. Then, for any $i \in [n]$, the subgraph of~$G$ induced by the first~$i$ vertices in~$\sigma$ is connected.
\end{observation}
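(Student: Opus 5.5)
The plan is to reduce Observation~\ref{ob-connected} to the generic-search case, which is immediate. By Lemma~\ref{lem-graph-traversal-relation}, every BFS-, DFS-, LexBFS-, LexDFS-, MCS-, and MNS-ordering of~$G$ is a GS-ordering of~$G$. LexBFS and LexDFS pass through BFS and DFS, and MCS passes through MNS, and each of these is a GS-ordering. So it is enough to show that every prefix of a GS-ordering of a connected graph induces a connected subgraph.

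For that, I would argue by induction on~$i$, using Property~S from Lemma~\ref{lem-four-vertex}(\ref{four-point-gs}). The case $i=1$ is trivial. For the inductive step, assume $G[\sigma_{\le i-1}]$ is connected and let $b$ be the $i$-th vertex of~$\sigma$. It suffices to show that $b$ has a neighbor among the first $i-1$ vertices.

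\begin{itemize}
\item Since $G$ is connected and $i\ge 2$, there is an edge $\edge{a}{c}$ with $a<_\sigma b$ and $c$ in the set $\{b\}\cup\{x: b<_\sigma x\}$, because the set of earlier vertices is nonempty and proper.
\item If $c=b$, we are done.
\item Otherwise $a<_\sigma b<_\sigma c$ and $\edge{a}{c}\in E$. Either $\edge{a}{b}\in E$, and we are done, or Property~S supplies a vertex $d<_\sigma b$ with $\edge{d}{b}\in E$.
\end{itemize}
In every case $b$ is adjacent to some earlier vertex, so $G[\sigma_{\le i}]$ is connected.

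Alternatively, one could argue directly from the pseudocode: any unvisited vertex adjacent to a visited vertex has a nonzero label, or a nonempty set or list. Such a vertex exists by connectivity, and it beats every label-$0$ or empty-label vertex, so the chosen vertex always has a visited neighbor. For BFS, however, line~7 sets the label to $n-i$, which is $0$ when $i=n$. That case is harmless because it occurs only at the last step. Given this edge case, the route through Property~S and Lemma~\ref{lem-graph-traversal-relation} is cleaner.

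There is no real obstacle here. The only care needed is selecting the edge $\edge{a}{c}$ that crosses from the prefix to the rest, so that Property~S applies.
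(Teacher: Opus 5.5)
Your argument is correct. The paper gives no proof and just calls the observation straightforward. Your reduction to generic search via Lemma~\ref{lem-graph-traversal-relation} matches the justification implicit in the paper's introduction: every listed paradigm refines GS, and GS keeps each prefix connected by design. Deriving that GS property from Property~S is valid but slightly roundabout, since generic search is defined by the rule that each new vertex has a visited neighbor whenever one exists.
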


\subsection{Problem Definitions}
Let $\mathcal{S}$ be a graph search paradigm.
We say that a graph~$G$ is an \emph{$\mathcal{S}$-support} for a set~$\Pi$ of orderings, or that~$G$ \emph{admits $\Pi$ as a set of $\mathcal{S}$-orderings}, if every ordering in~$\Pi$ is an $\mathcal{S}$-ordering of~$G$. An \emph{$\mathcal{S}$-tree support} is an $\mathcal{S}$-support that is a tree.

We study the following decision problems, originally introduced by Escoffier, Spanjaard, and Tydrichov{\'a}~\cite{DBLP:journals/dam/EscoffierST24} for the special case of GS, although GS is not explicitly mentioned in their paper.

\EPP
{Edge-Bounded $\mathcal{S}$-Single-Peakedness (Edge-Bounded-$\mathcal{S}$-SP)}
{A vertex set~$V$, a set~$\Pi$ of orderings of~$V$, and an integer~$k$.}
{Does $\Pi$ admit an $\mathcal{S}$-support with at most $k$ edges?}

\EPP
{Degree-Bounded $\mathcal{S}$-Single-Peakedness (Deg-Bounded-$\mathcal{S}$-SP)}
{A vertex set $V$, a set $\Pi$ of orderings of $V$, and an integer $k$.}
{Does $\Pi$ admit an $\mathcal{S}$-support with maximum degree at most $k$?}

\EPP
{$\mathcal{S}$-Single-Peakedness  on Trees ($\mathcal{S}$-SP-Trees)}
{A vertex set $V$, and a set $\Pi$ of orderings of $V$.}
{Does~$\Pi$ admit an $\mathcal{S}$-tree support?}


%

\section{Single-Peakedness on Graphs with Bounded Degree or Edge Count}

This section establishes the {\npcns} of \prob{Edge-Bounded-$\mathcal{S}$-SP} and \prob{Deg-Bounded-$\mathcal{S}$-SP} for graph search paradigms not considered in~\cite{DBLP:journals/dam/EscoffierST24}. Since all problems considered below are in {\np}, we focus exclusively on the hardness reductions and omit explicit proofs of {\np}-membership.

Our hardness results are based on reductions from the {\prob{Vertex Cover}} problem.
Given a graph $G=(V,E)$, a subset $S\subseteq V$ is said to \emph{cover} an edge $\edge{u}{v}\in E$ if $\edge{u}{v}\cap S\neq \emptyset$.
A \emph{vertex cover} of~$G$ is a subset $S\subseteq V$ that covers all edges of~$G$.
A \emph{clique} in~$G$ is a set of pairwise adjacent vertices.

\EPP{Vertex Cover}{A graph $G$ and an integer $\kappa$.}{Does $G$ have a vertex cover of $\kappa$ vertices?}

\prob{Vertex Cover} is {\npc} even when restricted to $3$-regular graphs~\cite{DBLP:journals/tcs/GareyJS76}, which we adopt throughout for clarity and consistency, although it is not strictly necessary for all proofs.

The following structure is frequently used in our reductions.

\begin{definition}
    For positive integers~$p$ and~$q$, a~$(p, q)$-drone is a graph of $p+p\cdot q$ vertices and $\frac{p\cdot (p-1)}{2}+p\cdot q$ edges such that:
\begin{enumerate}
    \item[(1)] the graph contains a clique of size~$p$; and
    \item[(2)] every vertex in the clique has exactly~$q$ pendants.
\end{enumerate}
\end{definition}

An example is shown in Figure~\ref{fig-drone}.

\begin{figure}[ht]
    \centering{\includegraphics[width=0.25\textwidth]{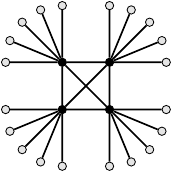}}
   \caption{A $(4,5)$-drone.
   }
    \label{fig-drone}
\end{figure}
\medskip

\noindent{\bf{Outline of the Reductions.}} In several of our reductions, we shall construct a set of orderings that can be classified into two groups. The first group is designed to ensure that any graph with the minimum number of edges (respectively, maximum degree) that {\term{admits}} these orderings as $\mathcal{S}$-orderings contain all edges in $(p, q)$-drones for some specific values of~$p$ and~$q$. This is typically achieved by generating, for each edge of the corresponding $(p, q)$-drone, an ordering in which the two endpoints of the edge appear as the first and second vertices, respectively.
In some reductions, however, the first group  is constructed more carefully to serve additional technical purposes.
It is worth emphasizing that, although we often define a $(p, q)$-drone in our reductions, the drone itself does not belong to the reduction instance. It is defined solely to facilitate the understanding of the reduction and to aid in the correctness proof.
The second group encodes the solution structure of the underlying \prob{Vertex Cover} instance. Specifically, this group ensures that the input graph admits a vertex cover~$S$ of size~$\kappa$ if and only if every graph with at most~$k$ edges (respectively, maximum degree at most~$k$) that admits the orderings in this group as $\mathcal{S}$-orderings contains all edges between~$S$ and a designated vertex~$z$ (not in the $(p, q)$-drone).
In some reductions, the two groups of orderings are not strictly disjoint, as certain orderings may simultaneously fulfill the roles of both classes.



We now begin our exploration of the hardness results.
For clarity, we establish the following  notations that will be used throughout the proofs.
\medskip

\noindent\textbf{Global Notations and Assumptions.}
In all reductions, we consider instances of \prob{Vertex Cover} of the form $I=(G,\kappa)$, where $G$ is a $3$-regular graph.
We denote by $n$ the number of vertices of~$G$ and by $m$ the number of edges of~$G$; since $G$ is $3$-regular, we have $m=3n/2$.
The vertex set and edge set of~$G$ are denoted by~$V$ and~$E$, respectively.
Moreover, $\overrightarrow{V} = (v_1, v_2, \dots, v_n)$ denotes an arbitrary fixed ordering of~$V$, and $\overrightarrow{E} = (e_1, e_2, \dots, e_m)$ denotes an arbitrary fixed ordering of~$E$.
We assume that $n \ge 4$, $m \ge 6$, and $\kappa \ge 2$.
These conventions will be used throughout without further mention.

For two orderings~$X$ and~$Y$, we use $(X, Y)$ to denote their concatenation.
This notation extends naturally to multiple orderings; for instance, if $X = (x_1, x_2, x_3)$, $Y = (y_1, y_2)$, and $Z = (z_1, z_2)$, then
$(X, Y, Z) = (x_1, x_2, x_3, y_1, y_2, z_1, z_2)$.
\medskip

We now present our first hardness result.

\begin{theorem}
\label{thm-bfs-npc}
\prob{Edge-Bounded-BFS-SP} is {\emph{\nph}}.
\end{theorem}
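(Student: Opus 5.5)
We reduce from \prob{Vertex Cover} on $3$-regular graphs, following the outline given before the theorem. From $I=(G,\kappa)$ we build an instance $(V',\Pi,k)$ of \prob{Edge-Bounded-BFS-SP}. The vertex set $V'$ consists of the original vertices $V$, one vertex $x_e$ for each edge $e\in E$, and a special vertex $z$. It may also contain a $(p,q)$-drone gadget whose only role is to fix a large number of edges that every support must contain.

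The orderings in $\Pi$ fall into two groups.

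\textbf{Forcing orderings.} For every edge $\{u,w\}$ that we want to force, we include an ordering that starts with $(u,w,\dots)$. By Lemma~\ref{lem-bfs-first}, every vertex between the first vertex and one of its neighbours must also be its neighbour. Taking $c=w$ in position two, this requires nothing extra, but putting $u,w$ in the first two positions together with Observation~\ref{ob-connected} forces the edge $\{u,w\}$ in any support. Using this, we force the edges $\{v,x_e\}$ for $v\in e$ (the incidence structure) and, if needed, the drone edges.

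\textbf{Encoding orderings.} For each edge $e=\{u,w\}$ we include an ordering that starts with $z$ and then lists the vertices in a BFS-consistent way, for example $(z,\overrightarrow{V},\overrightarrow{X},\dots)$, arranged so that $x_e$ must be reached from $z$ through $u$ or $w$. Concretely, the ordering starting at $z$ has the $V$-block directly after $z$. By Lemma~\ref{lem-bfs-first}, all of $V$ would then have to be adjacent to $z$, which is too many edges. So the blocks must be interleaved instead: each $z$-ordering places exactly the vertices that we want to be at distance one from $z$ in the first layer, and $x_e$ in the second layer.

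The budget is $k=$ (number of forced edges) $+\kappa$. For the forward direction, given a vertex cover $S$ we take the forced edges plus $\{z,s\}$ for all $s\in S$, and check Property~B for every ordering. The reorderings within layers are chosen in the construction so that each $x_e$ has an earlier-layer parent in $S$.

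For the backward direction, the budget leaves only $\kappa$ edges beyond the forced ones. Each encoding ordering, via Observation~\ref{ob-connected} applied to the prefix ending at $x_e$ and Property~B, requires $z$ to be connected to $x_e$ through a short path. The forced edges give short paths only through $u$ or $w$. An exchange argument then shows that an optimal support can be assumed to use extra edges only of the form $\{z,v\}$, and these endpoints form a vertex cover.

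The main obstacle is designing the $z$-orderings. A single fixed ordering must be a BFS-ordering of the intended support, yet it cannot reveal which vertices form the cover. BFS is rigid: the layer structure from $z$ is fixed by the ordering, and by Lemma~\ref{lem-bfs-first} the entire prefix before the last neighbour of $z$ lies in $N(z)$. My first attempt is to start the encoding orderings not at $z$ but at $x_e$ (or at a drone vertex), placing $z$ late. Property~B then only demands that $z$ have an earlier neighbour among $V$, which is satisfied by any cover vertex adjacent to $z$. This uses the fact that the $x_e$-orderings must reach $z$ at distance $2$, via $u$ or $w$. I expect the drone with large $q$ to be needed to make any alternative connection (such as a direct edge $\{z,x_e\}$ or a route through the drone) cost more than the budget allows.
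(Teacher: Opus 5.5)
Your overall frame matches the paper: a reduction from \textsc{Vertex Cover}, budget $k=(\text{number of forced edges})+\kappa$, the cover encoded by $N(z)$, and Property~B applied around $z$. However, the proposal never defines $\Pi$, and the difficulty you flag yourself is left unresolved. Since $\Pi$ is computed without knowing any cover, each ordering must be a BFS-ordering of $H_S$ (forced edges plus all $\{z,s\}$, $s\in S$) for an arbitrary cover $S$. The pieces you do commit to fail precisely here.

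(a) Take any ordering starting at some $v\in V$, e.g.\ a forcing ordering $(v_i,v_j,\ldots)$ for a drone edge on $V$. By Lemma~\ref{lem-bfs-first}, $z$ must lie in the run of neighbours right after $v$ if $v\in S$, and after the whole first layer if $v\notin S$. The only position compatible with both is directly after the forced neighbours of $v$. But then, for $v\notin S$, $z$ opens the second layer and must be adjacent to the first child-bearing vertex of the first layer. For $(v_i,v_j,\ldots)$ this imposes $\{v_i,v_j\}\cap S\neq\emptyset$ for \emph{all} pairs. In your subdivision graph without a clique on $V$ it is worse: $z$ is at distance $1$ from $v$ if $v\in S$ and at distance $3$ otherwise, so no position works.

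(b) BFS layers from $x_e$ in the subdivision of $G$ plus $z$ depend on $S$. Every vertex of $S$ is pulled to distance at most $3$ through $z$, so the prefix sets of a BFS-ordering change with $S$, and in general no fixed ordering serves all covers.

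(c) The condition you aim for at the end, an earlier neighbour of $z$ in $V$, is already implied by Observation~\ref{ob-connected} and is met by any single cover vertex; it says nothing about $e$. What is needed is a forced edge $\{a,c\}$ with $a<_\sigma z<_\sigma c$ and $a\in e$, such that every vertex before $a$ is the other endpoint of $e$ or provably non-adjacent to $z$.

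(d) The exchange argument is unsupported. Changing edges of a support can invalidate other orderings, and nothing in your construction excludes alternative routes such as extra edges at $x_e$ or inside $V$.

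The missing idea is to make the layer structure from the start vertex independent of $S$, and to make every deviation from the intended support over budget by amplification. The paper makes $V$ a clique, hangs $n^2$ pendants $F_i$ on each $v_i$, and adds an isolated $z$. For each edge $\{v_i,v_j\}$ of $G$ it uses orderings of the form $(f_i^p,v_i,\text{rest of }F_i,f_i^q,v_j,\text{rest of }V,v_t,z,F_j,\ldots,F_t)$. Starting from a pendant, $V\setminus\{v_i\}$ is always the second layer and each $F_x$ with $x\neq i$ is always the third. So $z$, placed exactly at the boundary, is the last child of $v_i$ if $v_i\in S$, and otherwise the first child of $v_j$ (the other vertices of $F_i$ have no children). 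This is the boundary effect from (a), now harmless because it involves exactly the edge $\{v_i,v_j\}$. Conversely, if $z\not\sim v_j$, Property~B on $(v_j,z,f_j)$ forces a neighbour of $z$ before $v_j$, which must be $v_i$.

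No exchange argument is needed. Apply Lemma~\ref{lem-bfs-first} to the orderings that start at $f_i^p$ and close the $F_i$-block with a prescribed $f_i^q$: any extra edge at a pendant becomes a clique on $F_i$ with $\Omega(n^4)$ edges, which exceeds the budget. Hence every $f_i^p$ is a pendant of $v_i$ and $N_H(z)\subseteq V$. The clique on $V$ is then obtained indirectly, not through orderings that start with two vertices of $V$, which by (a) would be incompatible with a cover-independent placement of $z$. Connectivity gives $E(G)\subseteq E(H)$, and Property~B on the orderings with $v_t$ last in $V$, together with $3$-regularity, gives the remaining edges.
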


\begin{proof}
Given an instance $I = (G, \kappa)$ of {\prob{Vertex Cover}}, where $G = (V, E)$ is a $3$-regular graph, we construct an instance of {\prob{Edge-Bounded-BFS-SP}} as follows.

We first define a set~$\Pi$ of $2m \cdot n^2 \cdot (n^2 - 1)$ orderings.
These orderings are constructed so that any graph with the minimum number of edges that admits~$\Pi$ as a set of BFS-orderings must contain all edges of a specific graph~$G'$, obtained from~$G$ by the following operations:
\begin{itemize}
\item[(1)] Add edges to make~$V$ a clique.
\item[(2)] For each $i \in [n]$, add a set $F_i = \{f_i^1, f_i^2, \ldots, f_i^{n^2}\}$ of~$n^2$ pendants of~$v_i$. 
\item[(3)] Add an isolated vertex~$z$.
\end{itemize}
Clearly, $G' - z$ is an $(n, n^2)$-drone, and~$G'$ has $\tfrac{n\cdot (n - 1)}{2} + n^3$ edges.
We now detail the construction of the orderings in~$\Pi$.
For each $i \in [n]$, let $\cramped{\overrightarrow{F_i}} = (f_i^1, f_i^2, \ldots, f_i^{n^2})$, and let $\cramped{\overrightarrow{F}} = (\overrightarrow{F_1}, \overrightarrow{F_2}, \ldots, \overrightarrow{F_n})$.

For every edge $\edge{v_i}{v_j} \in \eset{G}$ such that $i,j\in [n]$ and $i<j$, every $t \in [n] \setminus \{i, j\}$, and every pair $\{p, q\} \subseteq [n^2]$ with $p<q$, we construct the following four orderings of~$\vset{G'}$
:
\begin{align*}
\pi((i, j), t, (p, q)) &= \left(f_i^p, v_i, \overrightarrow{F_i} \ominus \{f_i^p, f_i^q\}, f_i^q, v_j, \overrightarrow{V} \ominus \{v_i, v_j, v_t\}, v_t, z, \overrightarrow{F_j}, \overrightarrow{F} \ominus (F_i \cup F_j \cup F_t), \overrightarrow{F_t}\right), \\
\pi((i, j), t, (q, p)) &= \left(f_i^q, v_i, \overrightarrow{F_i} \ominus \{f_i^q, f_i^p\}, f_i^p, v_j, \overrightarrow{V} \ominus \{v_i, v_j, v_t\}, v_t, z, \overrightarrow{F_j}, \overrightarrow{F} \ominus (F_i \cup F_j \cup F_t), \overrightarrow{F_t}\right), \\
\pi((j, i), t, (p, q)) &= \left(f_j^p, v_j, \overrightarrow{F_j} \ominus \{f_j^p, f_j^q\}, f_j^q, v_i, \overrightarrow{V} \ominus \{v_i, v_j, v_t\}, v_t, z, \overrightarrow{F_i}, \overrightarrow{F} \ominus (F_i \cup F_j \cup F_t), \overrightarrow{F_t}\right), \\
\pi((j, i), t, (q, p)) &= \left(f_j^q, v_j, \overrightarrow{F_j} \ominus \{f_j^q, f_j^p\}, f_j^p, v_i, \overrightarrow{V} \ominus \{v_i, v_j, v_t\}, v_t, z, \overrightarrow{F_i}, \overrightarrow{F} \ominus (F_i \cup F_j \cup F_t), \overrightarrow{F_t}\right).
\end{align*}
Let $k = \kappa + \abs{\eset{G'}} = \kappa + \frac{n\cdot (n - 1)}{2} + n^3$. We define the resulting instance of \prob{Edge-Bounded-BFS-SP} as $g(I)=(\vset{G'},\Pi,k)$, which can be constructed in polynomial time. We show that $I$ is a {\yesins} of \prob{Vertex Cover} if and only if $g(I)$ is a {\yesins} of \prob{Edge-Bounded-BFS-SP}.

$(\Rightarrow)$ Assume that~$G$ admits a vertex cover~$S \subseteq V$ of size~$\kappa$.
Let~$H$ be the graph obtained from~$G'$ by adding the~$\kappa$ edges between~$z$ and the vertices in~$S$. Then,~$H$ has exactly $\abs{\eset{G'}} + \kappa = k$ edges.
It is easy to verify that every ordering in~$\Pi$ is a BFS-ordering of~$H$.
As an illustration, let $\sigma$ denote the ordering $\pi((i,j),t,(p,q))$ defined above.
Clearly, $\sbo{\sigma}{v_t}$ is a partial BFS-ordering of~$H$.
The next vertex visited in~$\sigma$ is~$z$.
Since~$S$ is a vertex cover of~$G$, it follows that $\{v_i, v_j\} \cap S \neq \emptyset$.
Consequently, at least one of~$v_i$ and~$v_j$ is adjacent to~$z$ in~$H$.
All vertices between~$v_i$ and~$z$ in~$\sigma$ are neighbors of~$v_i$ in~$H$.
If~$z$ is a neighbor of~$v_i$ in~$H$, then $\sbo{\sigma}{z}$ remains a partial BFS-ordering of~$H$.
Otherwise,~$z$ is a neighbor of~$v_j$. Observe that no vertex succeeding~$z$ in~$\sigma$ is adjacent to~$v_i$, and that every vertex between~$v_j$ and~$z$ in~$\sigma$ is adjacent to~$v_j$ in~$H$.
 It follows that $\sbo{\sigma}{z}$ is again a partial BFS-ordering of~$H$.
The remaining vertices are visited in
$(\overrightarrow{F_j}, \overrightarrow{F} \ominus (F_i \cup F_j \cup F_t), \overrightarrow{F_t})$.
Since every vertex of $F_x$ is adjacent only to $v_x$ for each $x \in [n]$,~$\sigma$ is a BFS-ordering of~$H$.
It follows that $g(I)$ is a {\yesins} of {\prob{Edge-Bounded-BFS-SP}}.

$(\Leftarrow)$ Assume that $g(I)$ is a {\yesins}. Let~$H$ denote now a BFS-support of~$\Pi$ with the minimum number of edges. Our proof breaks down into the following claims.

\begin{claim}
\label{claim-p-q-first-last}
Let $i \in [n]$ and let $\{p, q\} \subseteq [n^2]$ with $p < q$. Then~$\Pi$
\begin{enumerate}[(1)]
    \item contains at least one ordering in which~$f_i^p$ is the first vertex, and~$f_i^q$ succeeds all vertices in~$F_i$, and
    \item contains at least one ordering in which~$f_i^q$ is the first vertex, and~$f_i^p$ succeeds all vertices in~$F_i$.
\end{enumerate}
\end{claim}

\begin{proof}
Let~$v_j$ be an arbitrary neighbor of~$v_i$ in~$G$, where $j\in[n]\setminus\{i\}$, and let~$t$ be an arbitrary element of~$[n]\setminus\{i,j\}$. Since $G$ is $3$-regular, $v_j$ and~$t$ exist.
By construction, the ordering $\pi((i, j), t, (p, q)) \in \Pi$ satisfies the required condition in Statement~(1), and the ordering $\pi((i, j), t, (q, p)) \in \Pi$ satisfies the required condition in Statement~(2) follows analogously.
\end{proof}

\begin{claim}
\label{claim-fi-clique}
Let $i \in [n]$. Then,~$F_i$ induces either a clique or an independent set in~$H$.
\end{claim}

\begin{proof}
If~$F_i$ induces an independent set in~$H$, the claim holds.
Otherwise, let $f_i^p,f_i^q\in F_i$ be adjacent in~$H$.
By Claim~\ref{claim-p-q-first-last}, there exists an ordering in~$\Pi$ where~$f_i^p$ is the first vertex,~$f_i^q$ is the last among vertices of~$F_i$, and every other vertex of~$F_i$ lies between them.
 Then, by Lemma~\ref{lem-bfs-first},~$f_i^p$ is adjacent to all other vertices in~$F_i$ in~$H$.
Now let $f_i^r\in F_i$ be arbitrary.
By Claim~\ref{claim-p-q-first-last} and Lemma~\ref{lem-bfs-first} once more,~$f_i^r$ is adjacent to all vertices in $F_i\setminus \{f_i^r\}$.
Since~$f_i^r$ was arbitrary,~$F_i$ induces a clique in~$H$.
%
%
\end{proof}

\begin{claim}
\label{claim-ub-edges-H}
$H$ contains at most ${n\cdot (n-2)}/{2} + n^3 + n$ edges.
\end{claim}

\begin{proof}
Since $\vset{G}$ is a vertex cover of~$G$, the correctness proof for the $(\Rightarrow)$ direction implies that the graph obtained from~$G'$ by adding edges between~$z$ and~$\vset{G}$ is a BFS-support of~$\Pi$.
 Consequently,
 $\abs{\eset{H}} \leq \abs{\eset{G'}} + n = {n\cdot (n-2)}/{2} + n^3 + n$.
\end{proof}

\begin{claim}
\label{claim-pendant}
Let $i \in [n]$ and let $p \in [n^2]$. Then,~$f_i^p$ is a pendant of~$v_i$ in~$H$.
\end{claim}

\begin{proof}
Suppose for contradiction that~$f_i^p$ has a neighbor~$x$ in~$H$ with $x \neq v_i$.
We distinguish between two exhaustive cases. 

\begin{description}
    \item[Case 1:] $x \in F_i$. \hfill \\
    By Claim~\ref{claim-fi-clique}, the set~$F_i$ induces a clique in~$H$, contributing ${n^2\cdot (n^2 - 1)}/{2}$ edges within~$F_i$. However, since $n \ge 4$, this contradicts Claim~\ref{claim-ub-edges-H}.

    \item[Case 2:] $x \notin F_i$. \hfill \\
    By construction, there exists an ordering in~$\Pi$ in which~$f_i^p$ is the first vertex, and  all vertices of~$F_i$ precede~$x$.
    By Lemma~\ref{lem-bfs-first},~$f_i^p$ is adjacent to every vertex in $F_i \setminus \{f_i^p\}$ in~$H$.
    This reduces to Case~1 and yields a contradiction.
\end{description}

Therefore, $f_i^p$ has no neighbor in~$H$ other than~$v_i$, and is thus a pendant of~$v_i$.
\end{proof}


\begin{claim}
\label{claim-EG-subseteq-EH}
$\eset{G} \subseteq \eset{H}$.
\end{claim}

\begin{proof}
Let $\{v_i, v_j\}$ be an edge in~$G$ with $\{i, j\} \subseteq [n]$ and $i < j$.
Choose any arbitrary $t\in [n] \setminus \{i, j\}$, any pair $\{p, q\}\subseteq [n^2]$ with $p<q$, and consider the ordering $\pi((i, j), t, (p, q))\in \Pi$. In this ordering, the first $n^2 + 1$ vertices are exactly those in $F_i \cup \{v_i\}$, and the next vertex is~$v_j$. By Claim~\ref{claim-pendant}, each vertex in~$F_i$ is adjacent only to~$v_i$ in~$H$. Since this ordering is a BFS-ordering of~$H$, Observation~\ref{ob-connected} implies that~$v_j$ is adjacent to~$v_i$ in~$H$.
As the choice of~$\edge{v_i}{v_j}$ was arbitrary, it follows that
~$\eset{G} \subseteq \eset{H}$.
\end{proof}
	
\begin{claim}
\label{claim-b}
If a vertex~$v_i\in V$ is adjacent to two distinct vertices $v_j, v_t \in V \setminus \{v_i\}$ in~$H$, where $\{i, j, t\} \subseteq [n]$, then~$v_i$ is adjacent to all vertices in $V \setminus \{v_i\}$ in~$H$.
\end{claim}

\begin{proof}
Let~$f_i^p$, $p\in [n^2]$, be an arbitrary vertex from~$F_i$, and let~$v'$ be an arbitrary vertex from $V \setminus \{v_i, v_j, v_t\}$. These vertices exist since $n\geq 4$.
By construction, there exists an ordering $\sigma \in \Pi$ such that:
\begin{enumerate}
    \item[(i)] $f_i^p$ and~$v_i$ are the first and second vertices, respectively; and
    \item[(ii)] 
    $v_i <_{\sigma} v' <_{\sigma} v_t$.
\end{enumerate}
Assume for contradiction that~$v_i$ is not adjacent to~$v'$ in~$H$.
By the four-point ordering characterization of BFS (Lemma~\ref{lem-four-vertex}, Statement~\ref{four-point-bfs}) and Condition~(ii), a vertex preceding~$v_i$ in~$\sigma$ must be adjacent to~$v'$ in~$H$.
By Condition~(i), the only vertex preceding~$v_i$ is~$f_i^p$.
By Claim~\ref{claim-pendant},~$f_i^p$ is a pendant of~$v_i$ in~$H$, a contradiction.
Since $v'$ was chosen arbitrarily, $v_i$ is adjacent to all vertices in $V\setminus\{v_i\}$ in~$H$.
\end{proof}

Since~$G$ is $3$-regular, Claims~\ref{claim-EG-subseteq-EH} and~\ref{claim-b} together imply that~$V$ induces a clique in~$H$.
By Claim~\ref{claim-pendant}, for every $i \in [n]$, all vertices of~$F_i$ are pendants of~$v_i$ in~$H$.
Hence,~$H - z$ is an $(n, n^2)$-drone.
Equivalently,~$H - z$ and~$G' - z$ are identical.
Since~$z$ is isolated in~$G'$, it follows that $\eset{G'} = \eset{H - z}$.

We now show that if~$g(I)$ is a {\yesins} of \prob{Edge-Bounded-BFS-SP}, then~$G$ admits a vertex cover of size~$\kappa$.
To this end, suppose that $\abs{\eset{H}} \leq k = \kappa + \abs{\eset{G'}}$. Define $\kappa' = \abs{\eset{H}} - \abs{\eset{G'}}$, so that $\kappa' \leq \kappa$.
Let $S = \neighbor{z}{H}$. 
By Claim~\ref{claim-pendant}, we have $S \subseteq V$.
Since $H-z = G'-z$, the only edges of~$H$ not contained in~$G'$ are those incident to~$z$.
Hence, $\abs{S} = \abs{\eset{H}} - \abs{\eset{G'}} = \kappa' \leq \kappa$.

\begin{claim}
\label{claim-s-vertex-cover}
$S$ is a vertex cover of~$G$.
\end{claim}

\begin{proof}
Let $\edge{v_i}{v_j}$ be an arbitrary edge in~$G$ with $i,j\in [n]$ and $i < j$.
Choose any arbitrary pair $\{p, q\} \subseteq [n^2]$ with $p < q$, and any $t \in [n] \setminus \{i, j\}$, both of which exist since $n \geq 4$.
Consider the ordering
\[
\sigma =\pi((i, j), t, (p, q))= \left(f_i^p, v_i, \overrightarrow{F_i} \ominus \{p, q\}, f_i^q, v_j, \overrightarrow{V} \ominus \{v_i, v_j, v_t\}, v_t, z, \overrightarrow{F_j}, \overrightarrow{F} \ominus (F_i \cup F_j \cup F_t), \overrightarrow{F_t} \right)
\]
constructed in~$\Pi$.
If~$z$ is adjacent to~$v_j$ in~$H$, then $v_j \in S$.
Otherwise, let~$f_j$ be any vertex in~$F_j$.
By Claim~\ref{claim-pendant},~$f_j$ is adjacent to~$v_j$ in~$H$. Since $v_j <_{\sigma} z <_{\sigma} f_j$, the four-point ordering  characterization of BFS (Lemma~\ref{lem-four-vertex}, Statement~\ref{four-point-bfs}) implies that some vertex preceding~$v_j$ in~$\sigma$, that is, some vertex in
$F_i \cup \{v_i\}$, is adjacent to~$z$ in~$H$.
By Claim~\ref{claim-pendant}, no vertex from~$F_i$ is adjacent to~$z$ in~$H$, and therefore~$v_i$ must be adjacent to~$z$, implying $v_i \in S$.
Since the choice of the edge~$\edge{v_i}{v_j}$ was arbitrary,~$S$ covers all edges of~$G$.
\end{proof}

By Claim~\ref{claim-s-vertex-cover} and the bound~$\abs{S} \leq \kappa$, we conclude that~$I$ is a {\yesins} of \prob{Vertex Cover}. 
\end{proof}

%

The above proof also applies to LexBFS.

\begin{theorem}
\label{thm-lexbfs-npc}
\prob{Edge-Bounded-LexBFS-SP} is {\emph{\nph}}.
\end{theorem}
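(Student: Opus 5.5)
We reuse the reduction from the proof of Theorem~\ref{thm-bfs-npc} without change. Given a \prob{Vertex Cover} instance $I=(G,\kappa)$ with $G$ $3$-regular, we output $g(I)=(\vset{G'},\Pi,k)$, where $\Pi$ is the same family of orderings $\pi((i,j),t,(p,q))$ and $k=\kappa+\abs{\eset{G'}}$. We then check that both directions go through for LexBFS.

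\textbf{The $(\Leftarrow)$ direction.} Every LexBFS-ordering is a BFS-ordering (Lemma~\ref{lem-graph-traversal-relation}). So a LexBFS-support $H$ of $\Pi$ with at most $k$ edges is also a BFS-support with at most $k$ edges. The $(\Leftarrow)$ argument of Theorem~\ref{thm-bfs-npc} then applies verbatim. It only needs $H$ to be a BFS-support, with Claim~\ref{claim-ub-edges-H} supplying the upper bound on edges via the $(\Rightarrow)$ construction with $S=V$. That construction must therefore also be a LexBFS-support, which is handled by the $(\Rightarrow)$ check below. The conclusion is that $\neighbor{z}{H}$ is a vertex cover of size at most $\kappa$.

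\textbf{The $(\Rightarrow)$ direction.} This is the main step. Let $H$ be $G'$ plus the edges from $z$ to $S$, where $S$ is any vertex cover (including $S=V$). We verify Property~LB for each $\sigma=\pi((i,j),t,(p,q))$, using Lemma~\ref{lem-four-vertex}(4). Take $a<_\sigma b<_\sigma c$ with $\edge{a}{c}\in\eset{H}$ and $\edge{a}{b}\notin\eset{H}$. We need some $d<_\sigma a$ with $\edge{d}{b}\in\eset{H}$ and $\edge{d}{c}\notin\eset{H}$. We split by the identity of $a$.
\begin{itemize}
\item $a=f_i^p$ cannot occur, since its only neighbor $v_i$ comes right after it.
\item $a=v_i$: every vertex up to $v_t$ is a neighbor of $v_i$. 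If $v_i\in S$, so is $z$. Hence $b$ is either $z$ with $v_i\notin S$, which forces $c\in F_i$ impossible after $z$, or $b$ lies after $z$. In all cases $c$ is among the neighbors of $v_i$, all of which precede $z$. So no violating triple exists.
\item $a=v_j$ or some other $v_x$: the only available $d<_\sigma a$ are $f_i^p$, $v_i$, the vertices of $F_i$, and other $v$'s. The critical case is $a=v_j$, $b=z$, $c\in F_j$ when $v_j\notin S$. Then $v_i\in S$, and $d=v_i$ works because $v_i$ is adjacent to $z$ but not to $c\in F_j$.
\item $b\in F_y$ and $a=v_x$ with $x\neq y$: take $d=v_y$, which precedes $a$ in the prescribed order of the $F$-blocks. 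We have $\edge{v_y}{f}\in\eset{H}$ and $v_y$ is not adjacent to $c\in F_x$.
\end{itemize}
The block order $\overrightarrow{F_j}$, then the middle blocks, then $\overrightarrow{F_t}$ last is exactly what makes the $v$-order and the $F$-block order consistent. In particular, $v_t$ is last among the $V$'s and $F_t$ is last among the blocks, as LexBFS requires.

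The expected obstacle is this careful check of Property~LB: the extra condition $\edge{d}{c}\notin\eset{H}$ must be satisfied in the cases around $z$ and the pendant blocks. An alternative is to simulate the LexBFS labels directly and confirm that each next vertex in $\sigma$ has a lexicographically maximal label. We will do this for the steps at $z$ and at the start of each $F$-block, where ties could matter.
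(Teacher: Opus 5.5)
Your proposal is correct and has the same structure as the paper's proof: you reuse the BFS reduction and obtain $(\Leftarrow)$ from the fact that every LexBFS-ordering is a BFS-ordering. The one difference is that you verify $(\Rightarrow)$ through the four-point Property~LB, whereas the paper walks through the LexBFS label comparisons step by step; your case analysis is sound, since the cases you omit ($a\in F$ and $a=z$) are vacuous because such $a$ has no later neighbor, and $a=v_x$, $b=z$ with $x\notin\{i,j\}$ is handled by taking $d\in\{v_i,v_j\}\cap S$.
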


\begin{proof}
We adopt the same reduction presented in Theorem~\ref{thm-bfs-npc}.

For the correctness of the $(\Rightarrow)$ direction, we let $S \subseteq \vset{G}$ denote a vertex cover of $G$ of size~$\kappa$, and prove that the graph~$H$ obtained from $G'$ by adding edges between~$z$ and all vertices in~$S$ is a LexBFS-support of~$\Pi$.

\begin{claim}
Every ordering constructed in~$\Pi$ is a LexBFS-ordering of~$H$.
\end{claim}

\begin{proof}
By symmetry, it suffices to show that the ordering $\pi((i, j), t, (p, q))$ defined in the proof of Theorem~\ref{thm-bfs-npc}, is a LexBFS ordering of~$H$. Let~$\sigma$ denote this ordering.

The first vertex in~$\sigma$ is~$f_i^p$, followed by its unique neighbor~$v_i$ in $H$. Thus, $\sigma_{\leq 2}$ forms a partial LexBFS ordering of~$H$. Next,~$\sigma$ visits all remaining vertices in $F_i$, which form an independent set in~$H$; hence $\sigma_{\leq n^2 + 1}$ remains a partial LexBFS-ordering of~$H$.

Subsequently, the ordering proceeds through the remaining $n - 1$ vertices in $V$, which, together with $v_i$, form a clique in $H$. Note that none of the vertices in $V \setminus \{v_i\}$ is adjacent to any vertex in $F_i$. Consequently, $\sigma_{\leq n^2 + n}$ is a partial LexBFS-ordering of~$H$.

The next vertex visited by~$\sigma$ is~$z$.
Since~$S$ is a vertex cover of~$G$, we have $\{v_i,v_j\}\cap S\neq\emptyset$,
and thus~$z$ is adjacent to at least one of $v_i$ and~$v_j$ in~$H$.
Accordingly,~$z$ has distance at most three from~$f_i^p$.
All vertices visited after~$z$ are in $\overrightarrow{F}\ominus F_i$ and have distance three from~$f_i^p$.
Moreover, no vertex in this set is adjacent in~$H$ to any vertex preceding~$v_j$ in~$\sigma$,
including~$v_i$.
It follows that $\sigma_{\le n^2+n+1}$ remains a partial LexBFS-ordering of~$H$.


The remaining suffix of~$\sigma$ is
$(\overrightarrow{F_j},\, \overrightarrow{F}\ominus(F_i\cup F_j\cup F_t),\, \overrightarrow{F_t})$, for which
the following properties hold:
\begin{itemize}
    \item For every $x\in[n]\setminus\{i\}$, the vertices of~$F_x$ appear consecutively in~$\sigma$.
    \item For any $x,y\in[n]\setminus\{i\}$, if $v_x<_\sigma v_y$, then $F_x<_\sigma F_y$.
\end{itemize}
Since for every $x\in[n]$ each vertex of~$F_x$ is a pendant of~$v_x$ in~$H$,
these properties imply that~$\sigma$ is a LexBFS-ordering of~$H$.
\end{proof}

For the $(\Leftarrow)$ direction, note that every LexBFS-ordering is also a BFS-ordering of the same graph
(Lemma~\ref{lem-graph-traversal-relation}).
Therefore, the $(\Leftarrow)$ direction follows directly from the corresponding argument
in the proof of Theorem~\ref{thm-bfs-npc}.
\end{proof}

The reduction for \prob{Edge-Bounded-BFS-SP} can be adapted to show the hardness of the variant
that bounds the maximum degree.

\begin{theorem}
\label{thm-bfs-npc-delta}
\prob{Deg-Bounded-BFS-SP} is {\emph{\nph}}.
\end{theorem}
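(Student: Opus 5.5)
The plan is to reuse the reduction from Theorem~\ref{thm-bfs-npc} and change only two things: the edge budget is replaced by a degree budget, and $z$ is made to carry a block of forced neighbours. Then $z$ can afford at most $\kappa$ further neighbours. Concretely, let $k=n^2+n$. In $G'$ each $v_i$ has degree $(n-1)+n^2=k-1$, so it has room for exactly one more edge, namely the one to $z$. Add a set $D=\{d_1,\dots,d_{k-\kappa}\}$ of new vertices, intended as pendants of $z$. To $\Pi$ add, for each $d\in D$, an ordering that starts $(d,z,\dots)$, so that Observation~\ref{ob-connected} forces $\edge{d}{z}$ into every support. Also add, for each $i$ and $p$, an ordering starting $(f_i^p,v_i,\dots)$; these already exist. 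In every ordering the vertices of $D$ will be appended at the very end, with any new orderings arranged so that they are BFS-orderings of the intended graph.

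For the $(\Rightarrow)$ direction, take $H=G'$ plus the edges $\edge{z}{s}$ for $s\in S$ and the pendants $D$ of $z$. Every $v_i$ then has degree at most $k$, and $z$ has degree $k-\kappa+\abs{S}=k$. The BFS-verification of the old orderings goes through as in Theorem~\ref{thm-bfs-npc}. I must check that placing $D$ last is consistent with BFS layering: $D$ is adjacent only to $z$, and $z$ is the last non-pendant vertex visited, so this is plausible. The new orderings starting $(d,z)$ need an explicit BFS order of $H$ from $d$: first $d$, then $z$, then $D\setminus\{d\}$ together with $S$, then $V\setminus S$, then all of $F$. I still have to fix the exact arrangement of $D$ within this order so that it is valid.

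For the $(\Leftarrow)$ direction, let $H$ be any BFS-support with maximum degree at most $k$. Minimality is no longer available, so I will avoid needing the full drone structure and argue only what the vertex-cover conclusion requires. Each $\edge{d}{z}$ and each $\edge{f_i^p}{v_i}$ lies in $H$ by Observation~\ref{ob-connected}. Hence $z$ has at most $\kappa$ neighbours outside $D$. Define $S\subseteq V$ by putting $v_i$ into $S$ whenever $z$ is adjacent to $v_i$ or to some vertex of $F_i$. Then $\abs{S}\le\kappa$.

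To see that $S$ is a cover, take an edge $\edge{v_i}{v_j}$ of $G$ and the ordering $\sigma=\pi((i,j),t,(p,q))$. Let $f_j\in F_j$; it is adjacent to $v_j$ and satisfies $v_j<_\sigma z<_\sigma f_j$. If $z$ is not adjacent to $v_j$, then Property~B gives a neighbour of $z$ preceding $v_j$, that is, a vertex of $F_i\cup\{v_i\}$, and therefore $v_i\in S$.

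The main obstacle is making sure that appending $D$, and the extra orderings, keeps every ordering a valid BFS-ordering in the forward direction, and that the degree bound is not violated elsewhere. The vertices $f_i^p$ and $d$ have degree far below $k$, so they are harmless. The vertex $v_i$ has exactly one unit of slack, which is consumed by $z$. If the placement of $D$ causes trouble, I would first try putting $D$ immediately after $z$ in every ordering and re-verify the layers there.
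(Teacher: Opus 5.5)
Your reverse direction is sound. Letting a neighbour of $z$ in $F_i$ charge $v_i$ even spares you the paper's claim that $N_H(f_i^p)\subseteq F_i\cup\{v_i\}$. The gap is in the forward direction, in the orderings beginning $(d,z,\dots)$ that you use to force $\edge{d}{z}$. Consider your intended support $H_S$ after $d$ and $z$ have been visited. Every vertex of $N_{H_S}(z)\setminus\{d\}=(D\setminus\{d\})\cup S$ has $z$ as its earliest visited neighbour, and every other unvisited vertex (those of $V\setminus S$ and $F$) has no visited neighbour at all. So any BFS-ordering of $H_S$ beginning $(d,z)$ must have exactly $(D\setminus\{d\})\cup S$ in positions $3,\dots,\abs{D}+\abs{S}+1$. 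Such an ordering therefore encodes one specific cover $S$. The reduction cannot write it down without already knowing a vertex cover, and whatever fixed ordering it writes, $H_{S'}$ is not a support of it for any other cover $S'$. So the problem is not how to arrange $D$ inside that layer: no ordering that begins with $d$ followed by $z$ can work. Starting with $(z,d,\dots)$ fails for the same reason.

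The missing idea is to force the pendants of $z$ indirectly, which is what the paper does. It appends the block $Z$ after $V\cup F\cup\{z\}$ to orderings that begin $(f_i^p,v_i,\dots)$, and varies the order inside $Z$ so that each $z'\in Z$ is, in some ordering, preceded by exactly $V\cup F\cup\{z\}$. Observation~\ref{ob-connected} then gives $z'$ a neighbour among these, and the degree bound excludes $V\cup F$. A neighbour in $F_i$ would, by Lemma~\ref{lem-bfs-first} applied to an ordering starting with that vertex, make it adjacent to everything preceding $z'$. A neighbour $v_i$ would, by Property~B with only $f_i^p$ preceding $v_i$, make every vertex of $F\setminus F_i$ adjacent to $v_i$ or to $f_i^p$. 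This reintroduces degree arguments on vertices of $V\cup F$ that your plan tried to avoid. Once $Z\subseteq N_H(z)$ is obtained this way, your charging definition of $S$ and your Property~B cover argument go through unchanged. Your bound $k=n^2+n$ still leaves room for these contradictions, since $2(n^2+n)<n^2(n-1)$ for $n\ge 4$.
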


\begin{proof}
We adapt the reduction in the proof of Theorem~\ref{thm-bfs-npc}.
Starting from the graph~$G'$ constructed there, we add a set~$Z$ of $n\cdot (n+1)$ pendants of~$z$.
The resulting graph~$G'$ has $n + n^2 + n \cdot (n+1) = 2n + 2n^2$
vertices and
$
{n\cdot (n-1)}/{2} + n^3 + n \cdot (n+1)
$
edges.
Moreover, the subgraph $G' - (Z \cup \{z\})$ forms an $(n, n^2)$-drone.
We now define the family~$\Pi$ of orderings.
Let $\overrightarrow{Z} = (z_1, z_2, \ldots, z_{n \cdot (n+1)})$
be an arbitrary ordering of~$Z$.

For every edge $\edge{v_i}{v_j}\in E(G)$ with $i<j$,
every pair $\{p,q\}\subseteq[n^2]$ with $p<q$,
every $t\in[n]\setminus\{i,j\}$,
and every pair $\{\ell,r\}\subseteq[n\cdot (n+1)]$ with $\ell<r$,
we add the following orderings to~$\Pi$.

Let $\alpha\in\{(i,j),(j,i)\}$ and $\beta\in\{(p,q),(q,p)\}$.
For each choice of~$\alpha$ and~$\beta$, we include
\begin{align*}
\pi'(\alpha,t,\beta, (\ell, r))=\bigl(\pi(\alpha,t,\beta),\, z_\ell,\, \overrightarrow{Z}\setminus\{z_\ell,z_r\},\, z_r\bigr),\\
\pi'(\alpha,t,\beta, (r, \ell))=\bigl(\pi(\alpha,t,\beta),\, z_r,\, \overrightarrow{Z}\setminus\{z_\ell,z_r\},\, z_{\ell}\bigr).
\end{align*}
Here, $\pi(\alpha,t,\beta)$ is defined as in the proof of
Theorem~\ref{thm-bfs-npc}.

Let $k=\kappa+n\cdot (n+1)$ and define $g(I)=(\vset{G'},\Pi,k)$
as an instance of \prob{Deg-Bounded-BFS-SP}.
This instance can be constructed in polynomial time.
We now prove the correctness of the reduction.

$(\Rightarrow)$ Assume that~$G$ admits a vertex cover~$S$ of size~$\kappa$.
Let~$H$ be the graph obtained from~$G'$ by adding the edges between~$z$ and the vertices in~$S$.
In~$H$, the vertex~$z$ has the maximum degree $\kappa + n\cdot (n+1)=k$.
By an argument analogous to that in the proof of Theorem~\ref{thm-bfs-npc},~$H$ is a BFS-support of~$\Pi$.

$(\Leftarrow)$ Assume that~$g(I)$ is a {\yesins} of \prob{Deg-Bounded-BFS-SP}; that is,~$\Pi$ admits a BFS-support~$H$ with maximum degree at most~$k$.
To show that~$G$ has a vertex cover of size~$\kappa$, we establish the following structural properties of~$H$.

\begin{claim}
\label{claim-f-v-connected-degree-bfs}
Let $i \in [n]$ and $p\in [n^2]$. Then,~$f_i^p$ is adjacent to~$v_i$ in~$H$.
\end{claim}

\begin{proof}
By construction, there exists an ordering in~$\Pi$ in which~$f_i^p$ and~$v_i$ are the first and second vertices, respectively.
Since~$H$ is a BFS-support of~$\Pi$,~$f_i^p$ and~$v_i$ are adjacent in~$H$ (Observation~\ref{ob-connected}).
\end{proof}

Let $F=\bigcup_{b\in [n]}F_b$.

\begin{claim}
\label{claim-lexbfs-npc-delta-4}
Let $i \in [n]$ and $p \in [n^2]$. Then, $\neighbor{f_i^p}{H} \subseteq F_i \cup \{v_i\}$.
\end{claim}

\begin{proof}
Suppose, for contradiction, that $f_i^p \in F_i$ has a neighbor $x \notin F_i \cup \{v_i\}$ in~$H$.
We distinguish three exhaustive cases according to the location of~$x$.

\begin{description}
    \item[Case 1:] $x \in F \setminus F_i$.\hfill

    Let $t \in [n] \setminus \{i\}$ be such that $x \in F_t$.
    Let~$v_j$ be any neighbor of~$v_i$ in~$G$, where $j \in [n] \setminus \{i\}$, which exists since~$G$ is $3$-regular.
    Choose an arbitrary $q \in [n^2] \setminus \{p\}$, and any pair $\{\ell, r\} \subseteq [n\cdot (n+1)]$ with $\ell < r$.
    By construction, there exists an ordering~$\sigma\in \Pi$ corresponding to the edge $\{v_i, v_j\}$, the pair $\{p, q\}$, the integer~$t$, and the pair $\{\ell, r\}$ such that:
    \begin{itemize}
        \item $f_i^p$ appears first, and
        \item all vertices of~$F_t$, including~$x$, succeed all $n + n^2\cdot (n-1)$ vertices in $V \cup (F \setminus F_t)$.
    \end{itemize}
    Since~$\sigma$ is a BFS-ordering of~$H$, Lemma~\ref{lem-bfs-first} implies that~$f_i^p$ is adjacent to all vertices preceding~$x$ in~$\sigma$.
    Therefore, the degree of~$f_i^p$ in~$H$ is at least
    \[
    \abs{V} + \abs{F \setminus F_t} + 1 = n + n^2\cdot (n-1) + 1 > k = \kappa + n\cdot (n+1),
    \]
    provided $n \geq \kappa \geq 4$.
    This contradicts the assumption that~$H$ has maximum degree at most~$k$.

    \item[Case 2:] $x \in V \setminus \{v_i\}$.\hfill

    Let $x = v_j$ for some $j \in [n] \setminus \{i\}$.
    As in Case~1, there exists an ordering $\sigma \in \Pi$ such that:
    \begin{itemize}
        \item some $f_j^{p'} \in F_j$, where $p'\in [n^2]$, appears first,
        \item $v_j$ appears second, and
        \item $f_i^p$ succeeds all~$n^2\cdot (n-1)$ vertices in $F \setminus F_i$.
    \end{itemize}
    By the argument in Case~1,~$f_j^{p'}$ is not adjacent to any vertex in $F \setminus F_j$, in particular not to~$f_i^p$.
    Since~$v_j$ is adjacent to~$f_i^p$, and~$\sigma$ is a BFS-ordering of~$H$, the four-point ordering characterization of BFS (Lemma~\ref{lem-four-vertex}, Statement~\ref{four-point-bfs}) implies that~$v_j$ is adjacent to all vertices in $F \setminus (F_i \cup F_j)$ between~$v_j$ and~$f_i^p$ in~$\sigma$.
    Therefore,~$v_j$ has at least $1 + n^2\cdot (n-2)$ neighbors in~$H$.
    Since $n \geq \kappa \geq 4$, this again contradicts that~$H$ has maximum degree at most~$k$.

    \item[Case 3:] $x \in Z \cup \{z\}$.\hfill

    As in Case~1, there exists an ordering in~$\Pi$ where~$f_i^p$ appears first, and all vertices in~$V$ precede all vertices in $Z \cup \{z\}$, including~$x$.
    Since~$f_i^p$ is adjacent to~$x$ in~$H$, and the ordering is a BFS-ordering of~$H$, the four-point ordering characterization of BFS (Lemma~\ref{lem-bfs-first}) implies that~$f_i^p$ is adjacent to all vertices in~$V$.
    This reduces to Case~2, which has already been shown to yield a contradiction.
\end{description}

We conclude that $\neighbor{f_i^p}{H} \subseteq F_i \cup \{v_i\}$.
\end{proof}

\begin{claim}
\label{claim-lexbfs-npc-delta-5}
    No vertex in~$Z$ is adjacent to any vertex from~$V$ in~$H$.
\end{claim}

 \begin{proof}
Assume for contradiction that there exists $z' \in Z$ adjacent to some $v_i \in V$, where $i \in [n]$.
Let~$f$ be an arbitrary vertex from~$F_i$.
There exists an ordering~$\sigma \in \Pi$ in which~$f$,~$v_i$, and~$z'$ appear first, second, and last, respectively.
By Claim~\ref{claim-lexbfs-npc-delta-4},~$f$ is not adjacent to any vertex from $F \setminus F_i$ in~$H$.
Since~$\sigma$ is a BFS-ordering of~$H$, the four-point ordering characterization of BFS (Lemma~\ref{lem-four-vertex}, Statement~2) implies that~$v_i$ is adjacent to all vertices in $F \setminus F_i$, and hence has degree at least $1 + n^2 \cdot (n - 1)$ in~$H$.
However, since $n\geq \kappa\geq 4$, this contradicts the assumption that~$H$ has maximum degree at most $\kappa + n \cdot (n+1)$.
\end{proof}

\begin{claim}
\label{claim-lexbfs-npc-delta-6}
Every vertex in~$Z$ is adjacent to~$z$ in~$H$.
\end{claim}

\begin{proof}
Let $z' \in Z$.
There exists an ordering $\sigma \in \Pi$ in which the set of predecessors of~$z'$ is
exactly $F \cup V \cup \{z\}$.
Since~$\sigma$ is a BFS-ordering of~$H$,~$z'$ is adjacent to at least one of its predecessors (Observation~\ref{ob-connected}), and by
Claims~\ref{claim-lexbfs-npc-delta-4} and~\ref{claim-lexbfs-npc-delta-5}, this predecessor must be~$z$.
\end{proof}

Let $S = \neighbor{z}{H} \cap V$ be the set of neighbors of~$z$ in~$V$ within the graph~$H$.

\begin{claim}
$S$ is a vertex cover of~$G$.
\end{claim}

\begin{proof}
Let $\edge{v_i}{v_j}$ be an arbitrary edge of~$G$, where $i,j\in[n]$ and $i<j$.
Choose arbitrary $\{p,q\}\subseteq[n^2]$ with $p<q$ and any
$t\in[n]\setminus\{i,j\}$.
In the current reduction, there exists an ordering in~$\Pi$ whose prefix is
$\pi((i,j),t,(p,q))$ defined in the proof of Theorem~\ref{thm-bfs-npc}.
The remainder of the argument is identical to the proof of
Claim~\ref{claim-s-vertex-cover} in the proof of
Theorem~\ref{thm-bfs-npc}, which shows that $\{v_i, v_j\}\cap S\neq\emptyset$.
Since the choice of $\edge{v_i}{v_j}$ was arbitrary, $S$ is a vertex cover of~$G$.
\end{proof}

By Claim~\ref{claim-lexbfs-npc-delta-6} and the fact that~$H$ has maximum degree at
most~$k$, the vertex~$z$ has at most $k - \abs{Z} = \kappa$ neighbors in~$V$.
Thus, $\abs{S} \le \kappa$.
Any $\kappa$-subset $S' \subseteq V$ with $S \subseteq S'$ is a
{\yes}-witness for the instance~$I$.
\end{proof}

Using the same reduction, the NP-hardness proof extends to LexBFS.

\begin{theorem}
\label{thm-lexbfs-npc-delta}
\prob{Deg-Bounded-LexBFS-SP} is {\emph{\nph}}.
\end{theorem}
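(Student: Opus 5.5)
We reuse verbatim the reduction from the proof of Theorem~\ref{thm-bfs-npc-delta}: the graph~$G'$ (the $(n,n^2)$-drone, the vertex~$z$ and its $n\cdot(n+1)$ pendants in~$Z$), the family~$\Pi$ of orderings $\pi'(\alpha,t,\beta,(\ell,r))$ and $\pi'(\alpha,t,\beta,(r,\ell))$, and the bound $k=\kappa+n\cdot(n+1)$. As in the proof of Theorem~\ref{thm-lexbfs-npc}, the argument splits into the two directions. The $(\Leftarrow)$ direction is immediate. By Lemma~\ref{lem-graph-traversal-relation}, every LexBFS-support of~$\Pi$ is also a BFS-support of~$\Pi$, and the degree bound is unchanged. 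So the $(\Leftarrow)$ argument of Theorem~\ref{thm-bfs-npc-delta} applies unchanged and yields a vertex cover of size at most~$\kappa$.

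For $(\Rightarrow)$, let~$S$ be a vertex cover of size~$\kappa$ and let~$H$ be~$G'$ plus the edges between~$z$ and~$S$. Its maximum degree is $\deg(z)=\kappa+n(n+1)=k$. Every other vertex has smaller degree: each $v_i$ has degree at most $(n-1)+n^2+1<k$, and each pendant has degree~$1$. By symmetry, it suffices to show that $\sigma=\pi'((i,j),t,(p,q),(\ell,r))$ is a LexBFS-ordering of~$H$.

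The prefix $\pi((i,j),t,(p,q))$ is handled exactly as in the claim inside the proof of Theorem~\ref{thm-lexbfs-npc}. The new pendants in~$Z$ do not affect that part. At each step we must check that the chosen vertex has a lexicographically maximal label. Before~$z$ is visited, every vertex of~$Z$ has the empty label, so no vertex of~$Z$ can beat a vertex of the prefix. After~$z$, each vertex of~$Z$ has the label $(n-\sigma(z))$. So we check the following:

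\begin{itemize}
\item At the moment~$z$ is chosen, its label (containing $n-\sigma(v_i)$ or $n-\sigma(v_j)$) is at least as large as the labels of the remaining $F$-vertices, which are of the form $(n-\sigma(v_x))$ with $x\neq i$. This holds since $z$ is adjacent to $v_i$ or~$v_j$, and $v_j$ is the earliest vertex among those $v_x$ with $x\neq i$.
\item When the suffix $(\overrightarrow{F_j},\overrightarrow{F}\ominus(F_i\cup F_j\cup F_t),\overrightarrow{F_t})$ is visited, each $F_x$-vertex carries $n-\sigma(v_x)$, which exceeds $n-\sigma(z)$ because $v_x<_\sigma z$. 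Hence all of~$F$ correctly precedes~$Z$. Within the suffix, the order of the $F_x$ blocks follows the order of the $v_x$, and the block~$F_i$ has already been visited.
\item Finally, all vertices of~$Z$ have equal labels, so any internal order, in particular $(z_\ell,\ldots,z_r)$, is valid.
\end{itemize}

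The main point needing care is the first check, at the moment~$z$ is chosen. If $z\in N(v_j)\setminus N(v_i)$, the label of~$z$ is $(n-\sigma(v_j))$ and ties with the $F_j$-vertices; this is allowed. If $z\in N(v_i)$, its label starts with $n-\sigma(v_i)$ and is strictly larger than theirs. In both cases the ordering of the prefix remains valid, which completes the proof.
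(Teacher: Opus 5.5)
Your proposal is correct and follows the paper's own proof: it uses the same reduction as Theorem~\ref{thm-bfs-npc-delta}, inherits the $(\Leftarrow)$ direction via Lemma~\ref{lem-graph-traversal-relation}, and handles the $(\Rightarrow)$ direction through the prefix analysis of Theorem~\ref{thm-lexbfs-npc}, with your explicit label check that no vertex of $Z$ ever beats an $F$-vertex spelling out what the paper asserts in one line. One harmless inaccuracy: when $v_i\notin S$, the label of $z$ is not just $(n-\sigma(v_j))$ but also contains entries for the other vertices of $S$ visited before $z$ (there are some since $\kappa\ge 2$), so it strictly beats the $F_j$-labels rather than tying with them, which only strengthens your conclusion.
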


\begin{proof}
We use the same reduction as in the proof of Theorem~\ref{thm-bfs-npc-delta}.
For the $(\Rightarrow)$ direction, it was shown there that if~$G$ has a vertex cover~$S$ of size $\kappa$, then the graph $H$, obtained from $G'$ by adding the $\kappa$ edges between $z$ and the vertices in $S$, has maximum degree $k$ and is a BFS-support of~$\Pi$. It therefore suffices to show that~$H$ is also a LexBFS-support of~$\Pi$.
Consider an ordering $\sigma=\pi'((i, j), t, (p, q), (\ell, r))$. By construction, $\sigma$ has $\pi((i, j), t, (p, q))$ as a prefix. By construction, the subgraph of~$H$ induced by the vertices in this prefix is identical to that used in the proof of Theorem~\ref{thm-lexbfs-npc}, which, as we have shown, admits~$\Pi$ as a set of LexBFS-orderings.
The remaining vertices in~$\sigma$ are those of~$Z$, which form an
independent set and are pendants of~$z$ in~$H$. Hence,~$\sigma$ is a LexBFS-ordering of~$H$.
The correctness of the $(\Leftarrow)$ direction follows analogously to the
proof of Theorem~\ref{thm-bfs-npc-delta}.
\end{proof}

We next establish NP-hardness for DFS.

\begin{theorem}
\label{thm-dfs-npc}
\prob{Edge-Bounded-DFS-SP} is {\emph{\nph}}.
\end{theorem}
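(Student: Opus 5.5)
We reduce from \prob{Vertex Cover} on $3$-regular graphs, following the two-group template of the BFS reductions. As before, we take $G'$ obtained from $G$ by making $V$ a clique, attaching a set $F_i=\{f_i^1,\dots,f_i^{n^2}\}$ of $n^2$ pendants to each $v_i$, and adding an isolated vertex $z$. We set $k=\kappa+\abs{\eset{G'}}$. The intended support for a vertex cover $S$ is $H=G'+\{\edge{z}{s}\setminus s\in S\}$.

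Because DFS backtracks, the orderings must be designed so that $z$ is reached in a DFS exactly when one of the two endpoints of an edge is adjacent to $z$. For each edge $\edge{v_i}{v_j}$ (both orientations $\alpha\in\{(i,j),(j,i)\}$), each $t\in[n]\setminus\{i,j\}$ and each ordered pair of distinct indices $p,q\in[n^2]$, we include the ordering
\[
\bigl(f_i^p, v_i, \overrightarrow{V}\ominus\{v_i,v_j,v_t\}, v_t, \overrightarrow{F_t}, \overrightarrow{F}\ominus(F_i\cup F_j\cup F_t), v_j, \overrightarrow{F_j}, z, \overrightarrow{F_i}\ominus\{f_i^p\}\bigr).
\]
Here the DFS visits the clique vertices and all their pendant sets, with $v_j$ last among them. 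After $F_j$ is exhausted, it backtracks through $v_j$ and then to the earlier vertices. The vertex $z$ comes next, and it must be a neighbor of the most recent vertex that still has unvisited neighbors, namely $v_j$ if $z\in N(v_j)$, and otherwise $v_i$, provided all other $v_x$ are exhausted. The pendants of $v_i$ come last. With pendant structure and the clique on $V$, Property~D of Lemma~\ref{lem-four-vertex} reduces to checking triples involving $z$. The $(\Rightarrow)$ direction is then a direct check that $\{v_i,v_j\}\cap S\ne\emptyset$ makes each ordering a DFS-ordering of $H$. We may add symmetric variants, for example ones whose initial segment lists other $F_i$ members second, if needed for the forcing claims.

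For $(\Leftarrow)$, let $H$ be a DFS-support with the minimum number of edges. As in Claim~\ref{claim-ub-edges-H}, we have $\abs{\eset{H}}\le\abs{\eset{G'}}+n$. We then establish the following claims.
\begin{enumerate}[(a)]
\item Each $f_i^p$ is adjacent to $v_i$, since $f_i^p$ and $v_i$ are the first two vertices of some ordering (Observation~\ref{ob-connected}).
\item Each $f_i^p$ is a pendant. An extra neighbor $x$ of $f_i^p$ plus Property~D applied to the ordering starting at $f_i^p$ should force $f_i^p$, or intermediate vertices, to acquire many neighbors. The counting against $n^3$ pendants is what makes extra edges prohibitively expensive.
\item $V$ is a clique. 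After $f_i^p,v_i$ the next vertex of $V$ must be adjacent to $v_i$, because $f_i^p$ is a pendant. Varying which vertex of $V$ is listed third, which we ensure by including, for every $v'\in V\setminus\{v_i\}$, an ordering with $v'$ third (by choice of $\overrightarrow{V}$ rotations), gives all clique edges.
\item $H-z=G'-z$. This follows from (a)--(c) together with minimality.
\end{enumerate}

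Finally, let $S=N_H(z)\subseteq V$, which holds by (b), so $\abs{S}\le\kappa$. For an edge $\edge{v_i}{v_j}$, consider the ordering above. When $z$ is visited, the last vertex $d$ adjacent to $z$ before it must satisfy Property~D relative to the earlier neighbors of $z$. If $z\notin N(v_j)$, then applying Property~D to $a=v_i$, $b=$ any vertex of $F_j$ or $v_j$, $c=z$ fails unless $z\in N(v_i)$. More precisely, if $z$'s only neighbor is some $v_x$ with $x\notin\{i,j\}$, then taking $a=v_x$, $b=v_j$, $c=z$, Property~D needs a $d$ between $v_x$ and $v_j$ adjacent to $v_j$, which exists because $V$ is a clique. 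That case is therefore not excluded by this triple. Instead we use $a=v_x$, $b=f_j^1$, $c=z$: the requirement is a $d$ with $v_x<d<f_j^1$ adjacent to $f_j^1$, which is $v_j$ itself. This fails to exclude the case too.

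This is the main obstacle. The orderings must be arranged so that $z$ cannot be attached to an arbitrary $v_x$. I would try placing $z$ immediately after $v_j$, before $F_j$, and then $F_j$: if $z\in N(v_x)$ with $v_x<v_j$ and $z\notin N(v_j)$, Property~D with $a=v_x$, $b=f\in F_j$, $c=z$ is still fine. So instead I would take $a=v_x$, $b=v_j$, $c=z$ with $v_j$ immediately after $v_i$, and everything else, including the other clique vertices, placed after $z$. That is, use the ordering $(f_i^p, v_i, v_j, z, \ldots)$. Then $z$ must be adjacent to $v_j$ or, after backtracking, to $v_i$; DFS forces exactly this, since the only visited vertices are $f_i^p,v_i,v_j$ and $f_i^p$ is a pendant. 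This yields $\{v_i,v_j\}\cap S\ne\emptyset$. The cost is that the clique-forcing claims (b) and (c) must then be derived from the other orderings, and the verification that each ordering is a DFS-ordering of $H$ must be redone. That verification holds because, after $z$ (from $v_j$ or $v_i$), the DFS continues to the remaining vertices of $V$ and the pendant sets in a suitable order. Carefully arranging the remainders so that both directions go through is the step I expect to require the most care.
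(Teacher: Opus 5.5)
There is a genuine gap. Your proposal never settles on a working set of orderings, and the candidate you end with fails the $(\Rightarrow)$ direction.

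Take the ordering $(f_i^p, v_i, v_j, z, \ldots)$. When $z$ is visited, $v_j$ still has unvisited neighbors in the intended support $H$: the other $n-2$ clique vertices and all of $F_j$. So DFS cannot backtrack to $v_i$. Formally, Property~D with $a=v_j$, $b=z$ and $c$ any later neighbor of $v_j$ has no candidate $d$ strictly between $v_j$ and $z$. This ordering therefore forces $z\in N_H(v_j)$. It is not a DFS-ordering of $G'+\{\{z,s\} : s\in S\}$ whenever $v_j\notin S$, and with both orientations it would force $z$ to be adjacent to all of $V$. Backtracking from $v_j$ to $v_i$ happens only once everything else adjacent to $v_j$ has been visited.

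Your first family is broken even before $z$ is reached. After $\overrightarrow{F_t}$, the vertex $v_t$ still has the unvisited neighbor $v_j$, yet the ordering jumps to pendants of earlier clique vertices. Property~D fails for $a=v_t$, $c=v_j$, and $b$ the first vertex of $\overrightarrow{F}\ominus(F_i\cup F_j\cup F_t)$.

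Claim~(b) also has no DFS analogue of Lemma~\ref{lem-bfs-first}. When $a$ is the first vertex of an ordering with connected prefixes, Property~D holds automatically. Hence an extra neighbor of $f_i^p$ forces no further edges, and the counting against $n^3$ pendants, on which (c) and (d) rest, does not go through.

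The missing idea is to leave $v_j$ with no unvisited neighbor other than $z$, and to pin the backtracking with a pendant of $v_i$ placed after $z$. The paper uses a single pendant $f_x$ per $v_x$. For each edge it adds the ordering $(\overrightarrow{VF}\ominus\{v_i,f_i,v_j,f_j\}, v_i, v_j, f_j, z, f_i)$ and its mirror.

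\begin{itemize}
\item Property~D applied to $(a,b,c)=(v_i,z,f_i)$ gives $z\in N_H(v_i)\cup N_H(v_j)\cup N_H(f_j)$. The mirror ordering gives the same conclusion with $f_i$ in place of $f_j$.
\item No pendant-forcing is needed. Orderings starting with $(v_i,v_j)$ and with $(f_i,v_i)$ already give $E(G')\subseteq E(H)$, and every edge at $z$ is extra.
\item If an edge $\{v_i,v_j\}$ is not covered by $N_H(z)\cap V$, then both $f_i$ and $f_j$ lie in $N_H(z)$. So each vertex added to repair the cover can be charged injectively to an edge between $z$ and $F$, which keeps the total at most $\kappa$.
\end{itemize}
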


\begin{proof}
Given an instance $I = (G, \kappa)$ of {\prob{Vertex Cover}}, where $G = (V, E)$ is a $3$-regular graph, we construct an instance of \prob{Edge-Bounded-DFS-SP} as follows.
Let~$G'$ be the graph obtained from~$G$ by:
\begin{itemize}
	\item[(1)] adding edges to make~$V$ a clique;
	
	\item[(2)] for every $i \in [n]$, adding a pendant vertex~$f_i$ adjacent to~$v_i$;
	
	\item[(3)] adding an isolated vertex~$z$.
\end{itemize}
The graph~$G'$ has $2n+1$ vertices and ${n\cdot(n-1)}/{2}+n$ edges, and $G'-z$ is an $(n, 1)$-drone.
Let $\overrightarrow{VF} = (v_1, f_1, v_2, f_2, \ldots, v_n, f_n)$.
We construct a set~$\Pi$ of orderings as follows:
\begin{enumerate}[(i)]
	\item For each pair $\{i, j\} \subseteq [n]$ with $i < j$, add to~$\Pi$ the ordering
	\[\sigma_{(i<j)}=(v_i, v_j, f_j, \overrightarrow {VF}\ominus \{v_i, f_i, v_j, f_j\}, z, f_i).\]
	
	\item For each $i \in [n]$, add to~$\Pi$ the ordering
	\[\sigma_{i}=(f_i, v_i, \overrightarrow {VF}\ominus \{v_i, f_i\}, z).\]
	
	\item For each edge $\edge{v_i}{v_j}$ in~$G$, where $\{i,j\}\subseteq [n]$ and $i<j$, add to~$\Pi$ the following two orderings:
	\begin{align*}
    \sigma_{(i,\, j)}&=(\overrightarrow {VF}\ominus \{v_i, f_i, v_j, f_j\}, v_i, v_j, f_j, z, f_i),\\
	\sigma_{(j,\, i)}&=(\overrightarrow {VF}\ominus \{v_j, f_j, v_i, f_i\}, v_j, v_i, f_i, z, f_j).
    \end{align*}
\end{enumerate}
We have $\abs{\Pi}={n\cdot (n-1)}/{2}+n+2m={n\cdot (n-1)}/{2}+4n$.
Let $k=\kappa + \abs{\eset{G'}}$, and define $g(I)=(V, \Pi,k)$.
We prove below that~$I$ is an instance of {\prob{Vertex Cover}} if and only if~$g(I)$ is an {\yesins} of \prob{Edge-Bounded-DFS-SP}.

$(\Rightarrow)$	Assume that~$G$ admits a vertex cover~$S$ of~$\kappa$ vertices.
	Let~$H$ be the graph obtained from~$G'$ by adding the~$\kappa$ edges between~$z$ and the vertices from~$S$.
 It is not difficult to verify that every ordering constructed in~(i)--(ii) is a DFS-ordering of~$H$.
 Consider an ordering~$\sigma_{(i, j)}$ created in~(iii) corresponding to an edge~$\edge{v_i}{v_j}$ with~$i < j$.
The prefix of~$\sigma_{(i, j)}$ preceding~$z$ forms a partial DFS-ordering of~$H$.
After visiting~$f_j$, DFS backtracks to~$v_j$.
If~$z$ is adjacent to~$v_j$ in~$H$, then~$z$ can be visited next, and~$f_i$, which is a pendant of~$v_i$ in~$H$, can be visited last.
Otherwise, neither of the remaining vertices is adjacent to~$v_j$, so the DFS backtracks further to~$v_i$.
Since~$S$ is a vertex cover, by the construction of~$H$,~$z$ is adjacent to~$v_i$, and hence~$z$ can be visited next, followed by~$f_i$ as the last vertex.
We conclude that~$\sigma_{(i, j)}$ is a DFS-ordering of~$H$.
The argument for~$\sigma_{(j, i)}$ is analogous.
Since~$H$ contains at most~$k$ edges,~$g(I)$ is a {\yesins} of \prob{Edge-Bounded-DFS-SP}.

$(\Leftarrow)$ Assume that~$g(I)$ is a {\yesins} of \prob{Edge-Bounded-DFS-SP}, and let~$H$ be a DFS-support of~$\Pi$ with the minimum number of edges. Then,~$H$ contains at most $k = \kappa + \frac{n\cdot (n-1)}{2} + n$ edges.
The structure of the DFS-orderings in groups~(i) and~(ii), particularly the placement of the first two vertices, leads to the following observation.

\begin{observation}
\label{obs-dfg}
    $\eset{G'} \subseteq \eset{H}$.
\end{observation}

We now conclude the proof with the following claims.

\begin{claim}
\label{claim-dfg-aa}
Let $\edge{v_i}{v_j}$ be an edge in~$G$ with $i, j \in [n]$ and $i<j$. Then, in~$H$:
\begin{enumerate}
    \item[(a)] $z$ is adjacent to at least one of~$v_i$,~$v_j$, or~$f_i$, and
    \item[(b)] $z$ is adjacent to at least one of~$v_i$,~$v_j$, or~$f_j$.
\end{enumerate}
\end{claim}

\begin{proof}
If~$z$ is adjacent to at least one of~$v_i$ or~$v_j$, both statements hold.
Hence, suppose that~$z$ is adjacent to neither~$v_i$ nor~$v_j$.

Consider the last five vertices in the ordering~$\sigma_{(i,j)}$ constructed in~(iii), appearing in the relative order $(v_i, v_j, f_j, z, f_i)$.
By Observation~\ref{obs-dfg},~$v_i$ and~$f_i$ are adjacent in~$H$.
Since~$\sigma_{(i,j)}$ is a DFS-ordering of~$H$, Lemma~\ref{lem-four-vertex} (Statement~\ref{four-point-dfs}) implies that if~$z$ is not adjacent to~$v_i$, then $z$ is adjacent to at least one of~$v_j$ or~$f_j$.
By assumption,~$z$ is not adjacent to~$v_j$, and hence $z$ is adjacent to~$f_j$, establishing Statement~(b).

To prove Statement~(a), consider the final five vertices $(v_j, v_i, f_i, z, f_j)$ in the ordering~$\sigma_{(j,i)}$ introduced in~(iii).
By Observation~\ref{obs-dfg},~$v_j$ and~$f_j$ are adjacent in~$H$.
Analogously to the proof of Statement~(b), if~$z$ is adjacent to neither~$v_i$ nor~$v_j$, then~$z$ is adjacent to~$f_i$ in~$H$.
\end{proof}

%

Using Claim~\ref{claim-dfg-aa}, we construct a vertex cover of~$G$ as follows. Let $S_V=\neighbor{z}{H}\cap V$ be the set of neighbors of~$z$ in~$V$.
Initialize~$S_F=\emptyset$. For every edge~$\edge{v_i}{v_j}$ in~$G$ such that~$z$ is adjacent to neither~$v_i$ nor~$v_j$ in~$H$, arbitrarily add one of~$v_i$ or~$v_j$ to~$S_F$.
By construction,~$S_V \cup S_F$ is a vertex cover of~$G$.
We next show that this set contains at most~$\kappa$ vertices.

\begin{claim}
\label{claim-dfs-npc-last}
    $\abs{S_V \cup S_F} \leq \kappa$.
\end{claim}

\begin{proof}
Let~$h=\abs{S_V}$ denote the number of edges between~$z$ and vertices in~$V$ in the graph~$H$, and let~$h'$ denote the number of edges between~$z$ and vertices in~$\{f_1, f_2, \dots, f_n\}$. Thus, $h + h'$ equals the degree of~$z$ in~$H$.

Consider an edge $\edge{v_i}{v_j} \in \eset{G}$. By Claim~\ref{claim-dfg-aa}, if~$z$ is adjacent to neither~$v_i$ nor~$v_j$ in~$H$, then~$z$ is adjacent to both~$f_i$ and~$f_j$. Consequently, for every vertex added to~$S_F$, there exists a distinct neighbor of~$z$ in~$\{f_1,\dots,f_n\}$, implying that~$\abs{S_F}\le h'$.

Since~$H$ contains at most $k = \kappa + \abs{\eset{G'}}$ edges and~$G'$ is a subgraph of~$H$ (Observation~\ref{obs-dfg}), we have:
\[
h + h' \leq \abs{\eset{H}} - \abs{\eset{G'}} \leq k - \abs{\eset{G'}} = \kappa.
\]
Therefore,
\[
\abs{S_V \cup S_F} = \abs{S_V} + \abs{S_F} \leq h + h' \leq \kappa,
\]
proving the claim.
\end{proof}

By the above claim, we conclude that~$S_V \cup S_F$ is a vertex cover of~$G$ of size at most~$\kappa$,
and hence the instance~$I$ is a {\yesins} of \prob{Vertex Cover}.
This completes the proof of Theorem~\ref{thm-dfs-npc}.
\end{proof}

Building upon the reduction presented in the previous proof, we obtain the following result.

\begin{theorem}
\label{thm-dfs-npc-delta}
\prob{Deg-Bounded-DFS-SP} is {\nph}.
\end{theorem}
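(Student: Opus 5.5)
I would adapt the reduction from the proof of Theorem~\ref{thm-dfs-npc} in the same way that Theorem~\ref{thm-bfs-npc-delta} adapts Theorem~\ref{thm-bfs-npc}. Starting from $G'$ (the $(n,1)$-drone plus $z$), add a set $Z$ of $N$ pendants of $z$, with $N$ chosen larger than every degree the drone forces (say $N = n+1$, so $N > n = \degree{v_i}{G'}$). Set $k=\kappa+N$. Every ordering $\sigma$ of groups~(i)--(iii) is extended by appending $Z$ after it. To force each edge $\edge{z}{z'}$, $z'\in Z$, I would add, for each $z'\in Z$, an ordering $(z', z, \ldots)$ that then enters $V$. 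The continuation after $z$ has to be chosen carefully; the simplest option is to start with $z'$, then $z$, then the remaining $Z$, and after that visit $V\cup F$ via a vertex of the cover. That depends on $S$, however, so the orderings cannot be defined this way. The safer option is to append $z$-pendant orderings only as suffixes, where $z'$ is forced adjacent to $z$ by the DFS backtracking structure. Concretely, in the $\sigma_i$-type orderings I would place $z$ and then $\overrightarrow{Z}$ at the end. Property~D (Lemma~\ref{lem-four-vertex}) together with the fact that every $z'\in Z$ appears last in some ordering after all of $V\cup F\cup\{z\}$ will be used to show that $z'$ must have a neighbor among its predecessors. A degree argument then shows that this neighbor is $z$.

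For the $(\Rightarrow)$ direction, take $H = G' + Z$-pendants $+ \{\edge{z}{s} : s\in S\}$. Then $\degree{z}{H} = N+\kappa = k$, every $v_i$ has degree at most $n+1\le k$, and the earlier DFS verification carries over. After $f_i$ (or the last drone vertex), DFS backtracks to a cover vertex adjacent to $z$, visits $z$, and then visits all pendants in $Z$ in any order. This verification is routine.

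For the $(\Leftarrow)$ direction, let $H$ be a DFS-support with maximum degree at most $k$. Observation~\ref{obs-dfg} still holds because it depends only on the first two vertices of the orderings in groups~(i) and~(ii), so $E(G')\subseteq E(H)$. Claim~\ref{claim-dfg-aa} also carries over verbatim, since its argument uses only the relative order of five vertices. Next I would show that each $z'\in Z$ is adjacent to $z$. Place $Z$ after $z$ in orderings where the vertex visited just before $z$'s block is $f_i$ or $f_j$. Pendants of $Z$ attaching to vertices of $V\cup F$ would violate Property~D or inflate degrees, and this is where choosing $N$ large and using varied orderings (one per choice of which $z'$ comes first, as with $\ell,r$ in Theorem~\ref{thm-bfs-npc-delta}) helps. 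Once $Z\subseteq \neighbor{z}{H}$ is established, $z$ has at most $\kappa$ neighbors in $V\cup F$. The counting of Claim~\ref{claim-dfs-npc-last}, with $h+h'\le\kappa$, then yields a vertex cover of size at most $\kappa$.

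The main obstacle is the $(\Leftarrow)$ step that forces $Z\subseteq\neighbor{z}{H}$ without the global edge budget. Under a degree bound, a $z'$ could attach to some $f_i$ or $v_i$ instead. I would first try to rule this out as follows. Include orderings in which $z'$ immediately follows $z$, and in which $z$ immediately follows a vertex not adjacent to $z'$. Property~D then forbids $z'$ from hanging on an earlier vertex unless some intermediate vertex is adjacent to it. I would also exploit that $f_i$ has few forced neighbors, and make $N$ so large that attaching many $Z$-vertices elsewhere breaks the degree bound $k$.
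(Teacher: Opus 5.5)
You have a genuine gap, and it is exactly the step you flag as the main obstacle. You never show that every vertex of $Z$ is adjacent to $z$ in a DFS-support of maximum degree at most $k$. Without this, the counting of Claim~\ref{claim-dfs-npc-last} yields nothing, because it needs $|N_H(z)\cap(V\cup F)|\le\kappa$.

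The ``degree argument'' you sketch cannot close this. Each $f_i$ has degree $1$ in $G'$ and so roughly $\kappa+n$ units of slack, and each $v_i$ can take $\kappa+1$ further neighbours. Degrees alone therefore do not stop $Z$ from hanging on $F$, or partly on $V$.

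Your construction as literally stated also breaks the $(\Rightarrow)$ direction. Appending $\overrightarrow{Z}$ after the group~(i) and~(iii) orderings gives $(\ldots,z,f_i,\overrightarrow{Z})$. This is not a DFS-ordering of your intended $H$: after $z$ is visited, DFS must continue with an unvisited pendant of $z$, not with $f_i$. Your $(\Rightarrow)$ description (``after $f_i$ \ldots\ visits $z$'') only fits group~(ii). The block has to be inserted directly after $z$, i.e.\ $(\ldots,v_a,v_b,f_b,z,\overrightarrow{Z},f_a)$, which is what the paper does.

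The missing idea is to use the DFS rule ``if the current vertex still has an unvisited neighbour, the next vertex must be one of them,'' rather than Property~D plus degrees. Your remark about ``one ordering per choice of which $z'$ comes first'' is the right ingredient, but it has to be made concrete and used.

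The paper takes, for every $i$ and every $t\in[n]$, the ordering $(f_i,v_i,\overrightarrow{VF}\ominus\{v_i,f_i\},z,z_t,\overrightarrow{Z}\ominus\{z_t\})$. Right after $z$ the unvisited vertices are exactly $Z$, so this gives the dichotomy $Z\subseteq N_H(z)$ or $N_H(z)\cap Z=\emptyset$. The paper leaves the exclusion of the second alternative implicit; it can be done as follows.

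\begin{enumerate}
\item Assume $N_H(z)\cap Z=\emptyset$. The group~(iii) ordering $(\ldots,v_a,v_b,f_b,z,\overrightarrow{Z},f_a)$ then forces $z\not\sim f_a$. By $3$-regularity and the presence of both orientations, $z\not\sim f_x$ for every $x$.
\item Since $z$ immediately follows $f_b$ and $z\not\sim f_b$, the vertex $f_b$ has no unvisited neighbour when it is visited. Hence $N_H(f_b)\cap Z=\emptyset$ for every $b$.
\item Varying $t$ in group~(ii) forces all vertices of $Z$ to have the same most recently visited neighbour in the prefix before $z$. By the previous step this neighbour lies in $V$.
\item That vertex is adjacent to all of $Z$, so its degree is at least $n+|Z|>k$ (using $\kappa<n$), a contradiction.
\end{enumerate}

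Once $Z\subseteq N_H(z)$ is established, $z$ has at most $\kappa$ neighbours in $V\cup F$. The rest of your plan (Observation~\ref{obs-dfg}, Claim~\ref{claim-dfg-aa}, and the $S_V\cup S_F$ count) then goes through unchanged.
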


\begin{proof}
We prove the theorem by modifying the reduction in the proof of Theorem~\ref{thm-dfs-npc}.
Let $I = (G, \kappa)$ be an instance of \prob{Vertex Cover} on $3$-regular graphs.
Starting from the graph~$G'$ constructed there, we add a set~$Z$ of~$n$ pendants of the vertex~$z$. The resulting graph~$G'$ has $2n + 1$ vertices and ${n\cdot (n-1)}/{2} + 2n$ edges, and $G' - (Z \cup \{z\})$ is an $(n,1)$-drone.
Let $\overrightarrow{VF}$ be defined as before, and let $\overrightarrow{Z} = (z_1, z_2, \dots, z_n)$ be an arbitrary ordering of~$Z$.
We now construct a set~$\Pi$ of orderings of~$\vset{G'}$ as follows:
\begin{enumerate}[(i)]
    \item For each $i, j \in [n]$ with $i < j$, include the ordering
    \[
    \sigma_{(i<j)} = \left(v_i, v_j, f_j, \overrightarrow{VF} \ominus \{v_i, f_i, v_j, f_j\}, z, \overrightarrow{Z}, f_i\right).
    \]

    \item For each $i \in [n]$ and each $t \in [n]$, include the ordering
    \[
    \sigma_{i} = \left(f_i, v_i, \overrightarrow{VF} \ominus \{v_i, f_i\}, z, z_t, \overrightarrow{Z} \ominus \{z_t\}\right).
    \]

    \item For each edge $\edge{v_i}{v_j}$ in~$G$ with $i < j$, include the two orderings:
    \[
    \sigma_{(i, j)} = \left(\overrightarrow{VF} \ominus \{v_i, f_i, v_j, f_j\}, v_i, v_j, f_j, z, \overrightarrow{Z}, f_i\right),
    \]
    \[
    \sigma_{(j, i)} = \left(\overrightarrow{VF} \ominus \{v_j, f_j, v_i, f_i\}, v_j, v_i, f_i, z, \overrightarrow{Z}, f_j\right).
    \]
\end{enumerate}

Let $k = \kappa + n$, and define $g(I) = (\vset{G'}, \Pi, k)$ as an instance of \prob{Deg-Bounded-DFS-SP}.
We show that $G$ has a vertex cover of size~$\kappa$ if and only if $\Pi$ admits a DFS-support of maximum degree at most~$k$.

{($\Rightarrow$)} Suppose~$G$ has a vertex cover $S \subseteq \vset{G}$ of size~$\kappa$. Let~$H$ be the graph obtained from~$G'$ by adding the~$\kappa$ edges between~$z$ and the vertices of~$S$. Then~$z$ has degree~$\kappa + n = k$ in~$H$, and every other vertex has degree at most~$n + 2$. It remains to show that all orderings constructed above are DFS-orderings of~$H$. This is immediate for the orderings in~(i) and~(ii).
Now consider the two orderings in~(iii) corresponding to an edge $\edge{v_i}{v_j}\in \eset{G}$ with $i < j$.
For both $\sigma_{(i, j)}$ and $\sigma_{(j, i)}$, the prefix consisting of all vertices preceding~$z$ clearly forms a partial DFS-ordering of~$H$.
By the same argument as in the proof of Theorem~\ref{thm-bfs-npc}, the next visited vertex~$z$ is adjacent to at least one of~$v_i$ or~$v_j$, since~$S$ is a vertex cover of~$G$. Recall also that none of the other unvisited vertices is adjacent to~$v_j$ in~$H$.
Thus, after visiting~$z$, both orderings remain partial DFS-orderings of~$H$.
Each ordering then visits the~$n$ pendant vertices of~$z$, followed by the remaining pendant of~$v_i$ in $\sigma_{(i, j)}$ (or of~$v_j$ in $\sigma_{(j, i)}$).
Hence, both orderings are DFS-orderings of~$H$.

{($\Leftarrow$)} Suppose~$g(I)$ is a {\yesins} of \prob{Deg-Bounded-DFS-SP}, and let~$H$ be a DFS-support of~$\Pi$ with maximum degree at most~$k=\kappa + n$. Observe that Observation~\ref{obs-dfg} and Claim~\ref{claim-dfg-aa} from the proof of Theorem~\ref{thm-dfs-npc} apply unchanged. Let~$S_V$ and~$S_F$ be defined as in the proof of Theorem~\ref{thm-dfs-npc}. By the same argument as in the proof of Claim~\ref{claim-dfs-npc-last}, we conclude that $S_V \cup S_F$ is a vertex cover of~$G$ of size at most~$\kappa$.

\end{proof}

We now turn to \textsc{LexDFS}.
Let~$\mathcal{S}$ be a graph search paradigm studied in the paper.
Given a graph~$G$ and an ordering~$\sigma$ of a subset~$A \subseteq \vset{G}$,
we define~$\mathcal{S}(G; \sigma)$ as follows:
\begin{enumerate}[(1)]
    \item Construct a graph~$G^{\bigwhitestar}$ from~$G$ by turning~$A$ into a clique.
    \item Let~$\sigma'$ be an arbitrary $\mathcal{S}$-ordering of~$G^{\bigwhitestar}$ having~$\sigma$ as a prefix. Specifically, we first visit the vertices of~$\sigma$ in their given order and then proceed with the remaining vertices according to~$\mathcal{S}$. Such an ordering exists since~$A$ induces a clique in~$G^{\bigwhitestar}$. 
    \item Define~$\mathcal{S}(G; \sigma)$ as~$\sigma' \ominus A$, that is, the ordering obtained from~$\sigma'$ by removing all vertices in~$A$.
\end{enumerate}

\begin{lemma}
\label{lem-prefix}
Let~$G$ be a graph, let $A \subseteq \vset{G}$ be a subset of vertices, and let~$\sigma$ be an ordering on~$A$.
Then~$\sigma$ is a partial $\mathcal{S}$-ordering of~$G$ if and only if the concatenation $(\sigma, \mathcal{S}(G; \sigma))$ is an $\mathcal{S}$-ordering of~$G$.
\end{lemma}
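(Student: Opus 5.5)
The direction ($\Leftarrow$) holds by definition: if $(\sigma, \mathcal{S}(G;\sigma))$ is an $\mathcal{S}$-ordering of~$G$, then its prefix~$\sigma$ is a partial $\mathcal{S}$-ordering of~$G$. The work is in ($\Rightarrow$). I would not use the four-point characterizations of Lemma~\ref{lem-four-vertex} here, since MCS has none. Instead I would argue directly from the label-based pseudocode of Figure~\ref{fig-pesudocode}, which treats all six paradigms uniformly.

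First, since $\sigma'$ has $\sigma$ as a prefix and $\sigma$ is an ordering of~$A$, we have $\sigma' = (\sigma, \sigma'\ominus A) = (\sigma, \mathcal{S}(G;\sigma))$. So it suffices to show that $\sigma'$, an $\mathcal{S}$-ordering of $G^{\bigwhitestar}$, is also an $\mathcal{S}$-ordering of~$G$.

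The key invariant concerns the label of an \emph{unvisited} vertex~$u$ after the first $i$ steps. In every paradigm this label is a function only of the set of already visited neighbors of~$u$ and their positions:
\begin{itemize}
\item BFS and DFS use a max or the latest position;
\item LexBFS and LexDFS use a sequence of values derived from those positions;
\item MCS uses a count;
\item MNS uses the set of positions.
\end{itemize}
The graphs $G$ and $G^{\bigwhitestar}$ differ only in edges with both endpoints in~$A$. Now fix a run of $\mathcal{S}$ on $G$ realising the partial ordering~$\sigma$, which exists by hypothesis, and a run on $G^{\bigwhitestar}$ realising~$\sigma'$; both visit the same vertex at each step $i\le |A|$. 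By induction on the step, every vertex $u\notin A$ has identical labels in the two runs, because its visited neighbours and their positions coincide.

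Once the prefix~$\sigma$ is exhausted, all vertices of~$A$ are visited, so every unvisited vertex lies outside~$A$. From then on, the two graphs induce the same adjacencies from visited vertices to unvisited ones and among the unvisited vertices. Hence at every subsequent step the set of unvisited vertices with maximal label (largest, lexicographically largest, or inclusion-wise maximal, as appropriate) is the same in $G$ and $G^{\bigwhitestar}$. The vertex chosen by $\sigma'$ is therefore a legal choice in~$G$, and a second induction over the suffix shows that $\sigma'$ is an execution of $\mathcal{S}$ on~$G$.

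The main obstacle is the handoff at the boundary of the prefix. We must ensure that the run on~$G$ during the prefix, which is legal only because $\sigma$ is a partial $\mathcal{S}$-ordering of~$G$, leaves exactly the same labels on $V(G)\setminus A$ as the run on $G^{\bigwhitestar}$. The point to check carefully is that the extra clique edges inside~$A$ only ever update labels of vertices in~$A$. The positional values written into labels, such as $i$ or $n-i$, coincide in the two runs because positions coincide. So no label outside~$A$ is affected, and the continuation is valid in~$G$.
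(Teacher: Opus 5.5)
Your proposal is correct and follows essentially the paper's argument. Once the prefix $\sigma$ is exhausted, every unvisited vertex has the same visited neighbourhood, and hence the same label, in $G$ and in the auxiliary graph where $A$ is a clique; so every choice made by $\sigma'$ is legal in $G$. You differ only in spelling out what the paper leaves implicit: the label invariant for vertices outside $A$ during the prefix, and the handoff from the legal run of $\sigma$ on $G$ to the continuation.
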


\begin{proof}
The ``only if'' direction is immediate.
For the ``if'' direction, let $\pi = (\sigma, \mathcal{S}(G; \sigma))$.
Suppose that~$\sigma$ is a partial $\mathcal{S}$-ordering of~$G$.
To prove that~$\pi$ is an $\mathcal{S}$-ordering of~$G$, it suffices to show that for every vertex $v \in \vset{G} \setminus A$,
the prefix $\sbo{\pi}{v}$ is a partial $\mathcal{S}$-ordering of~$G$.

Let $\mathcal{S}(G; \sigma) = (v_1, v_2, \dots, v_t)$, where $t = \abs{\vset{G}} - \abs{A}$.
By construction, for each $i \in [t]$, when the vertex~$v_i$ is visited during the execution as described above in~(2), its already visited neighbors in~$G^{\bigwhitestar}$ are precisely the neighbors of~$v_i$ in~$G$ that precede~$v_i$ in~$\pi$.
For all graph search paradigms considered in this paper, the selection of the next vertex to visit is determined entirely
by the configuration of visited neighborhoods across all unvisited vertices, namely, by how their sets (or sequences) of visited neighbors compare to one another.
It follows inductively that~$\pi$ is an $\mathcal{S}$-ordering of~$G$.
\end{proof}

We next establish our first hardness result for \textsc{LexDFS}.

\begin{theorem}
\label{thm-lexdfs-npc}
\prob{Edge-Bounded-LexDFS-SP} is {\emph{\nph}}.
\end{theorem}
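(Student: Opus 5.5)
We reduce from \prob{Vertex Cover} on $3$-regular graphs, reusing the gadget from the proof of Theorem~\ref{thm-dfs-npc}. The graph $G'$ is obtained from $G$ by making $V$ a clique, attaching one pendant $f_i$ to each $v_i$, and adding an isolated vertex $z$. The target is $k=\kappa+\abs{\eset{G'}}$. The new difficulty is that a LexDFS-ordering must respect the tie-breaking rule. Consequently the suffixes in groups~(i)--(iii) of that reduction, which list $\overrightarrow{VF}\ominus\{\dots\}$ in a fixed order, need not be LexDFS-orderings of the intended support $H=G'+\{\edge{z}{s}: s\in S\}$. For example, after visiting $v_i, v_j, f_j$, LexDFS backtracks and prefers an unvisited vertex adjacent to the most recently visited vertices. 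The order among the remaining clique vertices and pendants must therefore be compatible with the labels. We fix this with the operator $\mathcal{S}(G;\sigma)$ and Lemma~\ref{lem-prefix}. We specify only a meaningful prefix of each ordering and complete it canonically by $\mathrm{LexDFS}(H_0;\sigma)$. Here $H_0$ is a graph that does not depend on the unknown cover, for instance $G'$ with $z$ joined to all of $V$, or we choose prefixes after which the continuation is forced regardless of $S$.

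Concretely, I would keep three groups of orderings. Group~(i) consists of, for each pair $i<j$, an ordering beginning $(v_i,v_j,\dots)$. Group~(ii) consists of, for each $i$, an ordering beginning $(f_i,v_i,\dots)$. These force $\eset{G'}\subseteq\eset{H}$ for every support via Observation~\ref{ob-connected}, exactly as in Observation~\ref{obs-dfg}. Group~(iii) consists of, for each edge $\edge{v_i}{v_j}$ of $G$, orderings ending in $(v_i,v_j,f_j,z,f_i)$ and $(v_j,v_i,f_i,z,f_j)$. I would place $z$ and the last pendant at the very end, so that before $v_i$ all other pairs $v_\ell,f_\ell$ are visited in the LexDFS-consistent pattern $v_\ell, f_\ell$. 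The pendant $f_\ell$ is visited immediately after $v_\ell$, which is legal because $f_\ell$'s label starts with the most recent index. The next clique vertex then has the largest label among the rest. For the ($\Rightarrow$) direction I would verify Property~LD directly with Lemma~\ref{lem-four-vertex}\,(5) on these explicit orderings, or invoke Lemma~\ref{lem-prefix} on the prefix before $z$. The point to check is the step to $z$. After $f_j$, the unvisited vertices are $z$ and $f_i$. If $z\sim v_j$, the label of $z$ contains the index of $v_j$, which exceeds everything in the label of $f_i$, so $z$ is chosen. If $z\not\sim v_j$, then $z\sim v_i$ because $S$ is a cover. Then $z$ and $f_i$ both have label beginning at index $\sigma(v_i)$, and $z$ may additionally carry earlier indices of neighbors in $S$. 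This gives $z$ a lexicographically larger or equal label, so $z$ can be chosen. This tie analysis is the delicate point, and I would ensure that the labels of $z$ and $f_i$ compare correctly, for example by making $f_i$'s label exactly $(\sigma(v_i))$, which is a prefix of $z$'s label.

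For the ($\Leftarrow$) direction, every LexDFS-ordering is a DFS-ordering (Lemma~\ref{lem-graph-traversal-relation}). Hence Claim~\ref{claim-dfg-aa} applies verbatim to the group~(iii) orderings. It yields that for each edge $\edge{v_i}{v_j}$, either $z$ has a neighbor in $\{v_i,v_j\}$ or $z\sim f_i$ and $z\sim f_j$. The counting of Claim~\ref{claim-dfs-npc-last} then produces a vertex cover $S_V\cup S_F$ of size at most $\kappa$.

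The main obstacle I expect is the ($\Rightarrow$) verification. All orderings, including groups~(i) and~(ii), must be genuine LexDFS-orderings of $H$ for every possible cover $S$, because the vertex $z$ with its $S$-dependent neighborhood perturbs the labels. Placing $z$ last or next to last in every ordering removes most of this dependence. Before $z$ is visited, LexDFS labels of vertices other than $z$ do not depend on $z$'s adjacencies. In groups~(i)/(ii), $z$ must be reachable once everything else is visited, which holds if $S\neq\emptyset$; this is guaranteed since $\kappa\ge 2$. I would therefore put $z$ last there, or put $f_i$ last in group~(i) and argue as in group~(iii).
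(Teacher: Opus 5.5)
\textbf{Verdict: genuine gap.} Your ($\Leftarrow$) direction is fine: LexDFS-orderings are DFS-orderings, so Claim~\ref{claim-dfg-aa} and its counting transfer verbatim. The ($\Rightarrow$) direction fails, and the failure is not a tie-breaking subtlety. Fix an edge $\edge{v_i}{v_j}$ of $G$ and look at the step of $\sigma_{(i,j)}$ where $f_j$ is chosen right after $v_j$ while $z$ is still unvisited. Suppose $v_j\in S$. Then $z$ carries the label $(\sigma(v_j),\dots)$ with at least one further entry, because $\abs{S}=\kappa\ge 2$ and all of $V\setminus\{v_j\}$ precedes $v_j$. The label of $f_j$ is just $(\sigma(v_j))$, so $z$ strictly wins. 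Equivalently, Property LD fails for $a\in S\setminus\{v_j\}$, $b=f_j$, $c=z$: the only neighbor of $f_j$ is $v_j$, and $v_j$ is adjacent to $z$. Since $S$ is a cover, $v_i\in S$ or $v_j\in S$. Hence for \emph{every} edge of $G$, at least one of $\sigma_{(i,j)},\sigma_{(j,i)}$ is not a LexDFS-ordering of your intended support $H$. Your tie analysis treats only the following step ($z$ versus $f_i$) and misses this one. The same mechanism refutes your claim that the interleaved pattern $v_\ell,f_\ell,v_{\ell'},f_{\ell'},\dots$ is LexDFS-consistent. Once two clique vertices have been visited, every unvisited clique vertex has a label with the same leading entry as the freshly exposed pendant, but strictly longer.

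Moving $z$ in front of $f_j$ does not help. If $z$ directly follows $v_j$ and some neighbor of $v_j$ comes later, Property LD with $a=v_j$, $b=z$ forces $\edge{z}{v_j}\in\eset{H}$ outright. Then no ``one endpoint suffices'' choice is encoded. What is missing is a gadget with three features:
\begin{enumerate}
\item the vertex immediately preceding $z$ is already adjacent to $z$ in $G'$;
\item a later witness is adjacent to that vertex and to $v_i$;
\item $z$'s $S$-dependent label is dominated at every earlier step.
\end{enumerate}
The paper achieves this with $t_p\sim v_i,v_j,z$, $w_p\sim t_p,v_i$, and $u_p$ adjacent to $t_p$ and to all of $V$. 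Its prefixes visit all of $V$, then $u_p,t_p,z$, with $w_p$ later. Property LD with $a=v_i$, $b=z$, $c=w_p$ then yields $z\sim v_i$, $z\sim v_j$, or the edge $z\sim u_p$, which costs one extra edge per uncovered edge. Every prefix ends with $z$ after all of $V$, so the completion via Lemma~\ref{lem-prefix} is insensitive to $S$. Your completion inside an auxiliary $H_0$ lacks this property and need not be a LexDFS-ordering of the actual $H$.
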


\begin{proof}
Let $I = (G, \kappa)$ be an instance of \prob{Vertex Cover} on $3$-regular graphs.
We assume $n \geq 10$, ensuring that~$G$ admits a matching of size at least three~\cite{DBLP:journals/dm/BiedlDDFK04,DBLP:journals/tcs/GareyJS76}. Consequently, every minimum vertex cover of~$G$ contains at least three vertices and at most~$n - 1$ vertices. We therefore further assume that $3 \leq \kappa \leq n - 1$.

We now construct an instance~$g(I)$ of \prob{Edge-Bounded-LexDFS-SP} as follows.
We use~$p,q\in[m]$ as subscripts for edges of~$G$, and~$h,i,j,x,y\in[n]$ as subscripts for vertices of~$G$.
Let $G'$ be obtained from $G$ as follows (see Figure~\ref{fig-lexdfs-np-hard}).
\begin{itemize}
    \item[(1)] For each edge $e_p=\edge{v_i}{v_j}$ in~$G$, where $i < j$, introduce three new vertices~$t_p$,~$w_p$, and~$u_p$, and add the following edges:
    \begin{itemize}
        \item $t_p$ is adjacent to both~$v_i$ and~$v_j$;
        \item $w_p$ is adjacent to~$t_p$ and~$v_i$; and
        \item $u_p$ is adjacent to~$t_p$ and to all vertices in~$V$.
    \end{itemize}

    Let $T=\{t_p \setmid p\in [m]\}$, $U=\{u_p \setmid p\in [m]\}$, and $W=\{w_p \setmid p\in [m]\}$.

    \item[(2)] Add edges to make~$V$ a clique.

    \item[(3)] Add a vertex~$z$ adjacent to all vertices in~$T$.
\end{itemize}

\begin{figure}
    \centering
    \includegraphics[width=\textwidth]{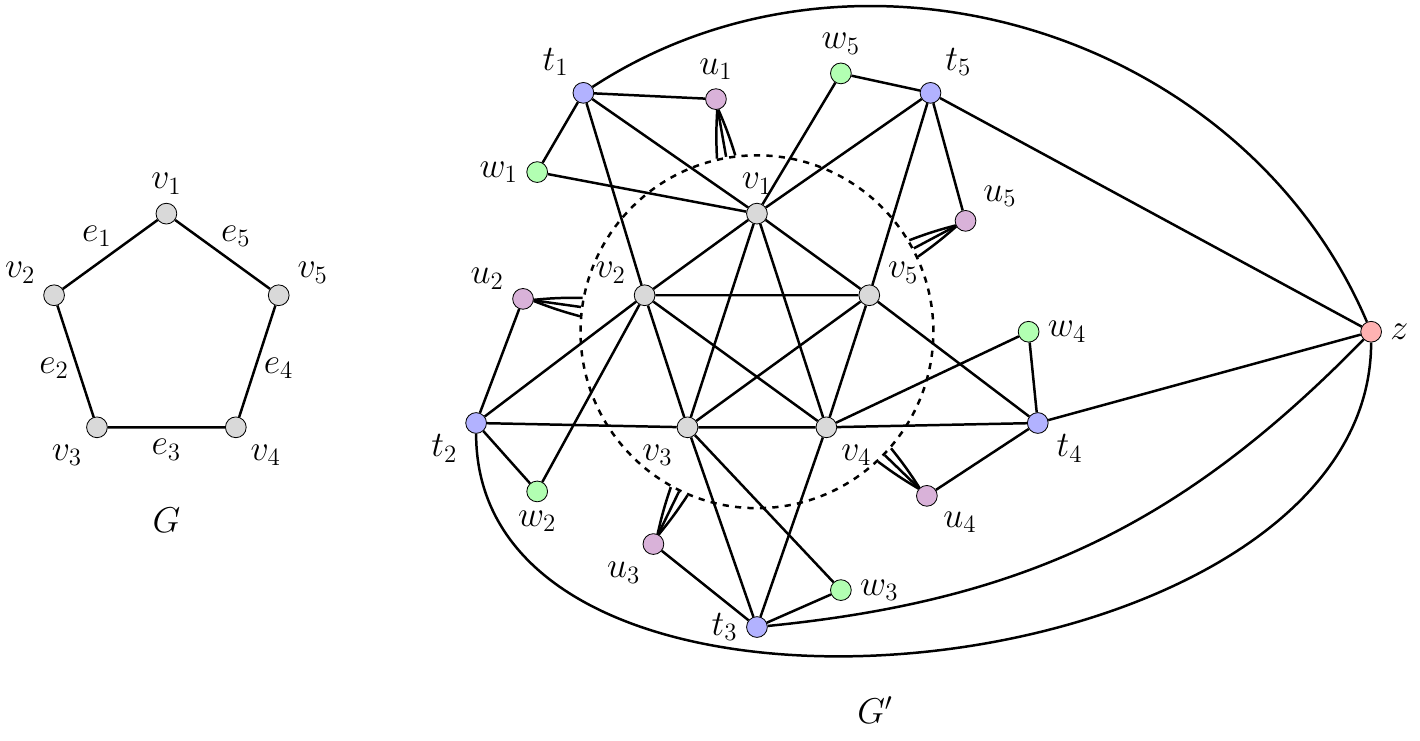}
    \caption{Illustration of the graph~$G'$ constructed in the proof of Theorem~\ref{thm-lexdfs-npc}.
    The three edges between each~$u_x$ and the dashed boundary indicates that~$u_x$ is adjacent to all vertices enclosed by the boundary.
    Vertices belonging to the same set are shown in the same color.
    In the reduction, we assume that~$G$ is $3$-regular and contains at least ten vertices.
    For simplicity, the figure depicts the construction of~$G'$ from a cycle on five vertices.}
    \label{fig-lexdfs-np-hard}
\end{figure}

We now construct a set~$\Pi$ of orderings, each obtained as
 $(\sigma,\textsf{LexDFS}(G';\sigma))$ for a suitable prefix~$\sigma$.
We specify only the prefixes and partition the orderings into the following five groups.
\begin{enumerate}[(i)]
    \item For every pair~$\{i,j\}\subseteq [n]$ with $i<j$, choose an edge~$e_p=\edge{v_x}{v_y}$ of~$G$, where $x<y$, such that~$\{x,y\}\cap\{i,j\}=\emptyset$.
          Such an edge exists since~$G$ admits a matching of size at least three. We add to~$\Pi$ the ordering specified by the prefix
    \[
        \left(v_i, v_j, \overrightarrow{V} \ominus \{v_i, v_j, v_x, v_y\}, v_x, v_y, u_p, t_p, z\right).
    \]

    \item For each edge $e_p = \edge{v_i}{v_j}$ of~$G$, choose an edge $e_q = \edge{v_x}{v_y}$ of~$G$ that shares no endpoint with~$e_p$; such an edge exists by the same argument as before. We add to~$\Pi$ the following three orderings specified by the prefixes:
    \begin{align*}
        \left(t_p, v_i, u_p, v_j, \overrightarrow{V} \ominus \{v_i, v_j, v_x, v_y\}, v_x, v_y, u_q, t_q, z\right),\\
        \left(t_p, v_j, u_p, v_i, \overrightarrow{V} \ominus \{v_i, v_j, v_x, v_y\}, v_x, v_y, u_q, t_q, z\right),\\
        \left(t_p, v_i, w_p, v_j, u_p, \overrightarrow{V} \ominus \{v_i, v_j, v_x, v_y\}, v_x, v_y, u_q, t_q, z\right).
    \end{align*}

    \item For every edge $e_p = \edge{v_i}{v_j}$ in~$G$ where $i < j$, we add to~$\Pi$ an ordering specified by the prefix
    \[
        \left(\overrightarrow{V} \ominus \{v_i, v_j\}, v_i, v_j, u_p, t_p, z\right).
    \]

    \item For every edge $e_p = \edge{v_i}{v_j}$ in~$G$ with $i < j$ and each $v_h \in V \setminus \{v_i, v_j\}$, we add to~$\Pi$ an ordering with the prefix
    \[
        \left(v_i, v_j, \overrightarrow{V} \ominus \{v_i, v_j, v_h\}, v_h, u_p, t_p, z\right).
    \]
\end{enumerate}
In every prefix defined above, the vertex~$z$ appears last.
Let $k=\kappa+\abs{\eset{G'}}$, and define the instance of \prob{Edge-Bounded-LexDFS-SP} as $g(I)=(\vset{G'},\Pi,k)$.
The reduction is computable in polynomial time.
We now prove the correctness of the reduction.

\medskip
\noindent
$(\Rightarrow)$ Assume that~$I$ is a {\yesins} of~\prob{Vertex Cover}, and let~$S$ be a vertex cover of~$G$ of size~$\kappa$. We show that~$g(I)$ is a {\yesins} of~\prob{Edge-Bounded-LexDFS-SP}.

Let~$H$ be obtained from~$G'$ by adding edges between~$z$ and the vertices of~$S$, so that~$H$ has $k = \kappa + \abs{E(G')}$ edges.
We show~$H$ is a LexDFS-support of~$\Pi$. By Lemma~\ref{lem-prefix}, it suffices to show that the specified prefix of each ordering in~$\Pi$ is a partial LexDFS-ordering of~$H$.
We consider the groups of orderings separately.
\begin{itemize}
    \item {Orderings in Group~(i).}

    Consider an ordering in Group~(i) corresponding to a pair~$\{i,j\}\subseteq [n]$ and an edge~$e_p = \{v_x, v_y\}$ with~$\{i,j\}\cap \{x,y\} = \emptyset$ and $x<y$.
    It first visits all vertices in~$V$, starting with~$v_i$ and~$v_j$ and ending with~$v_x$ and~$v_y$, and then visits~$u_p$. Since $V \cup \{u_p\}$ forms a clique in~$H$, the restriction of the ordering to this set is a partial LexDFS-ordering of~$H$.

  The next vertex,~$t_p$, is the only unvisited neighbor of~$u_p$, and can therefore be visited next. At this point,~$t_p$ has two unvisited neighbors,~$w_p$ and~$z$. By construction, the vertex~$w_p$ has exactly two visited neighbors, namely~$t_p$ and~$v_x$. Since~$S$ is a vertex cover,~$z$ is adjacent to at least one of~$v_x$ or~$v_y$ in~$H$.

    \begin{itemize}
        \item If~$z$ is adjacent to~$v_y$ in~$H$, then it has a larger lexicographic label than~$w_p$, so~$z$ should be visited next.
    \item Otherwise,~$z$ is adjacent to~$v_x$. Since~$G$ has a matching of at least three edges, there exists an edge disjoint from~$e_p$ whose endpoints both precede~$v_x$ in the ordering. Thus,~$z$ still has a larger lexicographic label than~$w_p$, and should be visited next.
    \end{itemize}

Hence, each prefix constructed in Group~(i) is a partial LexDFS-ordering of~$H$.

    \item {Orderings in Group~(ii).}


    In the first prefix, the first four vertices~$t_p$,~$v_i$,~$u_p$,~$v_j$ form a clique in~$H$.
    The prefix then visits all remaining vertices of~$V$, ending with $v_x$ and $v_y$, the endpoints of an edge~$e_q$ disjoint from~$\{v_i,v_j\}$.
    Since $V \cup \{u_p\}$ is a clique and no vertex in $V \setminus \{v_i, v_j\}$ is adjacent to the first visited vertex~$t_p$, the prefix remains a partial LexDFS-ordering of~$H$ after all vertices of~$V$ are visited. Then the vertices~$u_q$,~$t_q$, and~$z$ are visited in the order $(u_q,t_q,z)$, where~$u_q$ is adjacent to all vertices of~$V$ in~$H$, whereas~$t_q$ has only two visited neighbors, namely~$v_x$ and~$v_y$. Hence, when it is visited,~$u_q$ has a larger lexicographic label than~$t_q$. Moreover, we claim that at this moment~$u_q$ also has a larger lexicographic label than~$z$. Indeed, if $v_y \notin S$, then~$z$ is not adjacent to~$v_y$. Otherwise, since $\kappa \leq n-1$, there exists a vertex in~$V$ not adjacent to~$z$, and additionally~$z$ is not adjacent to~$u_p$. In both cases,~$u_q$ has a larger label than~$z$.
Therefore, the first prefix is a partial LexDFS-ordering of~$H$.

   Now, consider the second prefix, which is obtained from the first by swapping~$v_i$ and~$v_j$.
The same analysis applies, and therefore it is also a partial LexDFS-ordering of~$H$.

    We now turn to the third prefix. The first three visited vertices~$t_p$,~$v_i$, and~$w_p$ form a triangle in~$H$. The vertex~$w_p$ has no unvisited neighbors. The next vertex~$v_j$ is adjacent to~$t_p$ and~$v_i$ in~$H$, and hence must have the maximum lexicographic label at that point. Then,~$u_p$, being adjacent to~$t_p$,~$v_i$, and~$v_j$, has the maximum label and is visited next. The remainder of the analysis is analogous to the discussion for the first prefix.

    \item {Orderings in Group~(iii).}

    Each ordering in this group, corresponding to an edge~$e_p$, first visits the vertices in $V \cup \{u_p\}$, which form a clique in~$H$, with~$u_p$ visited last. Then~$t_p$, the only unvisited neighbor of~$u_p$, is visited, followed by~$z$. Thus, the ordering is a partial LexDFS-ordering of~$H$.

    \item {Orderings in Group~(iv).}

    These orderings are similar to those in Group~(iii), differing only in the relative positions of vertices in~$V$. The same argument applies, and each such ordering is a partial LexDFS-ordering of~$H$.
\end{itemize}

This completes the proof that~$g(I)$ is a {\yesins} of \prob{Edge-Bounded-LexDFS-SP}.

$(\Leftarrow)$ Assume that~$g(I)$ is a {\yesins}. Let~$H$ be a LexDFS-support of~$\Pi$ with the minimum number of edges. We have $\abs{\eset{H}} \leq k = \kappa + \abs{\eset{G'}}$.
To prove that $G$ admits a vertex cover of size $\kappa$, we first prove the following lemma.

\begin{claim}
\label{claim-lexdfs-a}
$\eset{G'} \subseteq \eset{H}$.
\end{claim}

\begin{proof}
We prove the claim by showing that all edges of~$G'$ constructed in Groups~(1)--(3) are contained in~$H$.
We first establish the result for the edges in Group~(2), as this will be used in the proofs for the remaining groups.

\begin{itemize}

  \item {Edges constructed in~(2), i.e., edges among vertices in~$V$.}

    By the construction of orderings in Group~(i), for every pair $\{i,j\} \subseteq [n]$, there exists an ordering in~$\Pi$ that starts with the two vertices~$v_i$ and~$v_j$. Since this ordering is a LexDFS-ordering of~$H$, Observation~\ref{ob-connected} implies that~$v_i$ and~$v_j$ are adjacent in~$H$. Therefore, all edges among the vertices in~$V$ exist in~$H$.

   \item {Edges constructed in~(1), i.e., those introduced for edges in~$G$.}

    Let $e_p=\edge{v_i}{v_j}$ be an edge in $G$ with $i<j$. By the construction of the orderings in Group~(ii), there exists one ordering in which $t_p$ and~$v_i$ are the first two vertices, and another in which $t_p$ and~$v_j$ are the first two vertices. This implies that the two edges between~$\{t_p\}$ and~$\{v_i, v_j\}$ constructed in~(1) are contained in~$H$.

Now consider the first ordering constructed in~(ii), focusing on the three vertices~$v_i$,~$u_p$, and~$v_j$, which appear in the second, third, and fourth positions, respectively. We show that $\edge{v_i}{u_p} \in \eset{H}$.
Suppose for  contradiction that~$v_i$ and~$u_p$ are not adjacent in~$H$. As proved above,~$v_i$ and $v_j$ are adjacent in $H$. Then, by the four-point ordering characterization of LexDFS (Lemma~\ref{lem-four-vertex}, Statement~\ref{four-point-lexdfs}), there exists a vertex between~$v_i$ and~$u_p$ in the ordering that is adjacent to~$u_p$ but not adjacent to~$v_j$ in~$H$. However, no such vertex exists between~$v_i$ and~$u_p$ in the ordering, yielding a contradiction.

By an analogous argument applied to the second ordering constructed in~(ii), we can conclude that~$v_j$ and~$u_p$ are also adjacent in~$H$.

More generally, the above reasoning implies the following fact.

    \medskip
    \noindent\textbf{Fact 1:} {\textit{Let~$\sigma$ be a LexDFS-ordering of~$H$, and let $a <_\sigma b <_\sigma c$ be vertices such that no vertex lies between~$a$ and~$b$ in~$\sigma$. If~$\edge{a}{c}\in \eset{H}$, then~$\edge{a}{b}\in \eset{H}$.}}
    \medskip

    We now show that~$\edge{t_p}{u_p}\in\eset{H}$. Suppose, for contradiction, that~$t_p$ and~$u_p$ are not adjacent in~$H$. Consider the prefix~$(t_p,v_i,u_p,v_j)$ of the first ordering in Group~(ii). Recall that~$t_p$ is adjacent to~$v_j$ in~$H$. By Lemma~\ref{lem-four-vertex}, the only vertex between~$t_p$ and~$u_p$, namely~$v_i$, is adjacent to~$u_p$ but not to~$v_j$. However, we have already shown that~$v_i$ and~$v_j$ are adjacent in~$H$, a contradiction.

Next, we show that $\edge{w_p}{v_i}, \edge{w_p}{t_p} \in \eset{H}$. Consider the third ordering in~(ii), and focus on the three vertices~$v_i$,~$w_p$, and~$v_j$, which appear in the second, third, and fourth positions, respectively. Since $\edge{v_i}{v_j} \in \eset{H}$, by Fact~1, we obtain $\edge{w_p}{v_i} \in \eset{H}$. Now consider the three vertices~$t_p$,~$w_p$, and~$v_j$ in the same ordering, where~$t_p$ appears in the first position. Suppose, for contradiction, that~$w_p$ and~$t_p$ are not adjacent in~$H$. Since $\edge{t_p}{v_j} \in \eset{H}$, by Lemma~\ref{lem-four-vertex}, the only vertex between~$t_p$ and~$w_p$, namely~$v_i$, must be adjacent to~$w_p$ and nonadjacent to~$v_j$. However, we have already established that $\edge{v_i}{v_j} \in \eset{H}$, yielding a contradiction. Hence, $\edge{w_p}{t_p} \in \eset{H}$.

We now show that~$u_p$ is adjacent to all vertices in~$V$. We have already established that~$u_p$ is adjacent to both~$v_i$ and~$v_j$ in~$H$. It remains to prove that $\edge{u_p}{v_h} \in \eset{H}$ for every $h \in [n] \setminus \{i, j\}$. Let $h \in [n] \setminus \{i, j\}$, and consider the ordering as defined in Group~(iv). Since~$G$ is $3$-regular, there exists an edge~$e_q$ incident to~$v_h$ in~$G$. By the definition of this ordering,~$u_p$ appears between $v_h$ and the vertex~$w_q$, one of the vertices created for the edge~$e_q$. Moreover, no vertices are between~$v_h$ and~$u_p$ in the ordering. Note that, by the previous analysis, $w_q$ is adjacent to $v_h$ in~$H$.
Applying Fact~1 to the three vertices~$v_h$,~$u_p$, and~$w_q$, we conclude that~$u_p$ is adjacent to~$v_h$ in~$H$. This completes the proof that~$u_p$ is adjacent to all vertices in~$V$.

\item {Edges constructed in~(3), i.e., those incident to~$z$.}

Finally, we show that~$z$ is adjacent to all vertices in $T$. Consider the ordering defined in Group~(iii), constructed for an edge~$e_p$. Focus on the three vertices~$t_p$,~$z$, and~$w_p$. By construction, $z$ appears between the other two vertices, and no vertex lies between $t_p$ and $z$ in the ordering. From the previous analysis, we know that~$w_p$ is adjacent to~$t_p$ in~$H$. Then, by Fact~1,~$z$ must also be adjacent to~$t_p$ in~$H$, completing the proof.
\end{itemize}
Putting all the above together, we conclude that $\eset{G'} \subseteq \eset{H}$.
\end{proof}

By Claim~\ref{claim-lexdfs-a},~$z$ has at most
$\abs{\eset{H}} - \abs{\eset{G'}} \leq \kappa$ neighbors in
$\vset{G'} \setminus T = V \cup W \cup U$, i.e.,
\begin{equation}
    \label{eq-lexdfs}
    \abs{\neighbor{z}{H} \cap (V \cup W \cup U)} \leq \kappa.
\end{equation}

We now construct a vertex cover of~$G$ as follows. Let
$S = \neighbor{z}{H} \cap V$.
If~$S$ is a vertex cover of~$G$, we are done.
Otherwise, consider an edge
$e_p = \edge{v_i}{v_j}$ of~$G$ with $i < j$ not covered by~$S$.
Consider the ordering constructed for~$e_p$ in Group~(iii).
By Claim~\ref{claim-lexdfs-a},~$v_i$ and~$w_p$ are adjacent in~$H$.
In the corresponding ordering,~$v_i$ precedes~$w_p$, and~$z$ appears between them.
Since $v_i \notin S$, it is not adjacent to~$z$ in~$H$.
By Lemma~\ref{lem-four-vertex}, there must exist a vertex between~$v_i$ and~$z$ in the ordering
that is adjacent to~$z$ but not adjacent to~$w_p$ in~$H$.
This vertex cannot be~$v_j$, as~$v_j\not\in S$.
It also cannot be~$t_p$, since~$t_p$ and~$w_p$ are adjacent in~$G'$ and remain adjacent in~$H$ by Claim~\ref{claim-lexdfs-a}.
The only remaining vertex between~$v_i$ and~$z$ is~$u_p$, so we conclude that~$z$ must be adjacent to~$u_p$ in~$H$.

Therefore, if an edge $e_p$ of~$G$ is not covered by~$S$, then the corresponding vertex~$u_p \in U$, introduced for this edge, is adjacent to~$z$ in~$H$.
It follows that the number of edges of~$G$ not covered by~$S$ is bounded from above by $\abs{\neighbor{z}{H} \cap U}$.
Based on this observation, for every edge not covered by~$S$ in~$G$, we expand~$S$ by adding an arbitrary endpoint of the edge.
After this process,~$S$ becomes a vertex cover of~$G$.
Furthermore, the size of~$S$ is bounded by
\[
\abs{\neighbor{z}{H} \cap V} + \abs{\neighbor{z}{H} \cap U}
\leq \abs{\neighbor{z}{H} \cap (V \cup W \cup U)},
\]
which, by Inequality~\eqref{eq-lexdfs}, is at most~$\kappa$.
We conclude that~$I$ is a {\yesins} of \prob{Vertex Cover}.
\end{proof}

By a slight adaptation of the above reduction, we obtain the following result.

\begin{theorem}
\label{thm-lexdfs-npc-delta}
\prob{Deg-Bounded-LexDFS-SP} is {\nph}.
\end{theorem}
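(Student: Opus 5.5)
We adapt the reduction of Theorem~\ref{thm-lexdfs-npc} in the same way that Theorem~\ref{thm-dfs-npc-delta} adapts Theorem~\ref{thm-dfs-npc}. The key point is to make $z$ the unique vertex whose degree is tight. Starting from the graph~$G'$ of Theorem~\ref{thm-lexdfs-npc}, we attach to~$z$ a set~$Z$ of~$d$ fresh pendants, with $d$ large enough that $\abs{T}+d=m+d$ dominates the degrees of all other vertices of~$G'$ even after adding up to~$\kappa$ further edges. The degrees of $u_p$ are about $n+1$, and those of $v_i$ are about $n-1+3+3+m$. Taking $d=2m+n$ suffices. We then set $k=\kappa+m+d$.

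We keep the prefixes of Groups~(i)--(iv) unchanged and complete each one with $\textsf{LexDFS}(G';\sigma)$, where $G'$ now contains~$Z$. To force the edges between $z$ and~$Z$, we add a Group~(v) in the spirit of ordering~(ii) of Theorem~\ref{thm-dfs-npc-delta}. For each $z'\in Z$, this group contains a prefix that visits $t_1$ first, then~$z$, then~$z'$. This forces $\edge{t_1}{z}$ and, via Fact~1 or Observation~\ref{ob-connected}, forces $\edge{z}{z'}$. The simplest version is the prefix $(z,z')$ itself, which by Observation~\ref{ob-connected} forces $\edge{z}{z'}\in\eset{H}$ directly.

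For the $(\Rightarrow)$ direction, let $H$ be $G'$ plus the edges from~$z$ to a vertex cover~$S$. Then $\deg_H(z)=m+d+\kappa=k$, and all other vertices have smaller degree. The prefix analyses from Theorem~\ref{thm-lexdfs-npc} still go through. The one thing to check is that the pendants in~$Z$ never compete lexicographically before~$z$ is visited. This holds because their only neighbor is~$z$, so before~$z$ is visited their labels are empty. By Lemma~\ref{lem-prefix}, every completion is then a LexDFS-ordering.

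For the $(\Leftarrow)$ direction, let $H$ be a LexDFS-support with maximum degree at most~$k$. The argument of Claim~\ref{claim-lexdfs-a} relies only on orderings in~$\Pi$, not on minimality, so it gives $\eset{G'}\subseteq\eset{H}$, and Group~(v) additionally gives $Z\subseteq\neighbor{z}{H}$. Hence $z$ has at most $k-m-d=\kappa$ neighbors outside $T\cup Z$, which is exactly Inequality~\eqref{eq-lexdfs}. The covering argument for Group~(iii) is verbatim: an edge $e_p$ not covered by $S=\neighbor{z}{H}\cap V$ forces $\edge{z}{u_p}$. We then swap each such $u_p$ for an endpoint of $e_p$, which yields a vertex cover of size at most~$\kappa$.

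The main obstacle is step $(\Rightarrow)$. We must verify that appending~$Z$ does not disturb the delicate lexicographic comparisons in Groups~(i) and~(ii), in particular the comparisons of $z$ against $w_p$ and of $u_q$ against~$z$. Since the $Z$-vertices are adjacent only to~$z$, they add no label entries to any competitor before~$z$ is visited, so these comparisons should be unaffected. We also need to confirm that $d$ is large enough that no vertex other than~$z$ can reach degree~$k$, which is a routine counting check.
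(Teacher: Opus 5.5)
Your instinct that $z$ must be padded is correct. The paper's own two-line argument misses this point: it reuses the reduction of Theorem~\ref{thm-lexdfs-npc} with $k=m+\kappa$. But every $v_i$ already has degree at least $(n-1)+3+m>m+\kappa$ in $G'$, and $\eset{G'}\subseteq\eset{H}$ is forced, so the paper's instance is never a yes-instance. Your degree count and your check that $Z$ does not disturb Groups (i)--(iv) are fine. Your $(\Leftarrow)$ counting is also fine, conditional on $Z\subseteq\neighbor{z}{H}$ being forced. The gap is Group~(v): neither of your forcing orderings is a LexDFS-ordering of $H$, i.e.\ of $G'$ plus the edges from $z$ to the cover $S$.

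\textbf{The prefix $(t_1,z,z')$.} It fails outright. After $t_1$ and $z$ are visited, an endpoint of $e_1$ lying in $S$ (one exists, since $S$ covers $e_1$) has label $(2,1)$. This beats the label $(2)$ of the pendant $z'$.

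\textbf{The prefix $(z,z')$.} The prefix itself is fine, but its completion $\textsf{LexDFS}(G';(z,z'))$ is computed in $G'$, where $z$ has no neighbour in $V$. Lemma~\ref{lem-prefix} transfers completions from $G'$ to $H$ only when the prefix contains $V\cup\{z\}$; that is why every original prefix ends with $z$ after all of $V$. In fact no fixed ordering starting with $z$ or with $(z',z)$ can work for every cover $S$:
\begin{enumerate}
\item The first vertex outside $Z\cup\{z\}$ must lie in $T\cup S$.
\item If it is some $t_p$, the next vertex must be an endpoint of $e_p$ lying in $S$. Its label is $(\mathrm{pos}(t_p),\mathrm{pos}(z))$, versus $(\mathrm{pos}(t_p))$ for $w_p$, for $u_p$, and for an endpoint outside $S$.
\end{enumerate}
So such an ordering encodes membership in $S$, which the reduction cannot know, and the $(\Rightarrow)$ direction is not established.

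\textbf{Putting $z'$ right after $z$ later.} This is no better. While some unvisited $t_q$ has a visited neighbour other than $z$, its label $(\mathrm{pos}(z),\dots)$ strictly beats the pendant's label $(\mathrm{pos}(z))$.

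Hence pendants of $z$ cannot be forced by the Observation~\ref{ob-connected} or Fact~1 arguments you have in mind. You need a different padding gadget in which the padding vertices can legitimately follow $z$ after all of $V$ has been visited, for instance by giving them additional neighbours visited before $z$. Then Fact~1, applied with a later $t_q$, forces $\{z,z'\}$. Adding such neighbours changes labels, so all lexicographic comparisons in Groups (i)--(iv), and all degree bounds, must be re-verified.
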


\begin{proof}
We employ the same reduction as in the proof of Theorem~\ref{thm-lexdfs-npc}, with the only modification that we set~$k$, the maximum degree bound, to~$m + \kappa$. The correctness of the reduction remains essentially unchanged.
For the $(\Rightarrow)$ direction, assume that the instance~$I$ of \prob{Vertex Cover} is a {\yesins}, that is, there exists a vertex cover $S \subseteq V$ of size~$\kappa$ in~$G$. As shown previously, the graph~$H$ obtained from~$G'$ by adding the~$\kappa$ edges between~$z$ and the vertices in~$S$ is a LexDFS-support of~$\Pi$. In this graph~$H$, the vertex~$z$ is adjacent to all vertices in~$S \cup T$, and hence has degree~$\kappa + m = k$.
For the $(\Leftarrow)$ direction, the argument is identical to that in the proof of Theorem~\ref{thm-lexbfs-npc}.
\end{proof}

Now, we study the graph search MCS.

\begin{theorem}
\label{thm-mcs-edges-npc}
	\prob{\prob{Edge-Bounded-MCS-SP}} is {\nph}.
\end{theorem}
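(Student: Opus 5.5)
We will reduce again from \prob{Vertex Cover} on $3$-regular graphs, following the outline used for LexDFS. MCS has no four-point characterization, so every structural deduction will go through the label-counting definition and Observation~\ref{ob-connected}. We construct $G'$ from $G$ by turning $V$ into a clique, attaching a set $F_i$ of pendants to each $v_i$ (so that $G'-z$ is an $(n,q)$-drone for a suitable $q$), and adding an isolated vertex $z$. We set $k=\kappa+\abs{\eset{G'}}$. Each ordering in $\Pi$ is specified by a prefix $\sigma$ and completed as $(\sigma,\textsf{MCS}(G';\sigma))$, using Lemma~\ref{lem-prefix}.

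The first group forces the drone. For every pair $\{i,j\}$ we add an ordering starting $(v_i,v_j,\dots)$, and for every $f\in F_i$ one starting $(f,v_i,\dots)$. Observation~\ref{ob-connected} then gives all drone edges in $H$. Minimality of $H$, together with the upper bound from the $(\Rightarrow)$ direction applied to $S=V$, leaves at most $n$ extra edges. The second group encodes the edges of $G$. For each edge $\edge{v_i}{v_j}$ we add an ordering that starts with $f\in F_i$, then $v_i$, then $v_j$, then immediately $z$, and only afterwards the rest of $V$ and the pendants. When $z$ is chosen, every candidate has at most one visited neighbor: the vertices of $V\setminus\{v_i,v_j\}$ have label $2$, and the pendants in $F_i\cup F_j$ have label $1$. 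So for $z$ to be chosen with maximum label, $z$ must have label $\ge 2$, which means $z$ is adjacent to at least two of $f,v_i,v_j$. That is too strong for the reduction, so we will place $z$ earlier instead: use the prefix $(v_i,v_j,z)$. Then every other vertex of $V$ has label $2$, so $z$ also needs label $2$, again too strong.

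The workable version is the following. Take the prefix $(f_i,v_i,f_i',v_j,z)$ with pendants $f_i,f_i'\in F_i$ and a single extra pendant layer. After visiting $v_j$, the vertices of $V$ have label $2$. To make label $1$ suffice, we instead let the ordering exhaust $V$ first, i.e.\ $(\dots,v_i,v_j,$ rest of $V$ ending with $v_i$'s and $v_j$'s positions arranged$,\dots)$. Concretely, we mimic Theorem~\ref{thm-bfs-npc}: $(f_i^p,v_i,\overrightarrow{F_i}\ominus\{f_i^p\},v_j,\overrightarrow{V}\ominus\{v_i,v_j\},z,\dots)$. When $z$ is chosen, all pendants outside $F_i$ have label $1$. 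Hence $z$ is a legal choice iff $z$ has at least one visited neighbor, and a minimal $H$ puts it in $V$. To force that $z$'s neighbor lies in $\{v_i,v_j\}$ and not in an arbitrary vertex of $V$, we use the asymmetry of MCS ties. We move $z$ directly after $v_j$, before the other vertices of $V$, and make those vertices unattractive by not having them in the clique. That is, we replace the clique on $V$ by the edges of $G$ only, which keeps the label of $v_t\notin\{v_i,v_j\}$ at most $1$ at that moment. The intended support $H=G'+\{z s: s\in S\}$ then works: $z$ has label $\ge1$ because $S$ covers $\edge{v_i}{v_j}$. Conversely, if $z$ is adjacent to neither endpoint, its label is $0$, while the pendants of $v_j$ have label $1$, a contradiction. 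This gives $N_H(z)\cap V$ as a cover of size $\le\kappa$.

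The main obstacle is the $(\Leftarrow)$ structural step: proving that a minimum MCS-support contains exactly the designed edges and that $z$ has no neighbors among pendants or other gadget vertices. Without a four-point property, each such claim must be argued by contradiction using label comparisons at specific steps, combined with edge-count bounds. A stray edge from a pendant to $z$ must be ruled out by an ordering in which that pendant appears early and $z$ later, together with an argument that such an edge would not help cover an uncovered $G$-edge. If that fails, we fall back on the charging argument of Claim~\ref{claim-dfs-npc-last}: charge each uncovered edge to a distinct extra neighbor of $z$.
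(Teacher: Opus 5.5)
The step you flag as the main obstacle, $(\Leftarrow)$, is routine. Every edge of $G'$ is forced by the first two vertices of some ordering, and $z$ is isolated in $G'$, so the edge budget leaves $z$ at most $\kappa$ neighbors. Your charging of an uncovered edge $\edge{v_i}{v_j}$ to a neighbor of $z$ in $F_i$ then handles the case where $z$ gets its label from a visited pendant.

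\textbf{The gap is in $(\Rightarrow)$.} In your final construction, $z$ is visited right after $v_j$, before $V\setminus\{v_i,v_j\}$. In $H=G'+\{zs : s\in S\}$, the edges from $z$ to the still unvisited vertices of $S$ raise their labels. Which vertices are raised depends on the cover $S$, and the reduction does not know $S$ when it fixes $\Pi$.

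Completing the prefix as $(\sigma,\textsf{MCS}(G';\sigma))$ does not help. Lemma~\ref{lem-prefix} only certifies the completion in the graph where it is computed, and in $G'$ the vertex $z$ is isolated. The paper's LexDFS reduction avoids this because all of $V$ precedes $z$ there, so every $z$--$S$ edge joins two visited vertices.

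Concretely, the legal continuation depends on $S$:
\begin{itemize}
\item If $v_i\notin S$, its other two neighbors $v_a,v_b$ lie in $S$ and have label $2$ right after $z$. Continuing with a pendant of $v_j$, or with any label-$1$ vertex, is then not an MCS step in $H$.
\item If instead $v_i\in S$, $v_a\notin S$ and $v_b\in S$, then continuing with $v_a$ is illegal.
\end{itemize}
So your claim that $G'+\{zs : s\in S\}$ is an MCS-support of $\Pi$ fails in general.

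There is also a smaller problem at the moment $z$ itself is chosen. A common neighbor of $v_i$ and $v_j$ in $G$ already has label $2$, so $z$ would need both endpoints. Cubic graphs may contain triangles, so you would at least need a triangle-free source problem.

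\textbf{What is missing} is a gadget that lets one endpoint suffice for $z$ and makes every step after $z$ independent of $S$. The paper keeps $V$ a clique and adds, for each edge $e_x=\edge{v_i}{v_j}$, two vertices $w_x,w_{m+x}$ adjacent to both $v_i$ and $v_j$. It also adds a $4n$-clique $U\ni z$ that is complete to $W\cup F$.

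In the prefix $(w_x,v_i,v_j,z,w_{m+x})$, $z$ gets the visited neighbor $w_x$ for free. One neighbor in $\{v_i,v_j\}$ then lets it tie at label $2$ with $w_{m+x}$ and with $V\setminus\{v_i,v_j\}$. Without such a neighbor it has label $1<2$, which is what yields the cover in $(\Leftarrow)$.

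After $w_{m+x}$, the vertices of $U\setminus\{z\}$ have label $3$, which is at least the label of any vertex of $V\setminus\{v_i,v_j\}$, and their labels keep growing. Hence $U$, then $W\cup F$, are visited before the rest of $V$ regardless of $S$. Finally, the pendant counts $\abs{F_i}=n-i+1$ give lower-indexed vertices a head start. This compensates for the possible extra $+1$ from $z$, so the fixed final order $(v_1,\dots,v_n)$ is legal whichever vertices lie in $S$.
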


\begin{proof}
Let $I = (G, \kappa)$ be an instance of  \prob{Vertex Cover} on $3$-regular graphs. 
\begin{figure}
    \centering
    \includegraphics[width=0.5\textwidth]{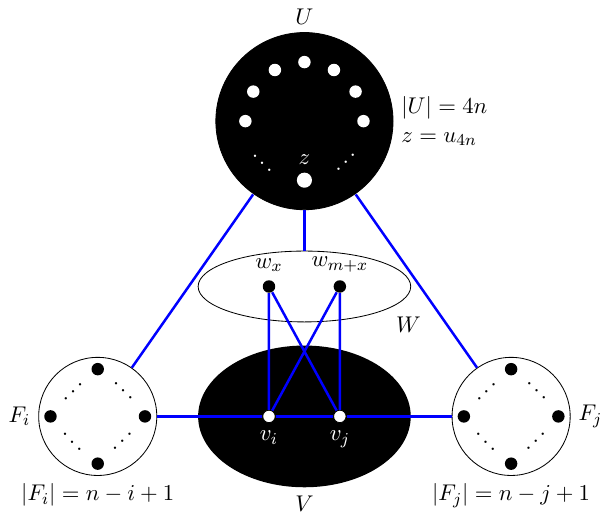}
    \caption{An illustration of the graph~$G'$ in the proof of Theorem~\ref{thm-mcs-edges-npc}. Vertices in the shaded regions represent cliques, while vertices in the white regions form independent sets.}
    \label{fig:enter-label}
\end{figure}
Let~$G'$ be the graph obtained from~$G$ by performing the following operations:
\begin{enumerate}
    \item[(1)] For every edge $e_x = \edge{v_i}{v_j}$, $x\in [m]$, $\{i,j\}\subseteq [n]$, introduce two vertices~$w_x$ and~$w_{m+x}$, and add all four edges between $\{v_i, v_j\}$ and $\{w_x, w_{m+x}\}$. Let $W = \{w_i \setmid i \in [2m]\}$ and let $\overrightarrow{W} = (w_1, w_2, \ldots, w_{2m})$. For each $i \in [n]$ define $W(i) = \neighbor{v_i}{G'} \cap W$, and for every pair $\{i, j\} \subseteq [n]$ define $W[i,j] = W(i) \cup W(j)$.

    \item[(2)] Add edges to make~$V$ a clique.

    \item[(3)] For every $i \in [n]$, introduce a set $F_i = \{f_i^1, f_i^2, \ldots, f_i^{n-i+1}\}$ of $n-i+1$ vertices adjacent only to~$v_i$. Let $\overrightarrow{F_i} = (f_i^1, f_i^2, \ldots, f_i^{n-i+1})$. Let $F = \bigcup_{i \in [n]} F_i$ and $\overrightarrow{F} = (\overrightarrow{F_1}, \overrightarrow{F_2}, \ldots, \overrightarrow{F_n})$.

    \item[(4)] Add a clique $U=\{u_x \setmid x\in[4n]\}$ of $4n$ vertices, and make every vertex of~$U$ adjacent to every vertex in~$W\cup F$. Let $\overrightarrow{U}=(u_1,u_2,\ldots,u_{4n})$, and in particular let~$z=u_{4n}$.
\end{enumerate}
Notice that $W \cup F$ is an independent set in~$G'$. Figure~\ref{fig:enter-label} illustrates the construction of~$G'$.

We now construct a set~$\Pi$ of orderings. They are designed so that the first two vertices and the connectivity properties of the graph traversal ensure that any graph admitting these orderings as MCS-orderings must contain~$G'$ as a subgraph. Moreover, they remain valid MCS-orderings for any graph obtained from~$G'$ by adding edges between~$z$ and any vertex cover of~$G'$.
Specifically,~$\Pi$ consists of the following six groups of orderings:
\begin{enumerate}
    \item[(i)] For each $\{i, j\} \subseteq [n]$ with $i < j$, add:
    \[
    \left(v_i, v_j, \overrightarrow{V} \ominus \{v_i,v_j\}, z, w_1, w_{m+1}, \overrightarrow{U} \ominus \{z\}, \overrightarrow{W} \ominus \{w_1, w_{m+1}\}, \overrightarrow{F}\right).
    \]

    \item[(ii)] For each $\{i, j\} \subseteq [4n]$ with $i < j$, add:
    \[
    \left(u_i, u_j, \overrightarrow{U} \ominus \{u_i, u_j\}, \overrightarrow{W}, \overrightarrow{F}, \overrightarrow{V}\right).
    \]

    \item[(iii)] For each $i \in [2m]$ and $j \in [4n]$, add:
    \[
    \left(w_i, u_j, \overrightarrow{U} \ominus \{u_j\}, \overrightarrow{W} \ominus \{w_i\}, \overrightarrow{F}, \overrightarrow{V}\right),
    \]

    \item[(iv)] For every $i \in [n]$, $p \in [n-i+1]$, and $x \in [4n]$, add:
    \[
    \left(f_i^p, u_x, \overrightarrow{U} \ominus \{u_x\}, \overrightarrow{F} \ominus \{f_i^p\}, \overrightarrow{W}, \overrightarrow{V}\right).
    \]

    \item[(v)] For each $i \in [n]$, $p \in [n-i+1]$, add:
    \[
    \left(v_i, f_i^p, z, \overrightarrow{U}\ominus \{z\}, \overrightarrow{W(i)},  \overrightarrow{F_i}\ominus \{f_i^p\}, \overrightarrow{W} \ominus W(i), \overrightarrow{F} \ominus F_i, \overrightarrow{V} \ominus \{v_i\}\right).
    \]

    \item[(vi)] For each edge $e_x = \edge{v_i}{v_j}$, where $x\in [m]$, $\{i,j\}\subseteq [n]$, and $i < j$, add the following four orderings, each followed by the common suffix. The prefixes are
    \[
    \begin{aligned}
    &(w_x, v_i, v_j, z, w_{m+x}), \\
    &(w_{m+x}, v_i, v_j, z, w_x), \\
    &(w_x, v_j, v_i, z, w_{m+x}), \\
    &(w_{m+x}, v_j, v_i, z, w_x).
    \end{aligned}
    \]
    The common suffix is
    \[
    \left(\overrightarrow{U} \ominus \{z\}, \overrightarrow{W[i,j]} \ominus \{w_x, w_{m+x}\}, \overrightarrow{W} \ominus W[i,j], \overrightarrow{F_i}, \overrightarrow{F_j}, \overrightarrow{F} \ominus (F_i\cup F_j), \overrightarrow{V} \ominus \{v_i, v_j\}\right).
    \]
\end{enumerate}
	%
%
Let $k=\kappa+\abs{\eset{G'}}$, and and define the instance $g(I)=(\vset{G'}, \Pi, k)$ of {\prob{Edge-Bounded-MCS-SP}}.
We prove the correctness of the reduction as follows.

$(\Rightarrow)$ Assume that~$G$ has a vertex cover~$S$ of~$\kappa$ vertices.
Let~$H$ be obtained from~$G'$ by adding edges between~$z$ and the vertices of~$S$. Then,~$H$ has exactly $k$ edges.
We show that~$H$ is an MCS-support of~$\Pi$, and hence that~$g(I)$ is a {\yesins}. We consider the six groups of orderings separately.

\begin{itemize}
    \item {Orderings in group (i).}

    Let~$\sigma$ denote the ordering constructed for a pair $\{i,j\}\subseteq[n]$ with $i<j$.
    Since the first~$n$ vertices of~$\sigma$ form the clique~$V$,~$\sigma_{\le n}$ is a partial MCS-ordering of~$H$.
    After the vertices of~$V$ have been visited,~$z$ has~$\kappa$ visited neighbors (those in~$S$), each vertex in~$W$ has two visited neighbors, each vertex in~$F$ has one visited neighbor, and no vertex in $U\setminus \{z\}$ has a visited neighbor.
    Since~$\kappa \geq 3$, the vertex~$z$ has more visited neighbors than any other unvisited vertex and is therefore selected next. After visiting~$z$, each vertex in~$W$ has three visited neighbors, whereas every unvisited vertex outside~$W$ has at most two visited neighbors. The ordering selects~$w_1$ and~$w_{m+1}$ as the next to be visited (recall that~$W$ is an independent set in~$H$). After visiting~$w_1$ and~$w_{m+1}$, each vertex in~$U\setminus \{z\}$ has three visited neighbors~$w_1$,~$w_{m+1}$ and $z$.
    The ordering may now visit the vertices of~$U$. Since~$U$ is a clique, once the first vertex of~$U$ is selected, all remaining vertices of~$U$ can be visited consecutively while preserving the MCS property.
     After all vertices of~$U$ have been visited, each vertex in~$W\setminus \{w_1, w_{m+1}\}$ has $\abs{U} + 2$ visited neighbors, whereas each vertex in~$F$ has $\abs{U}+1$ visited neighbors. Since $W \cup F$ is an independent set,~$\sigma$ can visit the remaining vertices in~$W$ first, followed by all vertices in~$F$, without ceasing to be a partial MCS-ordering of~$H$ at any point.

    \item {Orderings in group (ii).}

   Consider the ordering constructed for a pair $\{i,j\}\subseteq [4n]$ with~$i<j$. It begins by visiting all vertices of the clique~$U$, starting with~$u_i$ and~$u_j$.
   It then visits the vertices of the independent set~$W \cup F$, each of which is adjacent to all vertices in~$U$. Consequently, when any vertex from~$W \cup F$ is visited, it already has $\abs{U}=4n$ visited neighbors.
Each vertex~$v_x\in V$, $x\in [n]$, has at most one neighbor in~$U$ (namely~$z$, if~$v_x\in S$), at most six neighbors in~$W$ (since $G$ is $3$-regular), and at most~$n-x+1$ neighbors in~$F$. Hence it has at most~$n-x+8\le n+7$ visited neighbors. Since~$n\ge4$, every vertex in~$W\cup F$ has more visited neighbors than any vertex in~$V$, and thus the subordering restricted to $U \cup W \cup F$ constitutes a partial MCS-ordering of~$H$.
%

After all vertices in~$U\cup W\cup F$ have been visited, the vertices of~$V$ are visited as $(v_1,\ldots,v_n)$. Each $v_x$ where~$x\in[n]$ has either~$n-x+7$ neighbors in~$U\cup W\cup F$ if~$v_x\notin S$, or~$n-x+8$ such neighbors if~$v_x\in S$. Together with the~$x-1$ previously visited vertices of~$V$, it follows that~$v_x$ has at least~$n+6$ visited neighbors when selected. On the other hand, every unvisited vertex~$v_y$ with~$y>x$ has at most~$(n-y+8)+(x-1)\leq n+6$ visited neighbors. Hence the ordering remains an MCS-ordering throughout the visit of the vertices in~$V$.
%

\item {Orderings in Group~(iii).}

Consider the ordering constructed for some $i\in[2m]$ and $j\in[4n]$. It starts with~$w_i$ and~$u_j$, then visits the remaining vertices of~$U$, followed by the remaining vertices of~$W$.
Since every vertex of~$U$ is adjacent to every vertex of~$W$,~$U$ is a clique, and~$W$ is an independent set, the ordering may visit all vertices of~$U$ followed by vertices of~$W$ while preserving the MCS property.
The ordering then visits the vertices of~$F$, followed by the vertices of~$V$ in the order $(v_1,\ldots,v_n)$. At this point, the situation is identical to that in the proof for Group~(ii). Hence the same argument shows that the remainder of the ordering is consistent with MCS.

   \item {Orderings in group~(iv).}

  Consider the ordering constructed for some $i \in [n]$, $p \in [n - i + 1]$, and $x \in [4n]$.
It starts with~$f_i^p$ and~$u_x$, and then visits the remaining vertices of $U$.
Since~$U\cup\{f_i^p\}$ is a clique in~$H$, the ordering remains a partial MCS-ordering throughout the visit of these vertices.
It then visits the vertices in the independent set $(F \setminus \{f_i^p\}) \cup W$.
Every vertex in this set is adjacent to all vertices of~$U$. Hence the ordering remains a partial MCS-ordering throughout this stage.
Finally, vertices of~$V$ are visited in  $(v_1, v_2, \dots, v_n)$.
The same argument as in the proof for Group~(ii) shows that the remainder of the ordering is consistent with MCS.

\item {Orderings in group~(v).}

Consider the ordering~$\sigma$ constructed for some $i \in [n]$ and $p \in [n-i+1]$. It first visits~$v_i$, followed by its neighbor~$f_i^p$, and then all vertices of~$U$, starting with~$z=u_{4n}$.
When~$z$ is visited, it is adjacent to~$f_i^p$ and possibly also to~$v_i$, whereas no other unvisited vertex is adjacent to both~$v_i$ and~$f_i^p$. Therefore,~$\sigma_{\le 3}$ is a partial MCS-ordering of~$H$. Moreover, since~$U\cup\{f_i^p\}$ is a clique in~$H$, the ordering remains a partial MCS-ordering throughout the visit of the vertices in~$U\setminus\{z\}$.

Next,~$\sigma$ visits the vertices in the independent set $W(i)\cup F_i \setminus \{f_i^p\}$.
Each such vertex has $\{v_i\} + \abs{U} = 4n + 1$ visited neighbors, and no remaining vertex has more.
Therefore, the ordering remains a partial MCS-ordering of~$H$ throughout the visit of these vertices.

Afterward,~$\sigma$ visits the vertices in the independent set $(W\setminus W(i))\cup (F\setminus F_i)$. Each such vertex has~$\abs{U}=4n$ visited neighbors, which is maximum among the remaining vertices. Thus, the ordering remains a partial MCS-ordering throughout the visit of these vertices.

Finally, the vertices in~$V \setminus \{v_i\}$ are visited in increasing order of their indices.
Consider the moment when some vertex~$v_j$, where $j \in [n] \setminus \{i\}$, is about to be visited.
Since~$V$ is a clique in~$H$, $v_j$ has $j - 1$ visited neighbors in~$V$ if $j > i$, and $j$ visited neighbors otherwise.
Moreover, it has $\abs{F_j} = n - j + 1$ visited neighbors in~$F$, and six visited neighbors in~$W$.
It also has one visited neighbor in~$U$ if $v_j \in S$, and none otherwise.
The number of visited neighbors of any vertex~$v_{j'}$ with $j'>j$ at the moment when~$v_j$ is visited can be similarly calculated. Table~\ref{tab-a} summarizes their visited-neighbor counts.
\begin{table}[ht!]
\begin{center}
\begin{tabular}{|c|c|c|}
\hline
 & $j < i$ & $j > i$ \\ \hline
$v_j \in S$ & $n + 8$ & $n + 7$ \\ \hline
$v_j \notin S$ & $n + 7$ & $n + 6$ \\ \hline
\end{tabular}\ \ \ \ \
\begin{tabular}{|c|c|c|}
\hline
 & $j' < i$ & $j' > i$ \\ \hline
$v_{j'} \in S$ & $n + 8 + j - j'$ & $n + 7 + j - j'$ \\ \hline
$v_{j'} \notin S$ & $n + 7 + j - j'$ & $n + 6 + j - j'$ \\ \hline
\end{tabular}
\end{center}
\caption{Visited-neighbor counts of~$v_j$ (left) and~$v_{j'}$ with~$j'>j$ (right) when~$v_j$ is about to be visited.}
\label{tab-a}
\end{table}
A straightforward verification shows that every vertex~$v_{j'}$ with $j'>j$ has at most as many visited neighbors as~$v_j$. Therefore,~$v_j$ is a valid MCS choice when selected, and the ordering remains a partial MCS-ordering of~$H$ throughout the visit of the vertices in~$V\setminus\{v_i\}$.
%

\item {Orderings in group (vi).}

This group contains, for every edge $e_x = \edge{v_i}{v_j}$ with $i < j$, four orderings. By symmetry, it suffices to show that the first ordering is an MCS-ordering of~$H$.

Let~$\sigma$ denote the first ordering, whose prefix is $(w_x, v_i, v_j, z, w_{m+x})$. Since~$w_x$,~$v_i$, and~$v_j$ form a clique in~$H$,~$\sigma_{\le 3}$ is a partial MCS-ordering of~$H$.

Since~$S$ is a vertex cover of~$G$,~$z$ is adjacent to at least one of~$v_i$ and~$v_j$. Together with its adjacency to~$w_x$, this implies that~$z$ has at least two visited neighbors when selected. No unvisited vertex is adjacent to all three of~$w_x$,~$v_i$, and~$v_j$, and therefore~$\sigma_{\le 4}$ is a partial MCS-ordering of~$H$.

The vertex~$w_{m+x}$ has three visited neighbors $v_i$,~$v_j$, and~$z$ when selected, and no unvisited vertex has more. Hence~$\sigma_{\le 5}$ is a partial MCS-ordering of~$H$.

A routine verification analogous to the previous cases shows that the remaining steps in~$\sigma$ preserve the MCS-ordering property.  
\end{itemize}

We have shown that~$H$ has~$k$ edges and admits every ordering in~$\Pi$ as an MCS-ordering. Therefore,~$g(I)$ is a {\yesins}.

$(\Leftarrow)$ Assume that~$g(I)$ is a {\yesins}, and let~$H$ be an MCS-support of~$\Pi$ with the minimum number of edges.
Then $\abs{\eset{H}} \leq k$.
For every edge of~$G'$, we constructed an ordering in~$\Pi$ whose first two vertices are the endpoints of that edge.
It follows that $\eset{G'} \subseteq \eset{H}$.

Now consider any edge $e_x = \edge{v_i}{v_j}$ in~$G$.
In the first ordering of Group~(vi) associated with~$e_x$, the first five vertices are $(w_x, v_i, v_j, z, w_{m+x})$.
After~$v_j$ is visited, the vertex~$w_{m+x}$ has exactly two visited neighbors, namely~$v_i$ and~$v_j$.
Since~$z$ is visited before~$w_{m+x}$ and the ordering is an MCS-ordering of~$H$, the vertex~$z$ must also have at least two visited neighbors at that moment.
It follows that~$z$ is adjacent to at least one of~$v_i$ and~$v_j$.
Let $S=\neighbor{z}{H} \cap V$.
Since~$e_x$ was arbitrary, $S$  is a vertex cover of~$G$.
Moreover, as~$z$ has no neighbor in~$V$ in~$G'$, every vertex in~$S$ contributes a distinct edge in~$\eset{H}\setminus\eset{G'}$.  Hence $\abs{S} \leq k - \abs{\eset{G'}} = \kappa$.
This establishes that~$I$ is a {\yesins} of {\prob{Vertex Cover}}.
\end{proof}

By appropriately adjusting the value of $k$ in the above reduction, we obtain the following result.

\begin{theorem}
\label{thm-mcs-degree-npc}
	\prob{Deg-Bounded-MCS-SP} is {\nph}.
\end{theorem}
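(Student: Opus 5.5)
We keep the reduction from the proof of Theorem~\ref{thm-mcs-edges-npc} unchanged: the same graph~$G'$ and the same six groups of orderings~$\Pi$. Only the bound changes, to
\[
k=\degree{z}{G'}+\kappa=(4n-1)+2m+\tfrac{n(n+1)}{2}+\kappa .
\]
Here $z=u_{4n}$ is adjacent in~$G'$ to the other $4n-1$ vertices of the clique~$U$, to all $2m$ vertices of~$W$, and to all $\sum_{i\in[n]}(n-i+1)=n(n+1)/2$ vertices of~$F$. The idea is that~$z$ is the unique bottleneck for the degree, just as in Theorem~\ref{thm-mcs-edges-npc} it was the only vertex receiving extra edges.

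\emph{$(\Rightarrow)$.} Let~$S$ be a vertex cover of~$G$ of size~$\kappa$, and let~$H$ be~$G'$ plus the edges between~$z$ and~$S$, exactly as before. The proof of Theorem~\ref{thm-mcs-edges-npc} already shows that~$H$ is an MCS-support of~$\Pi$, so only the degree bound needs checking.
\begin{itemize}
\item The vertex~$z$ has degree exactly~$k$.
\item Every other vertex of~$U$ has degree $\degree{z}{G'}<k$.
\item A vertex~$v_i\in V$ has degree at most $(n-1)+6+(n-i+1)+1\le 2n+7$. The terms count $n-1$ neighbors in~$V$, six in~$W$ (since $G$ is $3$-regular), $n-i+1$ in~$F_i$, and possibly~$z$. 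This is below~$k$ because $k\ge 4n-1+n(n+1)/2$ and $n\ge4$.
\item Each vertex of~$W$ has degree $2+4n$, and each vertex of~$F$ has degree $1+4n$; both are below~$k$.
\end{itemize}
Hence $H$ has maximum degree at most~$k$.

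\emph{$(\Leftarrow)$.} Let~$H$ be an MCS-support of~$\Pi$ with maximum degree at most~$k$.
\begin{enumerate}
\item For every edge of~$G'$, some ordering in~$\Pi$ starts with its two endpoints. By Observation~\ref{ob-connected}, this gives $\eset{G'}\subseteq\eset{H}$; this step never used edge-minimality.
\item Consequently~$z$ already has $\degree{z}{G'}=k-\kappa$ neighbors in~$H$ outside~$V$, because in~$G'$ it has no neighbor in~$V$. So $S=\neighbor{z}{H}\cap V$ satisfies $\abs{S}\le\kappa$.
\item To see that~$S$ covers every edge $e_x=\edge{v_i}{v_j}$ of~$G$, we reuse the Group~(vi) argument. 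In the ordering with prefix $(w_x,v_i,v_j,z,w_{m+x})$, the vertex~$w_{m+x}$ has at least the two visited neighbors~$v_i,v_j$ at the moment~$z$ is chosen, since these edges lie in~$G'$.
\item The MCS rule then forces~$z$ to have at least two visited neighbors among $\{w_x,v_i,v_j\}$. Since $w_x$ alone gives only one, $z$ must be adjacent to~$v_i$ or to~$v_j$.
\end{enumerate}
This argument only uses lower bounds on neighbor counts, so it is unaffected by any extra edges~$H$ may contain. Padding~$S$ to exactly~$\kappa$ vertices gives the required vertex cover.

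The main obstacle is only bookkeeping: fixing~$k$ correctly and confirming that no vertex other than~$z$ can reach degree~$k$ in the forward construction. Everything else transfers verbatim from the proof of Theorem~\ref{thm-mcs-edges-npc}.
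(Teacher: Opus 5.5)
Your proposal is correct and follows the paper's proof: the paper likewise keeps the reduction of Theorem~\ref{thm-mcs-edges-npc} unchanged and sets $k=2m+4n-1+\frac{n(n+1)}{2}+\kappa$, the degree of $z$ in $G'$ plus $\kappa$, noting that all additional edges are incident with $z$. Your explicit degree check for the vertices other than $z$, and your remark that the backward argument never uses edge-minimality, spell out details the paper leaves implicit.
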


\begin{proof}
Set $k = 2m + 4n - 1+\frac{n\cdot (n+1)}{2}+ \kappa$
which equals the degree of~$z$ in~$G'$ plus~$\kappa$. Since every additional edge used in the proof of Theorem~\ref{thm-mcs-edges-npc} is incident with~$z$, the same reduction and correctness argument apply verbatim.
\end{proof}

We now turn our attention to the final graph search, MNS, considered in this paper.

\begin{theorem}
\label{thm-mns-edges-npc}
	\prob{Edge-Bounded-MNS-SP} is {\nph}.
\end{theorem}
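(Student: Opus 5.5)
We will reduce from \prob{Vertex Cover} on $3$-regular graphs, following the template used for LexDFS in Theorem~\ref{thm-lexdfs-npc} and for MCS in Theorem~\ref{thm-mcs-edges-npc}. We build a graph~$G'$ whose edges are all forced into every MNS-support of minimum size, together with a vertex~$z$. The orderings are designed so that a support with $\abs{\eset{G'}}+\kappa$ edges exists if and only if the extra edges at~$z$ can encode a vertex cover.

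The natural first attempt is to reuse the MCS construction of Theorem~\ref{thm-mcs-edges-npc} verbatim. That construction has the following ingredients:
\begin{itemize}
\item the clique~$V$;
\item the vertices $w_x,w_{m+x}$ adjacent to both endpoints of~$e_x$;
\item the pendant sets~$F_i$;
\item the clique~$U$ complete to $W\cup F$, with $z=u_{4n}$.
\end{itemize}
Its six groups of orderings force $\eset{G'}\subseteq\eset{H}$ via the first two vertices (Observation~\ref{ob-connected}).

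For the $(\Rightarrow)$ direction we take $H=G'$ plus the edges from~$z$ to a vertex cover~$S$. Every MCS-ordering is an MNS-ordering by Lemma~\ref{lem-graph-traversal-relation}, and the MCS proof already showed that all orderings in~$\Pi$ are MCS-orderings of~$H$. So this direction is immediate.

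The $(\Leftarrow)$ direction is where MNS differs. The MCS argument compared neighbor counts: $z$ preceding $w_{m+x}$ forced $z$ to have at least two visited neighbors. Under MNS we must instead use Property~M (Lemma~\ref{lem-four-vertex}, Statement~\ref{four-point-mns}).

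Take the prefix $(w_x,v_i,v_j,z,w_{m+x})$ and apply Property~M with $a=v_i$, $b=z$, $c=w_{m+x}$. Since $\edge{v_i}{w_{m+x}}\in\eset{H}$, if $\edge{v_i}{z}\notin\eset{H}$ there must be some $d<_\sigma z$ with $\edge{d}{z}\in\eset{H}$ and $\edge{d}{w_{m+x}}\notin\eset{H}$. The candidates for~$d$ are $w_x$ and~$v_j$.
\begin{itemize}
\item If $d=v_j$, then $v_j$ must be non-adjacent to $w_{m+x}$, which contradicts $\eset{G'}\subseteq\eset{H}$.
\item If $d=w_x$, the argument breaks: $w_x$ is adjacent to~$z$ in~$G'$ and is not adjacent to~$w_{m+x}$ unless that edge was added.
\end{itemize}
The case $d=w_x$ is the main obstacle, and the construction must be modified to remove it.

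I would try the following modification. Drop the adjacency between~$z$ and~$W$ by taking~$z$ outside~$U$, as an isolated vertex of~$G'$ in the style of Theorem~\ref{thm-bfs-npc}. To keep $w_x$ out of the role of~$d$, start each gadget ordering with~$v_i$ rather than~$w_x$, e.g.\ $(v_i,v_j,\dots,z,w)$ for some $w$ adjacent to both endpoints. Then every vertex preceding~$z$ other than $v_i,v_j$ is forced (by earlier claims) to be adjacent to~$w$ or non-adjacent to~$z$.

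If a leftover case still lets~$z$ attach to a gadget vertex instead of to~$V$, I would handle it by charging, as in the proof of Theorem~\ref{thm-lexdfs-npc}. There, an uncovered edge~$e_p$ forced an edge $\edge{z}{u_p}$ with~$u_p$ private to~$e_p$. Here too I would ensure each uncovered edge forces a private extra edge at~$z$, so that swapping it for an endpoint keeps the cover size at most~$\kappa$.

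The remaining work is then routine, in two parts:
\begin{itemize}
\item Re-verify the $(\Rightarrow)$ direction for the modified orderings by checking label inclusions step by step, using Lemma~\ref{lem-prefix} so that only prefixes need to be checked.
\item Re-check the forcing of $\eset{G'}$ through the first-two-vertex orderings.
\end{itemize}
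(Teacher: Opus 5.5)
Your diagnosis of the MCS construction is correct. After $w_x,v_i,v_j$ have been visited, no unvisited vertex other than $z$ is adjacent both to $w_x$ and to an endpoint, so $z\in U$ with label $\{w_x\}$ is already inclusion-maximal and nothing pushes $z$ into $V$.

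\textbf{The repair does not work.} The failure is in the $(\Rightarrow)$ direction, which you call routine.
\begin{itemize}
\item Suppose $z$ is isolated in $G'$ and the gadget ordering starts $(v_i,v_j,\dots)$. While some vertex of the clique $V$ is still unvisited, its label contains every visited vertex of $V$. Since $z$ has no neighbours outside $V$, it must then be adjacent to \emph{all} visited vertices of $V$, in particular to both $v_i$ and $v_j$.
\item If instead all of $V$ precedes $z$, any other neighbour of $z$ in $V$ is a valid witness $d$ in Property~M, and nothing is forced.
\item Independently, choosing the comparison vertex $c=w$ adjacent to \emph{both} endpoints makes the last candidate $d=v_j$ inadmissible. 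Property~M then gives $\{v_i,z\}\in E(H)$ outright, not the disjunction ``$z\sim v_i$ or $z\sim v_j$''.
\end{itemize}
So your forcing is either too strong, in which case $G'$ plus edges from $z$ to a vertex cover is not an MNS-support, or it is absent. The charging fallback repairs neither case.

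\textbf{The missing ingredient} is an anchor vertex that precedes $z$, is adjacent to $z$ in $G'$, and misses the rest of $V$. It must be combined with comparison vertices that see exactly \emph{one} endpoint. The paper does this as follows.
\begin{itemize}
\item For each edge $e_x=\{v_i,v_j\}$ it adds a vertex $t_x$ adjacent to $v_i$ and $v_j$.
\item It adds vertices $w_x,w_{m+x}$ with $N(w_x)=\{t_x,v_i\}$ and $N(w_{m+x})=\{t_x,v_j\}$.
\item It puts $z$ in a clique $U$ complete to $T$. So $\{z,t_x\}$, like every edge of $G'$, is forced by an ordering starting with its two endpoints.
\item It uses the prefix $(t_x,v_i,v_j,z)$.
\end{itemize}
If $z$ sees no endpoint, its label $\{t_x\}$ is strictly contained in the label $\{t_x,v_i\}$ of $w_x$. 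If it sees one endpoint, its label $\{t_x,v_i\}$ or $\{t_x,v_j\}$ is maximal, because the other vertices of $V$ are not adjacent to $t_x$ and $w_x,w_{m+x}$ at best tie.

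In your Property~M terms with $c=w_x$: $d=t_x$ is blocked because $t_x\sim w_x$, while $d=v_j$ is admissible exactly when $z\sim v_j$, because $w_x\not\sim v_j$. This yields the disjunction, so $S=N_H(z)\cap V$ is directly a vertex cover of size at most $\kappa$, and no charging step is needed.
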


\begin{proof}
Let $I=(G, \kappa)$ be an instance of {\prob{Vertex Cover}} on $3$-regular graphs. We assume that~$G$ is connected.
Define~$G'$ as the graph obtained from~$G$ by performing the following operations:
\begin{enumerate}[(1)]
\item For each edge of~$G$, introduce three vertices and partition them into two sets. Precisely, we introduce a set $T=\{t_x \setmid x\in [m]\}$ of~$m$ vertices, and introduce a set $W =\{w_i \setmid i\in [2m]\}$ of~$2m$ vertices.
	\item For every edge $e_x=\edge{v_i}{v_j}$ in~$G$, where $x\in [m]$, $\{i,j\}\subseteq [n]$, add the six edges $\edge{v_i}{w_x}$, $\edge{w_x}{t_x}$,  $\edge{t_x}{w_{m+x}}$,
        $\edge{w_{m+x}}{v_j}$,
         $\edge{v_j}{t_x}$, and $\edge{t_x}{v_i}$. See Figure~\ref{fig-mns-a}.
       \begin{figure}[H]
       \centering
        \includegraphics[width=0.25\textwidth]{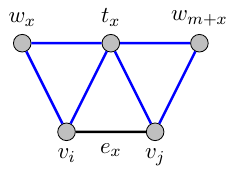}
        \caption{An illustration of the addition of the six edges (colored in blue) in (2).}
        \label{fig-mns-a}
    \end{figure}

	\item Add edges to make~$V$ a clique.

	\item For every $i \in [n]$, add a vertex~$f_i$, and add an edge between~$f_i$ and~$v_i$.
    Let $F = \{f_i \setmid i \in [n]\}$.

	\item Add a clique $U=\{u_i \setmid i\in[3n]\}$ of~$3n$ vertices. Let $z = u_{3n}$. Make every vertex of~$U$ adjacent to every vertex of~$T$, and every vertex of~$U\setminus\{z\}$ adjacent to every vertex of~$F$.
\end{enumerate}

Observe that~$W$ is an independent set in~$G'$ and every vertex of~$W$ has degree two.

Let $\overrightarrow{W} = (w_1, w_2, \ldots, w_{2m})$, $\overrightarrow{T} =(t_1, t_2, \ldots, t_m)$, $\overrightarrow{U} = (u_1, u_2, \ldots, u_{3n})$, and $\overrightarrow{F} = (f_1, f_2, \ldots, f_n)$.  For each $i \in [n]$, let $T(i) = \{t_x \setmid x\in [m], v_i\in e_x\}$ denote the set of vertices in~$T$ corresponding to edges of~$G$ incident to~$v_i$. For  $\{i,j\}\subseteq [n]$, let $T(i,j)=T(i)\cup T(j)$, and let $\overrightarrow{T(i,j)}$ be an arbitrary  ordering of~$T(i,j)$.
Define $k=\kappa+\abs{\eset{G'}}$. We construct a set~$\Pi$ of orderings partitioned into six groups:
\begin{enumerate}[(i)]
	\item For each pair $\{i, j\} \subseteq [n]$ with $i < j$, add the ordering
	\[\left(v_i,v_j, \overrightarrow{V} \ominus \{v_i,v_j\}, z, \overrightarrow{T}, \overrightarrow{U}\ominus\{z\}, \overrightarrow{F}, \overrightarrow{W}\right).\]
	
	\item For each pair $\{i, j\} \subseteq [3n]$ with $i < j$, add the ordering
	\[\left(u_i, u_j, \overrightarrow{U} \ominus \{u_i, u_j\}, \overrightarrow{T}, \overrightarrow{F}, \overrightarrow{V}, \overrightarrow{W}\right).\]
	
	\item For each $i \in [m]$ and each $j \in [3n]$, add the ordering
	\[\left(t_i, u_j, \overrightarrow{U} \ominus \{u_j\}, \overrightarrow{T} \ominus \{t_i\}, \overrightarrow{F}, \overrightarrow{V}, \overrightarrow{W}\right).\]
	
	\item For each $i \in [n]$ and each $j \in [3n-1]$, add  the ordering
	\[\left(f_i, u_j, \overrightarrow{U} \ominus \{u_j, z\}, z, \overrightarrow{T}, \overrightarrow{F}\ominus\{f_i\}, \overrightarrow{V}, \overrightarrow{W}\right).\]
	
	\item For each $i \in [n]$, add the ordering
	\[\left(v_i, f_i, \overrightarrow{U} \ominus \{z\}, \overrightarrow{T(i)}, z, \overrightarrow{T} \ominus T(i), \overrightarrow{F} \ominus \{f_i\}, \overrightarrow{V} \ominus\{v_i\}, \overrightarrow{W}\right).\]
	
	\item For each edge $e_x = \edge{v_i}{v_j}$ in~$G$, where $x \in [m]$ and $\{i, j\} \subseteq [n]$ with $i < j$, we define \[\Omega=\left(\overrightarrow{U} \ominus\{z\}, \overrightarrow{T(i,j)} \ominus \{t_x\},
    \overrightarrow{T}\ominus {T(i,j)}, f_i, f_j, \overrightarrow{F}\ominus\{f_i, f_j\},
   \overrightarrow{V} \ominus \{v_i, v_j\}\right),\]
and add the following six orderings:
\begin{itemize}
    \item $\left(w_x, t_x, v_i, v_j, z, \Omega, \overrightarrow{W} \ominus \{w_x\}\right)$
	
        \item $\left(w_x, v_i, t_x, v_j, z, \Omega, \overrightarrow{W}\ominus \{w_x\}\right)$

     \item $\left(w_{m+x}, t_x, v_j, v_i, z,  \Omega, \overrightarrow{W}\ominus \{w_{m+x}\}\right)$

	    \item $\left(w_{m+x}, v_j, t_x, v_i, z,\Omega, \overrightarrow{W}\ominus \{w_{m+x}\}\right)$
	
    \item $\left(t_x, v_i, v_j, z, \Omega, \overrightarrow{W}\right)$
	
	\item $\left(t_x, v_j, v_i, z, \Omega, \overrightarrow{W}\right)$.
\end{itemize}
	%
   %
 %
	%
	%

The purpose of this group is to force every MNS-support of~$\Pi$ to contain all edges in~$G'$ introduced for the edges in~$G$ (the blue edges in Figure~\ref{fig-mns-a}). Only the displayed prefixes are relevant; the remaining vertices are ordered according to an  MNS-ordering of~$G'$ extending the specified prefix.
\end{enumerate}
Let $g(I)=(\vset{G'}, \Pi, k)$.
We show that~$I$ is a {\yesins} of {\prob{Vertex Cover}} if and only if~$g(I)$ is  a {\yesins} of \prob{Edge-Bounded-MNS-SP}.

$(\Rightarrow)$ Assume that~$G$ has a vertex cover~$S$ of~$\kappa$ vertices. Let~$H$ be the graph obtained from~$G'$ by adding the~$\kappa$ edges between~$z$ and the vertices in~$S$. To prove that~$g(I)$ is a {\yesins}, it suffices to show that every ordering in~$\Pi$ is an MNS-ordering of~$H$.
We analyze the six groups of orderings separately.

\begin{itemize}
    \item Orderings in group~(i).

    Consider the ordering~$\sigma$ constructed for a pair $\{i, j\} \subseteq [n]$ with $i<j$. It first visits the vertices of~$V$, starting with~$v_i$ and~$v_j$. As~$V$ is a clique in~$H$, $\sigma_{\le n}$ is a partial MNS-ordering.
Next,~$z$ is visited. Since its visited-neighbor set is~$S$ and no unvisited vertex has~$S$ as a subset of its visited-neighbor set,~$z$ has an inclusion-maximal visited-neighbor set. Thus~$\sigma_{\le n+1}$ is a partial MNS-ordering of~$H$.

The ordering then visits the vertices in~$T$. Each vertex~$t_x\in T$ has exactly two visited neighbors, namely the endpoints of the edge corresponding to~$t_x$, whereas every remaining unvisited vertex has at most one visited neighbor. Since~$T$ is an independent set, the ordering remains a partial MNS-ordering throughout this stage.

Next, the vertices from the clique~$U\setminus\{z\}$ are visited one after another. Each vertex of~$U\setminus\{z\}$ is adjacent to all previously visited vertices in~$T\cup\{z\}$. In contrast, no remaining unvisited vertex is adjacent to all vertices in~$T\cup\{z\}$. Hence the vertices in~$U\setminus\{z\}$ have inclusion-maximal visited-neighbor sets at this point. Since~$U\setminus\{z\}$ is a clique, the ordering remains a partial MNS-ordering throughout their visit.

Subsequently, the vertices from $F$ are visited, followed by those from~$W$.
By an analogous argument, the ordering continues to be a partial MNS-ordering of~$H$ throughout these visits.

    \item Orderings in group~(ii).

Consider the ordering~$\sigma$ constructed for a pair $\{i, j\} \subseteq [3n]$ with~$i < j$. It first visits the vertices of the clique~$U$, followed by the vertices of the independent set~$T$, each of which is adjacent to all vertices in~$U$. Consequently,~$\sigma$ remains a partial MNS-ordering throughout the visit of these vertices.

Next,~$\sigma$ visits the vertices of the independent set~$F$. Each vertex of~$F$ is adjacent to all vertices in~$U\setminus\{z\}$, whereas no remaining unvisited vertex is. Hence~$\sigma$ remains a partial MNS-ordering throughout this stage.

After that,~$\sigma$ visits the vertices of the clique~$V$. Each vertex~$v_x\in V$ has the visited neighbor~$f_x$, whereas no vertex in~$W$ is adjacent to~$f_x$. Hence the vertices of~$V$ have inclusion-maximal visited-neighbor sets and can be visited in the prescribed order. Finally,~$\sigma$ visits the vertices of the independent set~$W$, each of which has exactly two visited neighbors. Therefore,~$\sigma$ remains a partial MNS-ordering throughout this stage and is thus an MNS-ordering of~$H$.

    \item Orderings in group~(iii).

Consider the ordering~$\sigma$ constructed for some $i\in [m]$ and $j\in [3n]$. It first visits the vertices in~$\{t_i\}\cup U$, which induce a clique in~$H$. Hence~$\sigma$ remains a partial MNS-ordering throughout this stage.
The remaining vertices are visited in the same relative order as in the ordering of Group~(ii), and the same argument shows that~$\sigma$ is an MNS-ordering of~$H$.

    \item Orderings in group~(iv).

Consider the ordering~$\sigma$ constructed for some $i\in[n]$ and $j\in[3n-1]$. It first visits the~$3n$ vertices in~$\{f_i\}\cup(U\setminus\{z\})$, which induce a clique in~$H$. Hence~$\sigma_{\le 3n}$ is a partial MNS-ordering.
Next,~$z$ is visited. Since~$z$ is adjacent to all previously visited vertices except~$f_i$, whereas no remaining unvisited vertex is adjacent to all previously visited vertices,~$\sigma_{\le 3n+1}$ is also a partial MNS-ordering.
The remaining vertices are visited in the same relative order as in the ordering of Group~(ii). An argument analogous to that used there shows that~$\sigma$ is an MNS-ordering of~$H$.

    \item Orderings in group~(v).

Consider the ordering~$\sigma$ constructed for an integer~$i\in[n]$. It first visits~$v_i$ and~$f_i$, which are adjacent in~$H$. Therefore,~$\sigma_{\le 2}$ is a partial MNS-ordering.

Next,~$\sigma$ visits the vertices in the clique~$U\setminus\{z\}$, each of which is adjacent to~$f_i$. Since~$f_i$ is not adjacent to any of the remaining unvisited vertices, the vertices in~$U\setminus\{z\}$ have inclusion-maximal visited-neighbor sets. Hence~$\sigma_{\le 3n+1}$ remains a partial MNS-ordering.

Afterward,~$\sigma$ visits the vertices in the independent set~$T(i)$. By construction, each vertex in~$T(i)$ is adjacent to all vertices in~$\{v_i\}\cup(U\setminus\{z\})$, that is, to all previously visited vertices except~$f_i$. Since no remaining unvisited vertex is adjacent to all these vertices,~$\sigma$ remains a partial MNS-ordering throughout their visit.

Next,~$z$ is visited. Since~$z$ is adjacent to all previously visited vertices except~$f_i$, whereas no unvisited vertex is,~$\sigma$ remains a partial MNS-ordering.

The remainder of the ordering follows the same pattern as in Group~(i). An argument analogous to that used there shows that~$\sigma$ is an MNS-ordering of~$H$.

    \item Orderings in the group (vi).

Let $e_x = \edge{v_i}{v_j}$ be an edge of~$G$ with $i < j$.
We now show that the six orderings constructed for this edge in the group
are MNS-orderings of~$H$.

Consider the first ordering~$\sigma$ with the prefix
$(w_x, t_x, v_i, v_j, z)$.
The vertices~$w_x$,~$t_x$, and~$v_i$ form a clique in~$H$. Since~$v_j$ is adjacent to~$t_x$ and~$v_i$, and~$w_x$ has exactly two neighbors, namely~$v_i$ and~$t_x$, in~$H$, it follows that~$\sigma_{\le 4}$ is a partial MNS-ordering.

After~$v_j$, the ordering visits~$z\in U$.
Since~$S$ is a vertex cover,~$z$ is adjacent to at least one of~$v_i$ and~$v_j$. Moreover,~$z$ is adjacent to~$t_x$, and no remaining unvisited vertex is adjacent to~$w_x$. Therefore, any unvisited vertex whose visited-neighbor set strictly contains that of~$z$ would have to be adjacent to~$t_x$,~$v_i$, and~$v_j$. However, no remaining vertex has this property: vertices in~$V\setminus\{v_i,v_j\}$ are adjacent to~$v_i$ and~$v_j$ but not to~$t_x$, whereas all other remaining vertices are not adjacent to both~$v_i$ and~$v_j$. Hence~$z$ has an inclusion-maximal visited-neighbor set, and~$\sigma_{\le5}$ is a partial MNS-ordering.

The remaining vertices are visited according to
$(\Omega,\overrightarrow{W}\ominus\{w_x\})$.
Arguments analogous to those used in the previous groups show that each subsequent step preserves the partial MNS property. Hence~$\sigma$ is an MNS-ordering of~$H$.

The second, third, and fourth orderings are handled analogously by symmetry.

Consider now the fifth ordering~$\sigma$. It first visits~$t_x$,~$v_i$, and~$v_j$. Since these vertices form a clique in~$H$,~$\sigma_{\le 3}$ is a partial MNS-ordering. Next,~$z$ is visited. Since~$z$ is adjacent to~$t_x$ and to at least one of~$v_i$ and~$v_j$, whereas no unvisited vertex is adjacent to all three previously visited vertices,~$z$ has an inclusion-maximal visited-neighbor set. Hence~$\sigma_{\le 4}$ is a partial MNS-ordering.
The remaining vertices are visited as in the first ordering. An argument analogous to the previous ones shows that~$\sigma$ is an MNS-ordering of~$H$.
The sixth ordering is handled analogously by symmetry.
\end{itemize}

Thus,~$H$ is an MNS-support of~$\Pi$. Since~$\abs{\eset{H}}=\kappa+\abs{\eset{G'}}=k$, it follows that~$g(I)$ is a {\yesins}.
	
$(\Leftarrow)$ Assume that~$g(I)$ is a {\yesins}, and let~$H$ be an MNS-support of~$\Pi$ with at most~$k$ edges.
From the first two vertices of the constructed orderings, it follows that~$\eset{G'}\subseteq\eset{H}$.
Let~$e_x=\edge{v_i}{v_j}$ be any edge of~$G$, where $x\in[m]$ and $1\leq i<j\leq n$.
Consider the fifth ordering in Group~(vi).
After~$t_x$,~$v_i$, and~$v_j$ have been visited, the next vertex is~$z$.
Since~$w_x$ is adjacent to~$t_x$ and~$v_i$, the vertex~$z$ can have an inclusion-maximal visited-neighbor set only if it is adjacent to at least one of~$v_i$ and~$v_j$.
Hence~$z$ is adjacent in~$H$ to at least one endpoint of every edge of~$G$.
Let~$S=\neighbor{z}{H}\cap V$.
Then~$S$ is a vertex cover of~$G$.
Moreover, as~$z$ has no neighbors in~$V$ in~$G'$, every vertex in~$S$ contributes a distinct edge in~$\eset{H}\setminus\eset{G'}$.
Therefore, $\abs{S}\le \abs{\eset{H}}-\abs{\eset{G'}}\le k-\abs{\eset{G'}}=\kappa$ and $I$ is a {\yesins}.
\end{proof}

By introducing multiple pendant vertices attached to the vertex~$z$ in the above reduction, and adding one additional ordering for each $i \in [n]$, we obtain the following result.

\begin{theorem}
\label{thm-mns-degree-npc}
\prob{Deg-Bounded-MNS-SP} is {\nph}.
\end{theorem}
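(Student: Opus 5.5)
We adapt the reduction from the proof of Theorem~\ref{thm-mns-edges-npc}, following the hint preceding the statement. Starting from the graph~$G'$ constructed there, we add a set~$Z=\{z_1,\dots,z_r\}$ of pendants of~$z$. We take~$r$ large enough that, in~$G'$, the vertex~$z$ (with degree $\abs{T}+\abs{U}-1+r = m+3n-1+r$) is the unique vertex of maximum degree. We then set $k=\deg_{G'}(z)+\kappa$. Every ordering of Groups~(i)--(vi) is extended by appending~$\overrightarrow{Z}$ at the end. This is harmless in the forward direction: each~$z_\ell$ is a pendant of~$z$, which has been visited earlier, so the~$z_\ell$ all have identical visited-neighbor sets~$\{z\}$ and form an independent set. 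In addition, for each $i\in[n]$ we add one new ordering. Its purpose is to force $\edge{z}{z_\ell}$ for every~$\ell$ and to forbid any vertex of~$Z$ from being adjacent to a vertex of~$V$. A natural choice is the prefix $(z,z_1,\dots,z_r,\ldots)$, with the rest taken as an MNS-ordering of~$G'$ extending it (Lemma~\ref{lem-prefix}). A more careful choice starts with~$v_i$, uses Group~(v)-style prefixes, and then lists~$z$ followed by all of~$Z$. The first pair of each such ordering, or the connectivity of prefixes (Observation~\ref{ob-connected}), gives $\edge{z}{z_\ell}\in\eset{H}$.

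For $(\Rightarrow)$, let~$S$ be a vertex cover. We let~$H$ be~$G'$ plus the edges between~$z$ and~$S$. The degree of~$z$ is then exactly~$k$. All other degrees stay below~$k$ provided~$r$ is chosen suitably; for instance, vertices of~$V$ have degree about $n+7$, and those of~$U$ have degree about $3n+m+n$. The verification that every ordering in~$\Pi$ is an MNS-ordering of~$H$ is verbatim the argument from Theorem~\ref{thm-mns-edges-npc}, plus the trivial tail~$\overrightarrow{Z}$ and the new orderings.

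For $(\Leftarrow)$, let~$H$ be an MNS-support with maximum degree at most~$k$. As before, the first two vertices of the orderings in Groups~(i)--(v), and Group~(vi) via the four-point property~M of Lemma~\ref{lem-four-vertex}, give $\eset{G'}\subseteq\eset{H}$. Together with the new orderings, this shows that~$z$ is adjacent in~$H$ to all of $T\cup(U\setminus\{z\})\cup Z$, which already uses $k-\kappa$ of its degree budget. The argument with the fifth ordering of Group~(vi) is unchanged, since appending~$Z$ affects nothing before~$z$. Hence $S=\neighbor{z}{H}\cap V$ is a vertex cover. Its size is at most $\kappa$ by the degree bound on~$z$, because $V\cap(T\cup U\cup Z)=\emptyset$.

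The main obstacle is the forward direction for the new orderings. An ordering that forces all of~$Z$ to be adjacent to~$z$ must remain MNS-consistent in~$H$ while visiting~$Z$ early, when the~$z_\ell$ have visited neighborhood~$\{z\}$. Other unvisited vertices adjacent to~$z$ (in~$T$, $U$, or~$S$) then have incomparable or larger neighborhoods, and these must not block the choice. I would first try the prefix $(z_\ell,z,\dots)$ for each~$\ell$ and check maximality. If that fails, I would instead place~$Z$ after~$U$ and~$T$, so that by then the~$z_\ell$ are the only vertices still having a visited neighbor. I would also double-check that the degree bound~$k$ is not exceeded by vertices of~$U$, adjusting~$r$ upward if necessary.
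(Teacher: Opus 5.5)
Your overall architecture matches the paper's. You add pendants $Z$ on $z$, set the budget $k=\deg_{G'}(z)+\kappa$ (your degree bookkeeping is fine; any $r\ge n$ works), append $\overrightarrow{Z}$ to every old ordering, and bound $\abs{\neighbor{z}{H}\cap V}\le\kappa$ once $z$ is known to be adjacent to all of $T\cup(U\setminus\{z\})\cup Z$. The gap is the one genuinely new ingredient. You never fix orderings that both force \emph{every} edge $\edge{z}{z_\ell}$ and are MNS-orderings of the intended $H$.

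Your committed ``natural choice'' $(z,z_1,\dots,z_r,\dots)$ forces $\edge{z}{z_1}$ by its first pair and no other edge to $Z$. Observation~\ref{ob-connected} only says each $z_\ell$ has \emph{some} earlier neighbor. For instance, suppose $Z$ hangs off $z$ as the path $z,z_1,\dots,z_r$. Then each $z_\ell$ with $\ell\ge 2$ is selected when its visited neighborhood is $\{z_{\ell-1}\}$, which no other unvisited vertex contains. So the MNS condition holds at every $Z$-step, both in this ordering and in every appended tail $\overrightarrow{Z}$. Yet $z$ now spends only one unit of degree on $Z$, leaving room for $\kappa+r-1$ neighbors in $V$, and your bound $\abs{S}\le\kappa$ no longer follows.

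Your ``more careful'' variant fails already in the forward direction. After $(v_i,f_i,\overrightarrow{U}\ominus\{z\},\overrightarrow{T(i)},z)$, every $t\in T\setminus T(i)$ has visited neighborhood $U\supsetneq\{z\}$, so no pendant of $z$ can be selected next. Your fallback of placing $Z$ after $U$ and $T$ again yields only connectivity, not the edges. It is also not true that the $z_\ell$ are then the only vertices with visited neighbors, since $F$, $V$, and $W$ have some as well.

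The missing step is the option you list but do not carry out, which is exactly what the paper does. For each $\ell$, add an ordering beginning $(z_\ell,z,\dots)$, specifically
$(z_\ell,z,\overrightarrow{Z}\ominus\{z_\ell\},\overrightarrow{U}\ominus\{z\},\overrightarrow{T},\overrightarrow{F},\overrightarrow{V},\overrightarrow{W})$. Its first pair forces $\edge{z}{z_\ell}$. It is an MNS-ordering of $H$: right after $z$, every unvisited neighbor of $z$ (the rest of $Z$, $U\setminus\{z\}$, $T$, and $S$) has visited neighborhood exactly $\{z\}$, while all other vertices have $\emptyset$. So the rest of $Z$ may be taken, and from then on the check mirrors Groups~(ii)--(iii). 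Also, forbidding edges between $Z$ and $V$ is unnecessary; the count only uses $V\cap(T\cup U\cup Z)=\emptyset$.
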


\begin{proof}
We modify the reduction from the proof of Theorem~\ref{thm-mns-edges-npc}. Introduce~$n$ new vertices~$z_1,z_2,\ldots,z_n$, and let
$\overrightarrow{Z}=(z_1,z_2,\ldots,z_n)$.
Append~$\overrightarrow{Z}$ to every ordering in~$\Pi$. In addition, for each~$i\in[n]$, add the ordering
$
\left(
z_i,
z,
\overrightarrow{Z}\ominus\{z_i\},
\overrightarrow{U}\ominus\{z\},
\overrightarrow{T},
\overrightarrow{F},
\overrightarrow{V},
\overrightarrow{W}
\right)$.
The new orderings force $z$ to be adjacent to the $n$ vertices in~$Z$ in any MNS-support. Recall that~$z$ is adjacent in~$G'$ to all~$m+3n-1$ vertices in~$T\cup(U\setminus\{z\})$.
The remainder of the argument is analogous to that in the proof of Theorem~\ref{thm-mns-edges-npc}: additional neighbors of~$z$ in~$V$ correspond exactly to a vertex cover of~$G$. Hence~$G$ has a vertex cover of size at most~$\kappa$ if and only if there exists an MNS-support of maximum degree  at most~$\kappa+m+4n-1$.
\end{proof}

\section{A Polynomial-time Algorithm for Recognizing DFS-Tree Supports}
This section presents a polynomial-time algorithm for {\prob{SP-DFS-Tree}}.
Throughout, let $I=(V,\Pi)$ denote an instance of {\prob{SP-DFS-Tree}}, with $m=\abs{\Pi}$ and $n=\abs{V}$.

\paragraph{Algorithm outline.}
Our algorithm first computes an attachment graph that contains every DFS-tree support of~$\Pi$ as a subgraph whenever one exists, building on prior work by Peters~et~al.~\cite{PetersYCE22}.
We then identify a subgraph~$T^{\star}$ of the attachment graph that is a tree, and show that, whenever~$I$ is a {\yesins}, there exists a DFS-tree support that contains~$T^{\star}$ and attaches all remaining vertices as leaves.

To build intuition for our algorithm, we recall the standard recursive view of DFS on rooted trees.

\begin{definition}[Recursive description of DFS on trees]
\label{def-recursive-dfs}
Given a rooted tree $T$ with root $r$, DFS proceeds as follows:
\begin{enumerate}[(1)]
    \item Visit the root $r$.
    \item Recursively perform DFS on each child of $r$, processing one subtree at a time until all children have been visited.
    \item After finishing with one child and its entire subtree, backtrack to~$r$ and continue with the next (unvisited) child.
\end{enumerate}
\end{definition}
In general, DFS explores each subtree as deeply as possible before moving on to the next, producing an ordering in which all vertices of a subtree appear consecutively.

When~$\Pi$ admits a DFS-tree support~$T$, the tree~$T$ is unrooted. Nevertheless, when interpreting a DFS traversal of~$T$ with respect to an ordering~$\sigma$, we treat~$T$ as rooted at the first vertex of~$\sigma$. In particular, the recursive structure of DFS imposes strong constraints on the relative positions of vertices in a DFS-ordering, which we exploit throughout the remainder of the section.


\begin{observation}
\label{obs-dfs-u-v-path-r-u}
Let~$\sigma$ be a DFS-ordering of a tree~$T$, let~$r$ be the first vertex in~$\sigma$, and consider~$T$ rooted at~$r$.
For any vertices~$u,v\in V(T)$ with~$u <_{\sigma} v$, if every vertex between~$u$ and~$v$ in~$\sigma$ is a leaf of~$T$, then the parent of~$v$ lies on the unique from $r$ to $u$ in~$T$.
\end{observation}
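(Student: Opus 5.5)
We argue through the recursive description of DFS on a rooted tree (Definition~\ref{def-recursive-dfs}). Root $T$ at $r$. The key fact is that when DFS visits a vertex $x$, the set of vertices whose subtrees are still unfinished, i.e.\ the current DFS stack, is exactly the path from $r$ to $x$. The next vertex visited after $x$ is then a child of $x$ if $x$ still has unvisited children. Otherwise DFS backtracks along this path and visits an unvisited child of the deepest ancestor of $x$ that still has one. In either case, the parent of the vertex visited immediately after $x$ lies on the $r$--$x$ path.

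The proof is an induction on the number of vertices strictly between $u$ and $v$ in $\sigma$.

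\emph{Base case.} No vertex lies between $u$ and $v$, so $v$ is visited immediately after $u$. By the fact above, $\textsf{parent}(v)$ lies on the $r$--$u$ path.

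\emph{Inductive step.} Let $w$ be the vertex immediately preceding $v$ in $\sigma$. Then $u \le_\sigma w$, and $w \neq u$ is a leaf of $T$. By the base case applied to the pair $(w,v)$, $\textsf{parent}(v)$ lies on the $r$--$w$ path. Since $w$ is a leaf and $w \neq r$ (because $u <_\sigma w$), the vertex $w$ has no children. So $\textsf{parent}(v) \neq w$, and $\textsf{parent}(v)$ lies on the $r$--$\textsf{parent}(w)$ path.

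Every vertex strictly between $u$ and $w$ is also a leaf, so the induction hypothesis applies to the pair $(u,w)$ and gives that $\textsf{parent}(w)$ lies on the $r$--$u$ path. A path from the root to a vertex on the $r$--$u$ path is a prefix of the $r$--$u$ path. Hence the $r$--$\textsf{parent}(w)$ path is contained in the $r$--$u$ path, and therefore so is $\textsf{parent}(v)$.

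The only step that needs care is justifying the stack fact. One route is Property~D (Lemma~\ref{lem-four-vertex}). Suppose $\textsf{parent}(v)=p$ is not an ancestor of $x$, where $x$ is the vertex immediately before $v$. Then $p <_\sigma v$, and $p \neq x$ because $p$ is not an ancestor of $x$ and is distinct from it. Since $p <_\sigma v$ and $x$ immediately precedes $v$, we get $p <_\sigma x <_\sigma v$. The edge $\{p,x\}$ is absent, since otherwise $x$ would be a child of $p$ and hence $p$ an ancestor of $x$. Property~D then demands a vertex $d$ with $p <_\sigma d <_\sigma x$ adjacent to $x$. Using that visited subtrees appear as consecutive blocks and that $p$'s subtree is still open, we would need to show that the path from $p$'s position to $x$ forces $p$ to be an ancestor of $x$, a contradiction.

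It is simpler to use the recursive description directly: this is the standard invariant that, in a DFS of a tree, the recursion stack when $x$ is visited is the path from $r$ to $x$. I expect this justification to be the only nonroutine point.
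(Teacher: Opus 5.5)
Your proof is correct. The paper states this observation without proof, treating it as immediate from the recursive description of DFS on trees (Definition~\ref{def-recursive-dfs}). Your induction on the number of intervening leaves, which reduces to the case where $v$ directly follows $u$ and then uses the stack invariant, is exactly the justification that description supplies.

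The unfinished Property~D detour can be dropped. If you prefer to argue from Lemma~\ref{lem-four-vertex} alone, it closes in one step. Let $y$ be the vertex of the $r$--$x$ path nearest to $r$ with $p<_\sigma y$. Apply Property~D to $p<_\sigma y<_\sigma v$, using $\edge{p}{y}\notin E(T)$. The resulting $d$ must be the parent of $y$, and by the choice of $y$ that parent precedes $p$, a contradiction.
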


\begin{observation}
\label{obs-xyz}
Let~$\sigma$ be a DFS-ordering of a tree~$T$, let~$r$ be the first vertex in~$\sigma$, and consider~$T$ rooted at~$r$.
For any vertices~$x,y,z\in \vset{T}$ such that
\begin{enumerate}[(1)]
    \item the parents of~$x$ and~$y$ are distinct and both lie on the path from~$r$ to~$z$ in~$T$, and
    \item $z <_{\sigma} y <_{\sigma} x$,
\end{enumerate}
the parent of~$y$ lies closer to~$z$ on the~$r$-$z$ path than the parent of~$x$.
\end{observation}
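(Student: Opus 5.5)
We will argue directly from the recursive description of DFS on rooted trees (Definition~\ref{def-recursive-dfs}): whenever a vertex $w$ is visited, the entire subtree of $w$ occupies a contiguous block of $\sigma$ beginning at $w$. The key consequence we use is that if $w<_\sigma w'$ and $w'$ is not in the subtree of $w$, then every vertex of the subtree of $w$ precedes $w'$. Moreover, at the moment a vertex $y$ is visited, its parent is the deepest vertex on the current DFS stack that still has unvisited children. For the vertices visited before $y$, the stack is exactly the path from $r$ to the most recently visited vertex, trimmed of fully explored vertices.

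Let $p_x$ and $p_y$ be the parents of $x$ and $y$, both lying on the $r$-$z$ path $P$ by hypothesis~(1). Suppose, for contradiction, that $p_y$ is not closer to $z$ than $p_x$. Since $p_x\neq p_y$ and both lie on $P$, this means $p_y$ is a proper ancestor of $p_x$, with $p_x$ strictly between $p_y$ and $z$ on $P$ (or $p_x=z$). Thus $p_x$ lies in the subtree of the child $c$ of $p_y$ on $P$, and $z$ lies in that subtree as well.

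We next examine the position of $y$. Since $z<_\sigma y$ and $y$ is a child of $p_y$, the vertex $y$ is not in the subtree of $c$. If $y=c$, then $y$ would be an ancestor of $z$ and would precede $z$, a contradiction. So $y$ is a sibling of $c$ visited after $c$. By contiguity, the whole subtree of $c$ is visited before $y$. In particular, every child of $p_x$, including $x$, lies in the subtree of $c$ and hence precedes $y$. This contradicts $y<_\sigma x$ from hypothesis~(2), so $p_y$ is strictly closer to $z$ than $p_x$.

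The main point needing care is the contiguity fact for subtrees. We plan to cite it from the recursive description, or to derive it from Property~D in Lemma~\ref{lem-four-vertex}, which prevents a vertex outside a subtree from being interleaved before the subtree is exhausted. The degenerate case $p_x=z$ poses no extra difficulty, because $x$ is then still a child of a vertex in the subtree of $c$.
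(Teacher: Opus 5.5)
Your argument is correct. Assuming $p_y$ is a proper ancestor of $p_x$, the child $c$ of $p_y$ toward $z$ satisfies $c \le_\sigma z <_\sigma y$ with $y$ outside the subtree of $c$, so contiguity of that subtree forces $x <_\sigma y$, a contradiction. The paper states this observation without proof, relying on the same fact you use: it notes right after Definition~\ref{def-recursive-dfs} that DFS on a rooted tree lists each subtree as a consecutive block, so your route matches the intended one, and the stack remark and the derivation from Property~D are not needed.
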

%


\subsection{Attachment Digraphs}
Trick~\cite{MichaeTricksinglepeaked89} presented a polynomial-time algorithm for~\prob{SP-GS-Tree}. Building on this work, Peters~et~al.~\cite{PetersYCE22} introduced the notion of an \emph{attachment digraph} and obtained a complete characterization of the families~$\Pi$ that admit a GS-tree support. A key property of the attachment digraph is that every GS-tree support of~$\Pi$ is a subgraph of its underlying graph.

For an ordering~$\sigma$ of~$V$ and a subset~$S\subseteq V$, let
$\tp{\sigma}{S}$, $\second{\sigma}{S}$, and $\bottom{\sigma}{S}$
denote the first, second, and last vertices of~$S$ in~$\sigma$,
respectively. When~$S=V$, we omit~$S$ from the notation; for example,
$\ttp{\sigma}=\tp{\sigma}{V}$.
For~$x\in S$, define
\[B(\sigma, S, x) =
\begin{cases}
    \{\second{\sigma}{S}\},& \text{if}~\tp{\sigma}{S}= x, \\
    \{y \in S \setmid y <_\sigma x \},& \text{otherwise}.\\
\end{cases}
\]

We now introduce Algorithm~\ref{alg-att-digraph}, which takes as input a set~$\Pi$ of orderings on a vertex set~$V$.
The algorithm operates iteratively. At each step, it removes all vertices that are bottom-ranked in the current working set~$S$ in at least one ordering of~$\Pi$, where initially~$S=V$.
These vertices form the set~$L$.
For each vertex $v \in L$, the algorithm computes
\[
  B(v) = \bigcap_{\sigma \in \Pi} B(\sigma, S, v),
\]
and introduces arcs from~$v$ to every vertex in~$B(v)$.
Prior work~\cite{PetersYCE22,MichaeTricksinglepeaked89} shows that every neighbor of~$v$ 
in a GS-tree support of~$\Pi$ must belong to~$B(v)$.
After creating these arcs, the vertices in~$L$ are removed from~$S$, and the process repeats.
The iteration continues until $S$ contains only two vertices, at which point the algorithm adds a final arc between them in an arbitrary direction.
The resulting digraph is referred to as an attachment digraph of~$\Pi$. Since the only nondeterministic step of the algorithm is the orientation of the final arc, each instance~$\Pi$ gives rise to at most two attachment digraphs.
%

\begin{algorithm}[ht]
\caption{}
\label{alg-att-digraph}
\prealg
{A set $\{\sigma_i \setmid i \in [t]\}$ of~$t$ orderings on a vertex set~$V$}
{A digraph $(V, A)$}
\begin{algorithmic}[1]
    \State $A \gets \emptyset$; \Comment{Initialize the arc set}
    \State $S \gets V$; \Comment{Initialize the working set}
    \While{$\abs{S} \ge 3$}
        \State $L \gets \{\bottom{\sigma_i}{S} \setmid i \in [t]\}$ \Comment{Select bottom-ranked elements}
        \For{each $v \in L$} \label{alg-attg-for-loop-start} \label{alg-att-L}
            \State $B(v) \gets \bigcap_{i \in [t]} B(\sigma_i, S, v)$; \Comment{Common blockers of $v$}
            \If{$B(v) \neq \emptyset$} \label{alg-attg-Bv-nonempty}
                \For{each $u \in B(v)$} \label{alg-att-precondition-add-arc}
                    \State add arc $(v, u)$ to $A$; \label{al-attg-add-arc}
                \EndFor
            \EndIf
        \EndFor
        \State $S \gets S \setminus L$; \label{alg-attg-update-S}
    \EndWhile
    \If{$\abs{S} = 2$}
        \State add an arc in an arbitrary direction between the two vertices of~$S$ to~$A$;
    \EndIf
    \State \Return $(V, A)$;
\end{algorithmic}
\end{algorithm}

Let~$\attg{\Pi}$ denote an 
attachment digraph of~$\Pi$, and let~$\attug{\Pi}$ denote its underlying undirected graph, called the attachment graph of~$\Pi$. The attachment graph is uniquely determined by~$\Pi$ and can be computed in polynomial time using Algorithm~\ref{alg-att-digraph}. Moreover, whenever~$\Pi$ admits a GS-tree support,~$\attug{\Pi}$ contains every GS-tree support of~$\Pi$ as a subgraph~\cite{PetersYCE22}.

The following characterization due to Peters~et~al.~\cite{PetersYCE22} links the existence of GS-tree supports to the connectivity of the attachment graph.

\begin{lemma}[\cite{PetersYCE22}]
\label{lem-characterization-gs}
The set $\Pi$ admits a GS-tree support if and only if $\attug{\Pi}$ is connected.
\end{lemma}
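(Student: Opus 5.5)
The plan is to prove both directions of Lemma~\ref{lem-characterization-gs} using the peeling structure of Algorithm~\ref{alg-att-digraph}. Two facts about GS-orderings of a tree support $T$ drive everything. First, every prefix of an ordering induces a connected subtree (Observation~\ref{ob-connected}). Second, the last vertex of a GS-ordering of a tree is a leaf of that tree. Restricting an ordering to the current working set $S$ preserves both facts whenever $T[S]$ is itself a subtree.

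For the forward direction, assume $T$ is a GS-tree support of $\Pi$. I will show by induction over the iterations that $T[S]$ is a subtree at every iteration and that every edge of $T$ appears in $\attug{\Pi}$.
\begin{itemize}
\item Fix an iteration with working set $S$, and let $v\in L$, so $v=\bottom{\sigma}{S}$ for some $\sigma\in\Pi$. Since $\sigma$ restricted to $S$ is a GS-ordering of the tree $T[S]$, the vertex $v$ is a leaf of $T[S]$.
\item Let $u$ be the unique neighbor of $v$ in $T[S]$. For every $\sigma'\in\Pi$, I claim $u\in B(\sigma',S,v)$:
  \begin{itemize}
  \item If $v=\tp{\sigma'}{S}$, then the second vertex of $S$ in $\sigma'$ must be adjacent to $v$ by connectivity, so it is $u$.
  \item Otherwise, the prefix of $\sigma'$ restricted to $S$ that ends at $v$ is connected and contains another vertex. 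Hence $v$ has a neighbor in $S$ before it, and that neighbor must be $u$.
  \end{itemize}
  Thus $u\in B(v)$ and the arc $(v,u)$ is added.
\item Removing all of $L$ (which consists of leaves) keeps $T[S\setminus L]$ a tree. I also need that $L\neq S$ minus fewer than two vertices. This holds because a tree on at least $3$ vertices cannot have all vertices be leaves of its orderings simultaneously; it suffices that two leaves $v,v'\in L$ adjacent to each other is impossible when $|S|\ge 3$.
\item When $|S|=2$, the remaining edge of $T$ is exactly the final arc.
\end{itemize}
So $T\subseteq\attug{\Pi}$, and hence $\attug{\Pi}$ is connected.

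For the backward direction, assume $\attug{\Pi}$ is connected. I will construct a tree by choosing, for each removed vertex $v$, one arc $(v,u)$ with $u\in B(v)$. Each such $u$ lies in $S$; the only question is whether $u$ survives to later iterations. The plan is to pick $u\in B(v)\setminus L$ when possible.

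\begin{itemize}
\item Every vertex is removed exactly once, and the final pair contributes one edge. If each removed vertex picks a target that survives longer, the chosen edges form a tree: there are $|V|-1$ edges, and each vertex points to a strictly later-surviving vertex, so there is no cycle.
\item Connectivity of $\attug{\Pi}$ should guarantee that such a choice exists for every $v$. This is the step I expect to be the main obstacle. If $B(v)\subseteq L$ for some $v$, I will argue that $v$'s component becomes cut off from the surviving vertices, contradicting connectivity. I would try to show that any $u\in B(v)\cap L$ at the same iteration forces $B(v)=\{u\}$ with $v=\tp{\sigma}{S}$ for all $\sigma$. That situation makes $\{u,v\}$ isolated from later vertices unless $|S|=2$.
\item Verify the resulting tree supports every $\sigma$: check that each prefix is connected. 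By reverse induction on iterations, adding $v$ back to $S\setminus L$ attaches it via $u\in B(\sigma,S,v)$, which precedes $v$ in $\sigma$, or is the second vertex when $v$ is first.
\end{itemize}

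Since this lemma is quoted from Peters et al.~\cite{PetersYCE22}, I would ultimately defer the delicate case analysis in the obstacle step to their argument. I would present the induction above as the structural skeleton.
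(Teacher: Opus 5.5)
Your forward direction is sound. The restriction step uses the fact that two subtrees of a tree meet in a connected set. In the third bullet you only need $S\setminus L\neq\emptyset$, which holds because a tree on at least three vertices has a non-leaf; the case $|S\setminus L|=1$ (e.g.\ a star) is harmless.

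The gap is the backward step you flagged, and it cannot be closed. Connectivity of $\attug{\Pi}$ does not force every processed vertex $v$ to have $B(v)\neq\emptyset$, because $v$ may be joined to the rest of the graph only by arcs \emph{entering} it from vertices removed in earlier rounds. For instance, let $V=\{a,b,c,d\}$ and $\Pi=\{(a,b,c,d),(c,a,b,d),(c,b,a,d)\}$.
\begin{itemize}
\item In the first round $L=\{d\}$ and $B(d)=\{a,b,c\}$.
\item In the second round $S=\{a,b,c\}$ and $L=S$. Here $B(a)=\{b\}\cap\{c\}\cap\{b,c\}$, $B(b)=\{a\}\cap\{a,c\}\cap\{c\}$ and $B(c)=\{a,b\}\cap\{a\}\cap\{b\}$ are all empty.
\item The loop stops with $S=\emptyset$ and no final arc.
\end{itemize}
So $\attug{\Pi}$ is the star centred at $d$, which is connected. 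Yet the length-two prefixes force $\edge{a}{b},\edge{c}{a},\edge{c}{b}$ into every GS-support, so no GS-tree support exists. With Algorithm~\ref{alg-att-digraph} as written (it silently skips $v$ when $B(v)=\emptyset$), the ``if'' direction of the statement is therefore false. Deferring this step to Peters et al.\ cannot rescue it, and the paper itself gives no proof, only the citation.

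The sub-case you planned to analyse is in fact vacuous. Suppose $|S|\ge 3$ and $u\in L\cap B(v)$ with $u=\bottom{\sigma_u}{S}$. Then $u$ does not precede $v$ in $\sigma_u$, so $v=\tp{\sigma_u}{S}$ and $u=\second{\sigma_u}{S}=\bottom{\sigma_u}{S}$, which is impossible. Hence $B(v)\subseteq S\setminus L$ always, and the only real obstruction is an empty $B(v)$.

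The criterion that does hold is: $\Pi$ admits a GS-tree support if and only if $B(v)\neq\emptyset$ for every processed $v$ (equivalently, the attachment digraph has exactly one sink). Necessity is your forward argument. Sufficiency is your skeleton:
\begin{itemize}
\item Choose any out-arc for each processed vertex; it lands in $S\setminus L$.
\item The final working set has one or two vertices, since $L=S$ would force every $B(v)\subseteq S\setminus L=\emptyset$.
\item This gives $\abs{V}-1$ arcs pointing to strictly later-surviving vertices, i.e.\ a spanning tree.
\item Your reverse induction, re-inserting all of $L$ at once, shows every ordering is a GS-ordering of it.
\end{itemize}
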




The connectivity characterization can be strengthened as follows.

\begin{lemma}[\protect{\cite[Theorem~7.9]{PetersYCE22}}]
\label{lem-att-dig-realize}
A tree $T$ is a GS-tree support of $\Pi$ if and only if it can be obtained from $\attg{\Pi}$ by
retaining exactly one outgoing arc from each nonsink vertex,
and then replacing all remaining directed arcs by undirected edges.
\end{lemma}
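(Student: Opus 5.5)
The plan is to argue along the elimination rounds of Algorithm~\ref{alg-att-digraph}. Let $S_1=V\supsetneq S_2\supsetneq\cdots$ be the successive working sets and $L_r$ the set removed in round~$r$. Everything rests on one elementary fact.
\begin{quote}
An ordering $\sigma$ of the vertex set of a tree~$T$ is a GS-ordering if and only if every vertex other than $\ttp{\sigma}$ has exactly one neighbour of~$T$ preceding it in~$\sigma$.
\end{quote}
This holds because prefixes stay connected (Observation~\ref{ob-connected}), and two earlier neighbours would close a cycle. I would prove both directions by induction over the rounds, running forward for ``only if'' and backward for ``if''.

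\emph{Only if.} Let $T$ be a GS-tree support of~$\Pi$. The invariant is that $T[S_r]$ is a tree and every $\sigma|_{S_r}$ is a GS-ordering of it. First I would show that each $v\in L_r$ is a leaf of $T[S_r]$. Indeed, $v$ is last in some $\sigma|_{S_r}$, and deleting the last vertex of a GS-ordering leaves a connected prefix.

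Next, let $u$ be the unique neighbour of $v$ in $T[S_r]$. If $v$ is not the top of $\sigma|_{S_r}$, the fact above forces $u<_\sigma v$. If $v$ is the top, then the length-two prefix forces $u=\second{\sigma}{S_r}$. So in every case $u\in B(\sigma,S_r,v)$, hence $u\in B(v)$ and the arc $(v,u)$ is present.

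Two vertices of $L_r$ can be adjacent in $T[S_r]$ only if both are leaves of the same edge, i.e.\ $|S_r|=2$, which is excluded inside the loop. Hence deleting $L_r$ preserves the invariant, since restricting a GS-ordering to a subtree obtained by removing leaves keeps it a GS-ordering. Each edge of~$T$ is therefore charged to exactly one removed endpoint, or to the final pair. Thus $T$ is obtained by keeping one outgoing arc per nonsink, namely the arc to $v$'s neighbour at removal time. The arc count $n-1$ matches the number of nonsinks.

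\emph{If.} Fix a choice of one outgoing arc per nonsink and let $T$ be the resulting graph. The key step is to show that for $v\in L_r$ and any $u\in B(v)$ we have $u\in S_r\setminus L_r$, unless $|S_r|=2$. Suppose instead $u\in L_r$, so $u$ is last in some $\sigma'|_{S_r}$. Then $v<_{\sigma'}u$, so $u\in B(\sigma',S_r,v)$ forces $v=\tp{\sigma'}{S_r}$ and $u=\second{\sigma'}{S_r}$. Being both second and last gives $|S_r|=2$, a contradiction.

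Consequently every retained arc points from a vertex to a strictly later survivor. Together with the final arc, this makes $T$ a tree, because it has $n-1$ edges and each vertex has a path to the final pair.

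Then I would prove backwards in~$r$ that each $\sigma|_{S_r}$ is a GS-ordering of $T[S_r]$. The base case is the final set of size at most two. For the step, add back $L_r$, whose members are leaves hanging off $S_r\setminus L_r$. Take $v\in L_r$ with parent $u$. If $v$ is not the top of $\sigma|_{S_r}$, then $u<_\sigma v$. If $v$ is the top, then $u$ is second, hence the top of $\sigma|_{S_r\setminus L_r}$, and every other vertex keeps its unique earlier neighbour. By the fact above, $\sigma|_{S_r}$ is a GS-ordering.

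The main obstacle I expect is this bookkeeping around the top-of-ordering case. It arises both in showing that targets survive the round and in the backward induction. It is exactly why $B$ uses $\second{\sigma}{S}$ when $v$ is the top, so I would check it carefully, including the degenerate situation where $|S|$ drops directly to~$1$.
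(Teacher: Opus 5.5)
Your round-by-round induction is a self-contained argument; the paper itself gives no proof here and simply cites Peters et al.\ (Theorem~7.9). The only-if direction is sound. So are your key step (every target of an arc created in round $r$ lies in $S_{r+1}$) and the backward induction. The one step that fails as written is in the if direction: ``this makes $T$ a tree, because it has $n-1$ edges.''

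The number of retained arcs equals the number of nonsinks. That number is $n-1$ only if no removed vertex has $B(v)=\emptyset$ and the working set never becomes empty, and Algorithm~\ref{alg-att-digraph} guarantees neither. For instance, take $\Pi=\{(b,x,y,z),(y,b,x,z),(x,y,z,b)\}$. The first round removes $L=\{z,b\}$ with $B(b)=\{x\}\cap\{y\}\cap\{x,y,z\}=\emptyset$, so $b$ is a sink and every selection leaves $b$ isolated. So ``every selection yields a tree'' is false. Your backward induction needs each $v\in L_r$ to hang as a leaf off $S_{r+1}$, and that is not yet justified.

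The repair uses the lemma's hypothesis that $T$ is a tree. Retained arcs point strictly forward, so they give distinct edges, and $|E(T)|=n-1$ then forces the digraph to have exactly one sink.
\begin{itemize}
\item If the final working set is nonempty, it already contains a sink: the head of the final arc, or the lone survivor. Hence every removed vertex is a nonsink.
\item The final set cannot be empty. Otherwise, in the last round $L_r=S_r$ with $|S_r|\ge 3$, and your key step gives $B(v)\subseteq S_r\setminus L_r=\emptyset$ for every $v\in S_r$, i.e.\ at least three sinks.
\end{itemize}
Insert this counting argument in place of the claim that every selection is a tree. Then every $v\in L_r$ has a retained arc into $S_{r+1}$, and the rest of your proof goes through unchanged.
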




A vertex in an attachment digraph is called a \emph{forced vertex} if it has at most one outneighbor, and a \emph{free vertex} otherwise.
As observed in~\cite{PetersYCE22}, every attachment digraph contains at least one forced vertex. In particular, the first vertex of every ordering is forced.

\begin{lemma}[\cite{PetersYCE22}]
\label{lem:1st-vertex-forced}
Then the first vertex of every ordering in~$\Pi$ is a forced vertex of~$\attg{\Pi}$.
\end{lemma}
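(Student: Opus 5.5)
We need to show: the first vertex of every ordering in $\Pi$ has at most one outneighbor in $\attg{\Pi}$. The plan is to trace how Algorithm~\ref{alg-att-digraph} creates arcs out of that vertex, and to use the two-case definition of $B(\sigma,S,x)$.

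Fix an ordering $\sigma\in\Pi$ and let $r=\ttp{\sigma}$. Arcs leaving $r$ are created at most once. Either $r$ is removed as part of some bottom set $L$ during the while-loop, or $r$ survives until $\abs{S}\le 2$. In the second case, the only arc that can be added incident to $r$ is the final arbitrary arc between the two remaining vertices, so $r$ has at most one outneighbor. Note that $r$ never receives outgoing arcs at any other time, because arcs $(v,u)$ are added only for $v\in L$ at the moment $v$ is removed.

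In the first case, consider the iteration with working set $S$ in which $r\in L$. Since $r$ precedes every other vertex in $\sigma$, and $r\in S$, we have $\tp{\sigma}{S}=r$. By definition, $B(\sigma,S,r)=\{\second{\sigma}{S}\}$, which is a singleton because $\abs{S}\ge 3$. Hence
\[
B(r)=\bigcap_{\sigma'\in\Pi}B(\sigma',S,r)\subseteq B(\sigma,S,r),
\]
so $\abs{B(r)}\le 1$. The arcs added from $r$ are exactly those to vertices in $B(r)$, so $r$ has at most one outneighbor. In both cases $r$ is forced, as claimed.

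There is no real obstacle here. The only point needing care is to confirm that arcs out of a vertex are added only in the iteration in which it is removed (or at the final step). This holds because removed vertices leave $S$ permanently and are never again placed in $L$.
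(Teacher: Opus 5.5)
Your proof is correct and complete. The paper gives no argument of its own and simply cites Peters et al.; your two observations are exactly the underlying reason. First, $r\in S$ forces $\tp{\sigma}{S}=r$, so $B(r)\subseteq B(\sigma,S,r)=\{\second{\sigma}{S}\}$. Second, arcs leave $r$ only in the single iteration that removes $r$, or else via the final arc.

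One cosmetic fix: in the case where $r$ survives the loop, say that the final arc is the only arc that can \emph{leave} $r$, not the only one \emph{incident} to $r$. Arcs entering $r$ may well be added in earlier iterations.
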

%
%
%
%

The next lemma describes the structure of the forced vertices in instances that admit a GS-tree support.
Let~$T^{\star}$ denote the subgraph of~$\attug{\Pi}$ induced by the forced vertices of~$\attg{\Pi}$.

\begin{lemma}[\cite{PetersYCE22}]
\label{lem-forced-vertex}
If~$\Pi$ admits a GS-tree support, then~$T^{\star}$ is a tree.
\end{lemma}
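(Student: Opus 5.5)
We prove Lemma~\ref{lem-forced-vertex} with Lemma~\ref{lem-att-dig-realize}: fix a GS-tree support $T$ of~$\Pi$ and compare $T^{\star}$ with the subgraph of~$T$ induced by the forced vertices. We show three things. First, $T^{\star}$ is acyclic. Second, it is nonempty. Third, it is connected.

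\emph{Acyclicity.} An edge of $\attug{\Pi}$ with both endpoints forced comes from an arc $(u,w)$ of $\attg{\Pi}$ in which $u$, say, is forced. Then $(u,w)$ is the unique outgoing arc of~$u$. Lemma~\ref{lem-att-dig-realize} keeps exactly one outgoing arc per nonsink vertex, so this arc survives in every GS-tree support, in particular in~$T$. (If $u$ is a sink, it has no outgoing arc, and the edge must be an arc out of~$w$, which is also forced; the same reasoning applies.) Hence every edge of $T^{\star}$ is an edge of~$T$. This shows that $T^{\star}$ is a subgraph of the tree~$T$, and so it is a forest.

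\emph{Nonemptiness.} By Lemma~\ref{lem:1st-vertex-forced}, the first vertex of any ordering is forced, so $T^{\star}$ has at least one vertex.

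\emph{Connectivity.} This is the main obstacle. Take forced vertices $a,b$ and let $P$ be the $a$--$b$ path in~$T$. It suffices to show that every internal vertex $v$ of~$P$ is forced. Then all edges of~$P$ join forced vertices, and being edges of $T\subseteq\attug{\Pi}$, they lie in~$T^{\star}$.

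Suppose instead that some internal vertex $v$ of~$P$ is free, so it has two outneighbors $w_1,w_2$. By Lemma~\ref{lem-att-dig-realize}, we may build a GS-tree support that uses either outgoing arc of~$v$. The plan is to exploit the removal process of Algorithm~\ref{alg-att-digraph}. An arc $(v,u)$ is created only when $v$ is removed, that is, when $v$ is last in some ordering restricted to the current set~$S$. Its heads are vertices still in~$S$, and $S$ remains connected in every support.

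We then argue as follows. The vertices removed after~$v$ form a subtree containing all of $B(v)$. Since $v$ lies strictly between $a$ and~$b$ on a tree path, at least one of $a,b$ — say~$a$ — is removed no later than~$v$, on the far side of~$v$ from the surviving set. We then try to show that this forces $a$ to have a choice of attachment as well, contradicting that $a$ is forced. Concretely, rerouting $v$'s kept arc between $w_1$ and $w_2$ changes the component structure so that $a$'s blocker set must contain two vertices.

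If this direct argument stalls, the fallback is an order-based argument using the structure of the sets~$B(\sigma,S,x)$. Intersecting the prefix sets across orderings gives a chain structure, and we would expect free vertices to be confined to removal stages after which no forced vertex lies beyond them. The required ordering facts would be taken from the characterization in~\cite{PetersYCE22}.
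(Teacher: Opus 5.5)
Your acyclicity step is correct. Both endpoints of an edge of $T^{\star}$ are forced, so the tail of the underlying arc has that arc as its only out-arc. Lemma~\ref{lem-att-dig-realize} then forces every GS-tree support $T$ to retain it, which gives $T^{\star}\subseteq T$. Reducing connectivity to ``no free vertex lies on the $T$-path between two forced vertices'' is also fine.

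The gap is the connectivity step itself, which you leave as a plan. The mechanism you sketch cannot produce the contradiction. $B(a)$ is computed from $\Pi$ alone, so it does not depend on which out-arc of $v$ a support retains. By Lemma~\ref{lem-att-dig-realize}, retaining $(v,w_1)$ or $(v,w_2)$ already gives GS-tree supports either way, so ``rerouting'' yields no conflict with $a$ being forced. The fallback paragraph only defers to the source of the lemma; the paper itself does the same and gives no proof. As written, nothing rules out a free vertex between two forced ones.

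The missing idea is an inheritance property read off the orderings: \emph{a vertex with a free out-neighbour is itself free}.
\begin{itemize}
\item Let $x$ be removed in the loop with working set $S_x$, and let $v\in B(x)$ be free. Arcs created for $x$ point into the set surviving $x$'s stage, and final-set vertices have out-degree at most one. So $v$ is removed at a later stage, with working set $S_v\subseteq S_x\setminus\{x\}$.
\item If $v=\tp{\sigma}{S_v}$ for some $\sigma\in\Pi$, then $B(\sigma,S_v,v)$ is a singleton, contradicting $\abs{B(v)}\ge 2$. Hence there are $w_1\neq w_2$ in $S_v$ preceding $v$ in every $\sigma\in\Pi$.
\item If $x=\tp{\sigma}{S_x}$ for some $\sigma$, then $v\in B(\sigma,S_x,x)$ forces $v=\second{\sigma}{S_x}$. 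Since $S_v\subseteq S_x\setminus\{x\}$, this gives $v=\tp{\sigma}{S_v}$, which was just excluded.
\item So in every $\sigma$ we have $w_1,w_2<_\sigma v<_\sigma x$. Therefore $\{v,w_1,w_2\}\subseteq B(x)$, and $x$ is free.
\end{itemize}

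Consequently, the unique out-arc of every forced vertex leads to a forced vertex removed strictly later. Under the hypothesis, every $B(x)$ is nonempty, since it contains the neighbour of $x$ in the subtree $T[S_x]$, of which $x$ is a leaf. The final working set is also nonempty, because each removed set consists of leaves of a subtree on at least three vertices. Following out-arcs from any forced vertex therefore stays inside $V^{\star}$ and ends in the final set, which is a single vertex or two vertices joined by the final arc. This gives connectivity of $T^{\star}$ directly.

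In your path formulation, the same fact says that the $T$-path from $a$ toward the surviving set consists of forced vertices, since it follows retained out-arcs. So the free vertex $v$ cannot lie on it.
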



As established above, every GS-tree support of~$\Pi$ is a subgraph of~$\attug{\Pi}$ whenever such a support exists.
Since every DFS-ordering is also a GS-ordering, every DFS-tree support of~$\Pi$ is a GS-tree support of~$\Pi$.
Together with Lemma~\ref{lem-forced-vertex}, this yields the following corollary.
%
%

\begin{corollary}
\label{cor-subtree-attachmentgraph}
If~$\dfssu{\Pi}\neq\emptyset$, then every tree in~$\dfssu{\Pi}$ is a subgraph of~$\attug{\Pi}$ and contains~$T^{\star}$ as a subtree.
\end{corollary}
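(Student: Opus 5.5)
The corollary follows by chaining results already available, so the plan is a short deduction rather than a new argument. Here $\dfssu{\Pi}$ denotes the set of DFS-tree supports of~$\Pi$. Fix an arbitrary tree $T\in\dfssu{\Pi}$; the goal is to show that $T$ is a subgraph of $\attug{\Pi}$ and that $T^{\star}$ is a subtree of~$T$.

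The first step is to pass from DFS to GS. By Lemma~\ref{lem-graph-traversal-relation}, every DFS-ordering of~$T$ is a GS-ordering of~$T$. Hence every ordering in~$\Pi$ is a GS-ordering of~$T$, so $T$ is a GS-tree support of~$\Pi$. In particular $\Pi$ admits a GS-tree support. The second step then follows from Lemma~\ref{lem-att-dig-realize}: $T$ arises from $\attg{\Pi}$ by keeping exactly one outgoing arc at each nonsink vertex and forgetting orientations. Thus every edge of~$T$ is an arc of $\attg{\Pi}$, so $T$ is a subgraph of $\attug{\Pi}$. Equivalently, one can cite the containment result of~\cite{PetersYCE22} stated just before Lemma~\ref{lem-characterization-gs}.

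The third step, and the only place needing care, is to show that $T^{\star}\subseteq T$. I would argue arc by arc. Take an edge $\{u,v\}$ of $T^{\star}$. It comes from an arc of $\attg{\Pi}$ between two forced vertices, say $(u,v)$, so $u$ is not a sink. Since $u$ is forced, $(u,v)$ is its unique outgoing arc. By Lemma~\ref{lem-att-dig-realize}, this arc must therefore be the one retained at~$u$, and so $\{u,v\}\in\eset{T}$. Hence $\eset{T^{\star}}\subseteq\eset{T}$, and trivially $\vset{T^{\star}}\subseteq\vset{T}=V$.

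Finally, since $\Pi$ admits a GS-tree support, Lemma~\ref{lem-forced-vertex} shows that $T^{\star}$ is a tree, hence connected. A connected subgraph of a tree is a subtree, so $T^{\star}$ is a subtree of~$T$. I expect no real obstacle. The one point to state carefully is the uniqueness of the out-arc at forced vertices, including the arbitrarily oriented final arc of Algorithm~\ref{alg-att-digraph}. That arc is also handled by Lemma~\ref{lem-att-dig-realize}, which applies to whichever attachment digraph is fixed.
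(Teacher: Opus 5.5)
Your proof is correct and follows the same route as the paper. Like the paper, you pass from DFS- to GS-tree supports via Lemma~\ref{lem-graph-traversal-relation}, use the containment of GS-tree supports in $\attug{\Pi}$, and invoke Lemma~\ref{lem-forced-vertex} for the connectivity of $T^{\star}$. Your arc-by-arc use of Lemma~\ref{lem-att-dig-realize} is a useful addition: a forced nonsink vertex has a unique out-arc, which must therefore be retained, and this makes explicit why every edge of $T^{\star}$ lies in the support, a step the paper's one-line justification leaves implicit.
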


\subsection{The Polynomial-time Algorithm}
We now present a polynomial-time algorithm for \prob{SP-DFS-Tree}.
The problem is trivial when $\Pi$ consists of a single ordering.
Thus, we assume throughout that $\Pi$ contains at least two orderings.

As a first step, we construct an attachment digraph~$\attg{\Pi}$ of~$\Pi$ using Algorithm~\ref{alg-att-digraph}.
For convenience, let $D = \attg{\Pi}$ and $G = \attug{\Pi}$.

If~$G$ is disconnected, by Lemma~\ref{lem-characterization-gs},~$\Pi$ admits no GS-tree support, and therefore no DFS-tree support.
We therefore return ``{\no}''. Assume henceforth that~$G$ is connected.

Let~$V^{\star}$ denote the vertex set of~$T^{\star}$, i.e., the set of forced vertices.
By Lemma~\ref{lem-forced-vertex},~$T^{\star}$ is a tree. Moreover, by Lemma~\ref{lem:1st-vertex-forced},~$T^{\star}$ contains the first vertex~$\ttp{\sigma}$ of every ordering~$\sigma\in\Pi$.
By Corollary~\ref{cor-subtree-attachmentgraph}, if $I$ is a {\yesins}, then $T^{\star}$ is a subtree of every DFS-tree support of~$\Pi$.

The remaining task is to determine whether~$T^{\star}$ can be extended to a DFS-tree support of~$\Pi$. The next lemma provides a crucial structural simplification: whenever a DFS-tree support exists, there is one in which every vertex outside~$V^{\star}$ is attached to~$T^{\star}$ as a leaf.

\begin{lemma}
\label{lem-dfs-leaves}
If $\dfssu{\Pi}\neq\emptyset$,
then there exists~$T\in \dfssu{\Pi}$ such that
\begin{itemize}
    \item $T^{\star}$ is a subtree of~$T$, and
    \item every vertex in $V \setminus V^{\star}$ is a leaf of~$T$.
\end{itemize}
\end{lemma}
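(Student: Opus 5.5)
Take any $T\in\dfssu{\Pi}$ and transform it step by step into one with the required property. By Corollary~\ref{cor-subtree-attachmentgraph}, $T$ is a subgraph of $\attug{\Pi}$ containing $T^{\star}$. By Lemma~\ref{lem-att-dig-realize}, $T$ arises from $D=\attg{\Pi}$ by keeping one out-arc at every nonsink vertex. Forced vertices have at most one out-arc, so their arcs are kept, which is why $T^{\star}$ lies in $T$. Among all trees in $\dfssu{\Pi}$, choose $T$ with the largest number of vertices of $V\setminus V^{\star}$ that are leaves. If some free vertex $v$ is not a leaf, produce $T'\in\dfssu{\Pi}$ in which $v$ becomes a leaf, no other free vertex stops being a leaf, and $T^{\star}$ stays a subtree. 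This contradicts the choice of $T$.

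\textbf{The local modification.} Let $v$ be free and not a leaf. Consider the neighbors of $v$ in $T$, i.e., its kept out-arc together with the in-arcs into $v$. Each in-neighbor $w$ has $(w,v)$ as its retained out-arc. The plan is to redirect every such $w$ to another of its out-neighbors, specifically to the out-neighbor $u$ of $v$ retained in $T$ or to some vertex of $B(v)$. The key structural fact to establish comes from the definition of $B(\cdot)$ and the peeling order in Algorithm~\ref{alg-att-digraph}. Vertices with an arc into $v$ were peeled no later than $v$. Since $v$ is free, $\abs{B(v)}\ge 2$, which means $v$ is never the first or second vertex of the current set $S$ in any ordering. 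Using that $v$ appears after all of $B(v)$ in every ordering, I would show that each in-neighbor $w$ of $v$ also has $B(w)\supseteq B(v)$, or at least contains $u$. Hence $(w,u)$ is an arc of $D$, and by Lemma~\ref{lem-att-dig-realize} the resulting tree $T'$, obtained by replacing $\{w,v\}$ with $\{w,u\}$, is again a GS-tree support. Redirecting a $w$ never creates a new non-leaf free vertex other than possibly $u$. I expect $u$ can be chosen forced, or else the argument can be iterated along the out-path, which ends in $T^{\star}$ because sinks and first vertices are forced.

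\textbf{Preserving DFS.} The main obstacle is showing that $T'$ is still a DFS-support, not merely a GS-support. The tool is Property~D (Lemma~\ref{lem-four-vertex}) together with Observations~\ref{obs-dfs-u-v-path-r-u} and~\ref{obs-xyz}. Fix $\sigma\in\Pi$, rooted at $\ttp{\sigma}$, which is forced and hence $\ne v$. In $T$, the subtree of $v$ occupies a consecutive block of $\sigma$ starting with $v$. I would first show that $u$ is the parent of $v$ in every $\sigma$: otherwise $v$ would precede $u$ in some ordering, contradicting $u\in B(v)$ when $v$ is not top. Moving the children $w$ of $v$ to $u$ makes the block of each $w$ a child block of $u$ immediately after $v$. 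Since $v$ becomes a leaf, DFS on $T'$ visits $v$, backtracks to $u$, and then takes the $w$-blocks in the same order. So $\sigma$ remains a DFS-ordering of $T'$.

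To make this rigorous I would check Property~D for triples $a<_\sigma b<_\sigma c$ with $\{a,c\}\in E(T')$ and $\{a,b\}\notin E(T')$. The only new edges are $\{u,w\}$, and the witness $d$ can be taken as the predecessor of $b$ on the path to $u$ or $w$. The delicate part is the claim that all children $w$ of $v$ can legitimately attach to $u$. If instead some $w$ must attach to a different vertex of $B(v)$, I would choose, among $B(v)\cap B(w)$, the vertex deepest on the root path in each $\sigma$, and use Observation~\ref{obs-xyz} to show this choice is consistent across all orderings.
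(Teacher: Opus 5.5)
Your argument is correct in substance, but it takes a more roundabout route than the paper.

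\textbf{The paper's route.} It flattens in one shot. Take any $T'\in\dfssu{\Pi}$. Then $T^{\star}$ is a connected subtree of $T'$ that contains every root $\ttp{\sigma}$. So each component $T_i$ of $T'-V^{\star}$ hangs from $T^{\star}$ by a single edge $\{u_i,v_i\}$, and in every $\sigma$ it forms a consecutive block beginning with $u_i$. It can therefore be replaced wholesale by leaves attached to $v_i$.

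\textbf{Your route.} You run a local, iterated version of the same surgery inside an extremal argument: hoist the children of a free non-leaf $v$ onto its parent $u$. To certify that $u$ is the parent of $v$ in every rooting, you use the peeling details of Algorithm~\ref{alg-att-digraph}: $u\in B(v)$, and $v$ is never top or second of $S$. That certification is valid. It also follows more cheaply from the fact the paper uses: all roots lie in the connected set $V^{\star}$, which avoids $v$, hence all roots lie in one component of $T-v$.

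\textbf{Unnecessary parts.} The attachment-digraph bookkeeping is not needed. This includes showing that $(w,u)$ is an arc, invoking Lemma~\ref{lem-att-dig-realize}, and the fallback of attaching $w$ elsewhere in $B(v)$. Your block argument already shows directly that every $\sigma$ is a DFS-ordering of $T'$, which is all the lemma requires. For what it is worth, the arc does exist: $u<_\sigma v<_\sigma w$ for every $\sigma$, so $B(w)\supseteq B(v)\cup\{v\}$.

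\textbf{The loose end.} Your counting step needs $u$ not to be a free leaf of $T$, and you leave this as ``I expect $u$ can be chosen forced.'' It is immediate. If $u$ is free, it is not the root of any $\sigma$, since roots are forced. So besides its child $v$, it has a parent of its own and is not a leaf. With this, the number of free leaves strictly increases and your extremal argument closes. The paper's global flattening avoids this termination bookkeeping entirely and uses nothing beyond Corollary~\ref{cor-subtree-attachmentgraph} and Lemma~\ref{lem:1st-vertex-forced}.
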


\begin{proof}
Suppose that $\dfssu{\Pi}\neq\emptyset$.
Let~$T'$ be any arbitrary tree in $\dfssu{\Pi}$. By Corollary~\ref{cor-subtree-attachmentgraph},~$T^{\star}$ is a subtree of~$T'$.
Let~$T_1$,~$T_2$,~$\ldots$,~$T_k$ denote the connected components of $T' - {V^{\star}}$. For each $i \in [k]$, let~$u_i$ be the vertex in~$T_i$ that is adjacent in~$T'$ to some vertex $v_i \in {V^{\star}}$.

Let~$\sigma$ be an ordering in~$\Pi$, which is a DFS-ordering of~$T'$. By Lemma~\ref{lem:1st-vertex-forced},~$T^{\star}$ contains~$\ttp{\sigma}$. Consequently, when performing a DFS traversal of~$T'$ according to~$\sigma$, all vertices of each~$T_i$, $i \in [k]$, are visited consecutively, beginning with~$u_i$.
Equivalently, in the ordering~$\sigma$, for every~$i \in [k]$, the vertices of~$T_i$ appear as a consecutive block, with~$u_i$ as the first element.

Construct a tree~$T$ by extending~$T^{\star}$: for each~$i \in [k]$, attach every vertex of~$T_i$ as a leaf adjacent to~$v_i$.
%
%
It is easy to see that~$\sigma$ remains a DFS-ordering of~$T$.
Since this holds for all~$\sigma \in \Pi$,~$T$ is a DFS-tree support of~$\Pi$. The lemma follows.
\end{proof}

By Lemma~\ref{lem-dfs-leaves}, if~$\Pi$ admits a DFS-tree support, then~$T^{\star}$ must be a DFS-tree support of~$\Pi$ restricted to the forced vertices. This yields the following necessary condition.

For $V' \subseteq V$, let~$\Pi_{|V'}$ denote the set of orderings obtained by restricting each ordering in~$\Pi$ to~$V'$.

\begin{observation}
\label{obs-t-star-fail}
If~$T^{\star}$ is not a DFS-tree support of~$\Pi_{|V^{\star}}$, then~$\Pi$ admits no DFS-tree support.
\end{observation}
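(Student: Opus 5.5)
We argue by contraposition: assuming $\Pi$ admits a DFS-tree support, we show that $T^{\star}$ is a DFS-tree support of $\Pi_{|V^{\star}}$. By Lemma~\ref{lem-dfs-leaves}, we may fix $T\in\dfssu{\Pi}$ such that $T^{\star}$ is a subtree of $T$ and every vertex of $V\setminus V^{\star}$ is a leaf of $T$. Since $T^{\star}$ is a tree and every vertex outside $V^{\star}$ is a leaf, deleting the leaves in $V\setminus V^{\star}$ from $T$ yields exactly $T^{\star}$. It then suffices to show that for every $\sigma\in\Pi$, the restriction $\sigma\ominus(V\setminus V^{\star})$ is a DFS-ordering of $T^{\star}$.

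The key step is the general fact that deleting a leaf preserves DFS-orderings, provided the leaf is not the first vertex. We would verify this with the four-point characterization (Lemma~\ref{lem-four-vertex}, Statement~\ref{four-point-dfs}). Let $\ell$ be a leaf of $T$ with $\ell\neq\ttp{\sigma}$, and let $a<_\sigma b<_\sigma c$ be vertices of $T-\ell$ with $\edge{a}{c}\in E$ and $\edge{a}{b}\notin E$. Property~D in $T$ provides $d$ with $a<_\sigma d<_\sigma b$ and $\edge{d}{b}\in E$. If $d\neq\ell$, this $d$ serves in $T-\ell$ as well. If $d=\ell$, then $b$ is the unique neighbor of $\ell$. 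Then $b$, the parent of $\ell$ in the DFS tree rooted at $\ttp{\sigma}$, would have to precede $\ell$, because in a DFS of a tree every non-root vertex is visited immediately after some neighbor is active and its parent precedes it. This contradicts $\ell<_\sigma b$. Hence $d\neq\ell$.

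Applying this fact repeatedly removes all vertices of $V\setminus V^{\star}$ one at a time. At each stage the removed vertex is still a leaf of the current tree, because removing leaves can only lower degrees and the vertices of $V^{\star}$ are untouched. The removed vertex is also never the first vertex of the current ordering, since by Lemma~\ref{lem:1st-vertex-forced} $\ttp{\sigma}\in V^{\star}$ and the first vertex of the restricted ordering is still $\ttp{\sigma}$. Therefore $\sigma_{|V^{\star}}$ is a DFS-ordering of $T^{\star}$ for every $\sigma\in\Pi$, which is the contrapositive of the observation.

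The only delicate point is the parent-precedes-child argument in the leaf-deletion step. It follows from Observation~\ref{ob-connected}: the prefix of $\sigma$ ending at $\ell$ induces a connected subgraph, and $\ell\neq\ttp{\sigma}$ has only the neighbor $b$, so $b<_\sigma\ell$.
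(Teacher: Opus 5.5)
Your proof is correct and follows the paper's route. The paper derives the observation directly from Lemma~\ref{lem-dfs-leaves}, leaving implicit the fact that deleting a leaf that is not the first vertex preserves DFS-orderings; you supply that fact yourself via Property~D and Observation~\ref{ob-connected}.
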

%
By Observation~\ref{obs-t-star-fail}, if~$T^{\star}$ is not a DFS-tree support of~$\Pi_{|V^{\star}}$, we return ``\no''. Henceforth, we assume that~$T^{\star}$ is a DFS-tree support of~$\Pi_{|V^{\star}}$.
By Lemma~\ref{lem-dfs-leaves}, it suffices to determine, for each vertex in~$V \setminus V^{\star}$, a suitable parent in~$T^{\star}$. To facilitate the discussion, we introduce the following notation.

Let~$\sigma$ be an ordering in~$\Pi$. For a {\free} vertex~$v$, its \emph{guider} is the closest {\forced} vertex preceding~$v$ in~$\sigma$. We denote it by~$\guide{v}{\sigma}$. Since the first vertex of~$\sigma$ is {\forced} (Lemma~\ref{lem:1st-vertex-forced}), every {\free} vertex in~$\sigma$ has a guider.
For two {\forced} vertices~$u$ and~$v$, let~$P_{\{u, v\}}$ denote the unique path between~$u$ and~$v$ in~$T^{\star}$.
For a {\forced} vertex~$v$ other than the first in~$\sigma$, let~$\parent{\sigma}{v}$ denote its parent in the tree~$T^{\star}$ rooted at the first vertex of~$\sigma$.
If~$\sigma$ is a DFS-ordering of a tree, then every vertex $v \in V^{\star}$, except the first one in~$\sigma$, is adjacent to exactly one vertex that precedes~$v$ in~$\sigma$, namely~$\parent{\sigma}{v}$.

Let~$\mathcal{T}^{\star}$ denote the set of all trees on~$V$ that contain~$T^{\star}$ as a subtree and in which every {\free} vertex is a leaf.

The following lemma restricts the possible neighbors of a {\free} vertex.


\begin{lemma}
\label{lem:undeter-vertex}
Let~$\sigma$ be an ordering of~$V$, let~$v$ be a {\free} vertex, and let~$u = \guide{v}{\sigma}$.
Let~$u'$ denote the closest {\forced} vertex succeeding~$v$ in~$\sigma$, if one exists, and define
\[
w =
\begin{cases}
\ttp{\sigma}, & \text{if no {\forced} vertex succeeds~$v$ in~$\sigma$},\\
\parent{\sigma}{u'}, & \text{otherwise}.
\end{cases}
\]
If there exists a tree~$\widehat{T} \in \mathcal{T}^{\star}$ admitting~$\sigma$ as a DFS-ordering, then the unique neighbor of~$v$ in~$\widehat{T}$ lies on~$P_{\{u, w\}}$.
\end{lemma}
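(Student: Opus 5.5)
The plan is to work in $\widehat{T}$ rooted at $r=\ttp{\sigma}$, which we may do because $\sigma$ is a DFS-ordering of $\widehat{T}$. We first locate the parent of $v$ on the $r$-$u$ path using Observation~\ref{obs-dfs-u-v-path-r-u}. We then bound it from above by $w$ using Observation~\ref{obs-xyz}.

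\textbf{Setup.} By Lemma~\ref{lem:1st-vertex-forced}, $r$ is forced, so $v\neq r$. Since $\widehat{T}\in\mathcal{T}^{\star}$, $v$ is a leaf of $\widehat{T}$, and its unique neighbor $p$ is its parent in $\widehat{T}$ rooted at $r$. Because $T^{\star}$ is a subtree of $\widehat{T}$ containing $r$, the path in $\widehat{T}$ between any two forced vertices is exactly $P_{\{\cdot,\cdot\}}$. For the same reason, for a forced vertex $x\neq r$ its parent in $\widehat{T}$ coincides with $\parent{\sigma}{x}$.

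\textbf{Step 1.} By the definition of the guider, every vertex strictly between $u$ and $v$ in $\sigma$ is free, hence a leaf of $\widehat{T}$. Observation~\ref{obs-dfs-u-v-path-r-u} then gives that $p$ lies on the $r$-$u$ path, i.e., on $P_{\{u,r\}}$. If no forced vertex succeeds $v$, then $w=r$ and we are done.

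\textbf{Step 2.} Otherwise, every vertex strictly between $u$ and $u'$ in $\sigma$ is free, since $v$ is free and $u,u'$ are the closest forced vertices on either side. All these vertices are therefore leaves. Applying Observation~\ref{obs-dfs-u-v-path-r-u} to the pair $u<_\sigma u'$ shows that $w=\parent{\sigma}{u'}$ also lies on the $r$-$u$ path.

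\textbf{Step 3.} It remains to show that $p$ is not a proper ancestor of $w$. If $p=w$ we are done. Otherwise apply Observation~\ref{obs-xyz} with $z=u$, $y=v$, $x=u'$. The parents $p$ and $w$ are distinct and both lie on the $r$-$u$ path, and $u<_\sigma v<_\sigma u'$. The observation yields that $p$ is closer to $u$ than $w$ on this path. Hence $p$ lies on the subpath between $w$ and $u$, which is $P_{\{u,w\}}$.

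The only delicate point is the setup: ensuring the observations, which are stated for $\widehat{T}$, transfer to the paths and parents defined in $T^{\star}$. This follows because $T^{\star}$ is a subtree containing the root, so ancestor relations and paths among forced vertices agree in both trees. I expect this step to need the most care, though not real difficulty.
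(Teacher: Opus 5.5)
Your proof is correct and follows the same route as the paper. You root $\widehat{T}$ at $\ttp{\sigma}$, use Observation~\ref{obs-dfs-u-v-path-r-u} to place both $w=\parent{\sigma}{u'}$ and the parent of $v$ on the root-to-$u$ path, and conclude that $v$'s parent lies between $u$ and $w$; where the paper argues informally that the traversal ``backtracks along $P_{\{u,w\}}$'' before reaching $u'$, you justify the same step more precisely via Observation~\ref{obs-xyz} with $z=u$, $y=v$, $x=u'$.
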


\begin{proof}
Let~$\widehat{T} \in \mathcal{T}^{\star}$ be a tree admitting~$\sigma$ as a DFS-ordering. Root~$\widehat{T}$ at the first vertex of~$\sigma$, and consider the corresponding DFS traversal.
We distinguish two cases.

\medskip
\noindent\textbf{Case~1:} No {\forced} vertex succeeds~$v$ in~$\sigma$.

In this case, $w$ appears first in~$\sigma$ and is therefore a {\forced} vertex by Lemma~\ref{lem:1st-vertex-forced}. Moreover, all vertices succeeding~$u$ in~$\sigma$ are leaves of~$\widehat{T}$.
After~$u$, the traversal backtracks along~$P_{\{u,w\}}$ toward~$w$, exploring the unvisited children of vertices on this path.
Consequently, every {\free} vertex succeeding~$u$ in~$\sigma$, including~$v$, is a child of some vertex on~$P_{\{u,w\}}$.

\medskip
\noindent\textbf{Case~2:} A {\forced} vertex succeeds~$v$ in~$\sigma$.

Since~$u'$ is a {\forced} vertex and every {\free} vertex is a leaf of~$\widehat{T}$, the parent of~$u'$ in~$\widehat{T}$ is also {\forced}. As~$T^{\star}$ contains all determined vertices and is a subtree of~$\widehat{T}$, this parent is~$w=\parent{\sigma}{u'}$.
By Observation~\ref{obs-dfs-u-v-path-r-u},~$w$ lies on the path from the root to~$u$ in~$\widehat{T}$.
Since~$\sigma$ is a DFS-ordering of~$\widehat{T}$, after visiting~$u$ and before visiting~$u'$, the traversal backtracks along~$P_{\{u,w\}}$ and explores unvisited children of vertices on this path.
Every vertex between~$u$ and~$u'$ in~$\sigma$ is {\free}, and hence a leaf of~$\widehat{T}$.
 Therefore, every such vertex, including~$v$, is a child of some vertex on~$P_{\{u,w\}}$.
%
%
\end{proof}

\begin{figure}
    \centering
    \includegraphics[width=0.75\linewidth]{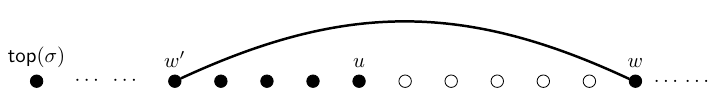}
    \caption{An illustration of the ordering~$\sigma$. Forced vertices are shown in black and {\free} vertices in white. The arc represents the edge in~$T^{\star}$ between the forced vertices~$w'$ and~$w$. The vertex~$u$ is the guider, in~$\sigma$, of all white vertices lying between~$u$ and~$w$.}
    \label{fig-dfs-tree}
\end{figure}




By Lemma~\ref{lem:undeter-vertex}, for every ordering~$\sigma$ and every {\free} vertex~$v$, the unique neighbor of~$v$ in any DFS-tree support of~$\Pi$ within~$\mathcal{T}^{\star}$ must lie on a path~$P_{\{u,w\}}$ in~$T^{\star}$, where~$u$ and~$w$ are as specified in the lemma.
Let~$Q(\sigma,v)$ denote the vertex set of~$P_{\{u,w\}}$. Define
\begin{equation}
\label{eq-qv}
    Q(v) = \bigcap_{\sigma \in \Pi} Q(\sigma, v).
\end{equation}
Then, if~$\Pi$ admits a DFS-tree support~$\widehat{T}\in \mathcal{T}^{\star}$, the unique neighbor of every {\free} vertex~$v$ in~$\widehat{T}$ belongs to~$Q(v)$.
Since each~$P_{\{u, w\}}$ is an induced path in the tree~$T^{\star}$, it follows that $Q(v)$, being the intersection of these paths, is itself a subpath of every such~$P_{\{u, w\}}$. Consequently,~$Q(v)$ induces a path in~$T^{\star}$.
We formalize this observation as follows.

\begin{observation}
\label{lem-qv-subpath}
Let~$v$ be a {\free} vertex, let~$\sigma$ be an ordering in~$\Pi$, let $u = \guide{v}{\sigma}$, and let~$w$ be the first vertex of~$\sigma$. If there exists a tree $\widehat{T} \in \mathcal{T}^{\star}$ admitting $\sigma$ as a DFS-ordering, then $Q(v)$ induces a subpath of~$P_{\{u, w\}}$ in~$T^{\star}$.
\end{observation}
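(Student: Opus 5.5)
The plan is to show two things: that $Q(\sigma,v)$ is itself a subpath of $P_{\{u,w\}}$, where $w=\ttp{\sigma}$, and that intersecting it with the other sets $Q(\sigma',v)$ keeps it a path, because each of them is a path in the tree $T^{\star}$.

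First, fix $\sigma$ and $\widehat{T}$ as in the statement, and root $\widehat{T}$ at $w=\ttp{\sigma}$. By definition $Q(\sigma,v)$ is the vertex set of $P_{\{u,w'\}}$, where $w'=w$ if no forced vertex succeeds $v$, and $w'=\parent{\sigma}{u'}$ otherwise. In the first case $Q(\sigma,v)=\vset{P_{\{u,w\}}}$ and there is nothing to show. In the second case I argue as in Case~2 of the proof of Lemma~\ref{lem:undeter-vertex}. Every vertex between $u$ and $u'$ in $\sigma$ is free, hence a leaf of $\widehat{T}$. The parent of $u'$ in $\widehat{T}$ is forced, so it equals $w'$. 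By Observation~\ref{obs-dfs-u-v-path-r-u}, $w'$ therefore lies on the path from the root $w$ to $u$ in $\widehat{T}$. Since $T^{\star}$ is a subtree of $\widehat{T}$ and $u,w$ are forced, this root-to-$u$ path is exactly $P_{\{u,w\}}$ in $T^{\star}$. So $w'$ is a vertex of $P_{\{u,w\}}$, and the segment $P_{\{u,w'\}}$ is a subpath of $P_{\{u,w\}}$. Thus $Q(\sigma,v)\subseteq \vset{P_{\{u,w\}}}$ and it induces a subpath of it.

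Second, $Q(v)=\bigcap_{\sigma'\in\Pi}Q(\sigma',v)\subseteq Q(\sigma,v)$. Each $Q(\sigma',v)$ is the vertex set of a path in the tree $T^{\star}$, hence convex: it contains the unique $T^{\star}$-path between any two of its vertices. An intersection of convex vertex sets in a tree is again convex. So for any $x,y\in Q(v)$, the whole $P_{\{x,y\}}$ lies in every $Q(\sigma',v)$, and therefore in $Q(v)$. A convex subset of the path $P_{\{u,w\}}$ is a set of consecutive vertices of it, so $Q(v)$ induces a subpath of $P_{\{u,w\}}$. If $Q(v)$ happens to be empty, the statement holds vacuously as the empty subpath.

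The only delicate point is the first step: identifying the parent of $u'$ in $\widehat{T}$ with $\parent{\sigma}{u'}$ taken in $T^{\star}$, and placing it on the root-to-$u$ path. This uses that free vertices are leaves of $\widehat{T}$ and that $T^{\star}\subseteq\widehat{T}$, exactly as in the proof of Lemma~\ref{lem:undeter-vertex}. The second step is standard tree convexity.
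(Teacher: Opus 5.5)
Your proof is correct and follows the paper's own reasoning. Each $Q(\sigma',v)$ is the vertex set of a path in the tree $T^{\star}$, so their intersection is a subpath, and $Q(\sigma,v)$ lies inside $P_{\{u,w\}}$ because $\parent{\sigma}{u'}$ lies on the root-to-guider path, as in Case~2 of the proof of Lemma~\ref{lem:undeter-vertex}. The paper leaves this last containment implicit; your explicit use of Observation~\ref{obs-dfs-u-v-path-r-u} supplies that step.
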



We next introduce a purification rule that further refines~$Q(v)$ for each {\free} vertex~$v$. Before presenting this rule, we establish a lemma that justifies it.

For two {\free} vertices~$a$ and~$b$, and a {\forced} vertex~$c$, we define:
\begin{itemize}
    \item $Q_{(b,c)}^{>}(a)$ as the set of vertices in~$Q(a)$ that are farther from~$c$ in~$T^{\star}$ than every vertex in~$Q(b)$, and
    \item $Q_{(b,c)}^{<}(a)$ as the set of vertices in~$Q(a)$ that are closer to~$c$ in~$T^{\star}$ than every vertex in~$Q(b)$.
\end{itemize}

\begin{lemma}
\label{lem-purify}
Let~$\sigma$ be an ordering on~$V$, and let~$u$ and~$v$ be two {\free} vertices that share the same guider~$w$ in~$\sigma$, with $u <_\sigma v$.
If there is a tree $\widehat{T}\in \mathcal{T}^{\star}$ that admits $\sigma$ as a DFS-ordering, then the unique neighbor of $v$ in $\widehat{T}$ is from $Q(v) \setminus Q_{(u,w)}^{<}(v)$, and the unique neighbor of $u$ in $\widehat{T}$ is from $Q(u) \setminus Q_{(v,w)}^{>}(u)$.
\end{lemma}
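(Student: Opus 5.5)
The plan is to reduce both assertions to one monotonicity fact about the parents of $u$ and $v$ in $\widehat{T}$, and then read the conclusion off the definitions of $Q^{<}$ and $Q^{>}$.

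\textbf{Setup.} Fix $\widehat{T}\in\mathcal{T}^{\star}$ admitting $\sigma$ as a DFS-ordering, and root it at $r=\ttp{\sigma}$. Since $u$ and $v$ are free, they are leaves of $\widehat{T}$; let $p_u$ and $p_v$ be their unique neighbours.

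First I locate $p_u$ and $p_v$. Both $u$ and $v$ have guider $w$, so every vertex strictly between $w$ and $v$ in $\sigma$ is free, hence a leaf of $\widehat{T}$. Lemma~\ref{lem:undeter-vertex} (or directly Observation~\ref{obs-dfs-u-v-path-r-u} applied with the pairs $(w,u)$ and $(w,v)$) then places both $p_u$ and $p_v$ on the path from $r$ to $w$ in $\widehat{T}$.

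That path consists of forced vertices, since the parent of a forced vertex is forced. So it coincides with the $r$--$w$ path in $T^{\star}$, and distances to $w$ along it are the same in $\widehat{T}$ and in $T^{\star}$. In particular $p_u$ and $p_v$ are comparable by their distance from $w$ in $T^{\star}$.

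\textbf{Key step (monotonicity).} I will show that
\[d_{T^{\star}}(p_u,w)\le d_{T^{\star}}(p_v,w).\]
If $p_u=p_v$ there is nothing to prove. Otherwise, apply Observation~\ref{obs-xyz} with $z=w$, $y=u$ and $x=v$. The parents of $u$ and $v$ are distinct and lie on the $r$--$w$ path, and $w<_\sigma u<_\sigma v$. The observation then says that the parent of $u$ lies closer to $w$ than the parent of $v$.

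As a sanity check, I would also give the stack argument directly. Between visiting $w$ and visiting $v$, only leaves are visited. Each leaf is pushed and immediately popped, so the DFS stack only shrinks from the path $r\ldots w$. Hence the parent of each later leaf is an ancestor of the parent of each earlier leaf. I expect this monotonicity step to be the only real content; the main care needed is confirming that all intermediate vertices are free, which is exactly what the shared guider guarantees.

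\textbf{Conclusion.} By the definition of $Q$ in~\eqref{eq-qv} together with Lemma~\ref{lem:undeter-vertex}, we have $p_u\in Q(u)$ and $p_v\in Q(v)$.

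Suppose $p_v\in Q^{<}_{(u,w)}(v)$. Then $p_v$ is closer to $w$ than every vertex of $Q(u)$, and in particular closer than $p_u$. This contradicts the monotonicity inequality, so $p_v\in Q(v)\setminus Q^{<}_{(u,w)}(v)$.

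Symmetrically, suppose $p_u\in Q^{>}_{(v,w)}(u)$. Then $p_u$ is farther from $w$ than every vertex of $Q(v)$, including $p_v$, which is again a contradiction. Hence $p_u\in Q(u)\setminus Q^{>}_{(v,w)}(u)$, which completes the plan.
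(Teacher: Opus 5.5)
Your proposal is correct and follows the paper's own argument. You place both parents on the $r$--$w$ path, which lies in $T^{\star}$, then use DFS backtracking monotonicity (made precise via Observation~\ref{obs-xyz}) to show the parent of $u$ is at least as close to $w$ as the parent of $v$, and read the conclusion off the definitions of $Q^{<}$ and $Q^{>}$. One shared caveat: the step $p_u\in Q(u)$, $p_v\in Q(v)$ tacitly requires $\widehat{T}$ to admit every ordering of $\Pi$, not just $\sigma$, since $Q(\cdot)$ is an intersection over all of $\Pi$; the paper makes the same tacit assumption.
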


\begin{proof}
   Assume that there exists a tree $\widehat{T} \in \mathcal{T}^{\star}$ that admits~$\sigma$ as a DFS-ordering. For brevity, let~$\widehat{N}(v)$  (respectively,~$\widehat{N}(u)$) denote the unique neighbor of $v$ (respectively,~$u$) in~$\widehat{T}$. By Lemma~\ref{lem:undeter-vertex}, we have $\widehat{N}(v)\in Q(v)$ and $\widehat{N}(u)\in Q(u)$.

  Let~$r$ be the first vertex of~$\sigma$, and root~$\widehat{T}$ at~$r$. Since~$\widehat{T}$ contains~$T^{\star}$ as a subtree, and all free vertices are leaves, $\widehat{N}(v)$ and $\widehat{N}(u)$ are precisely the parents of~$v$ and~$u$ in~$\widehat{T}$, respectively.

  Since~$u$ and~$v$ are {\free} vertices with the same guider~$w$ in~$\sigma$, and~$\sigma$ is a DFS-ordering of $\widehat{T}$, Observation~\ref{lem-qv-subpath} implies that both~$Q(v)$ and~$Q(u)$ induce subpaths of the path from~$r$ to~$w$ in~$\widehat{T}$.
  Therefore, both~$\widehat{N}(v)$ and~$\widehat{N}(u)$ lie on this path.
  Consider the DFS traversal of~$\widehat{T}$ with respect to~$\sigma$. After visiting $w$, the traversal backtracks toward~$r$. Since~$u$ is visited before~$v$ in $\sigma$, either $\widehat{N}(u) = \widehat{N}(v)$, or the traversal backtracks to $\widehat{N}(u)$ before $\widehat{N}(v)$, implying that $\widehat{N}(u)$ is closer to $w$ than $\widehat{N}(v)$ in $T^{\star}$.
   It follows that $\widehat{N}(v) \in Q(v) \setminus Q_{(u,w)}^{<}(v)$. Similarly, $\widehat{N}(u) \in Q(u) \setminus  Q_{(v,w)}^{>}(u)$.
\end{proof}

We now introduce the following purification rule and apply it recursively to shrink~$Q(v)$ for every {\free} vertex~$v$.

\begin{purificationrule}
\label{purification-rule}
If there exist two {\free} vertices $u$ and $v$ and an ordering $\sigma \in \Pi$ such that:
\begin{enumerate}[(i)]
    \item $u <_\sigma v$,
    \item $u$ and $v$ share the same guider $w$ in $\sigma$, and
    \item
    $
    Q_{(u,w)}^{<}(v) \cup Q_{(v,w)}^{>}(u) \neq \emptyset.
    $
\end{enumerate}
Then update $Q(v) := Q(v) \setminus Q_{(u,w)}^{<}(v)$ and $Q(u) := Q(u) \setminus Q_{(v,w)}^{>}(u)$.
\end{purificationrule}

We exhaustively apply the purification rule, in arbitrary order, until it no longer applies. Each application removes at least one vertex from~$Q(u)$ for some {\free} vertex~$u$. Moreover, each application can be performed in polynomial time. Since there are at most~$n-1$ {\free} vertices, each satisfying~$\abs{Q(u)}\le n-1$, the rule can be applied only polynomially many times. Hence, the purification rule can be exhaustively applied in polynomial time.

After exhaustively applying the purification rule, if there exists a {\free} vertex~$v$ with~$Q(v)=\emptyset$, then, by Lemma~\ref{lem-purify}, the instance~$I$ is a {\noins}.

We now arrive at our final task: assigning each {\free} vertex to a {\forced} vertex.

\begin{description}
    \item[Parent assignment phase:]
    We first fix an arbitrary bijection $f: V \to [n]$, referred to as the \emph{assignment function}.
    Then, for each {\free} vertex~$v$, we add~$v$ to~$T^{\star}$ as a leaf adjacent to
    a vertex $u \in Q(v)$ satisfying
    $f(u) \leq f(u')$ for all $u' \in Q(v)$.
\end{description}

The parent assignment phase can be carried out in polynomial time. We next establish its correctness through the following lemmas. We emphasize that, upon entering the parent assignment phase, the restriction of every ordering in~$\Pi$ to the {\forced} vertices is a DFS ordering of~$T^{\star}$.

\begin{lemma}
\label{lem-differ-stage}
Let~$u$ and~$v$ be two {\free} vertices, and let~$\sigma$ be an ordering of~$V$ such that $u <_\sigma v$, and~$u$ and~$v$ have different guiders in~$\sigma$. Let $u' = \guide{u}{\sigma}$. If there exists a tree~$\widehat{T}$ in~$\mathcal{T}^{\star}$ that admits~$\sigma$ as a DFS-ordering, then the following statements hold:
\begin{enumerate}[(1)]
    \item $\abs{Q(\sigma, u) \cap Q(\sigma, v)} \le 1$.

    \item Every vertex in $Q(\sigma, u) \setminus Q(\sigma, v)$ is closer to~$u'$ in $T^{\star}$ than every vertex in~$Q(\sigma, v)$.

    \item Every vertex in $Q(\sigma, u) \cap Q(\sigma, v)$ is closer to~$u'$ in~$T^{\star}$ than every vertex in $Q(\sigma, v) \setminus Q(\sigma, u)$.
\end{enumerate}
\end{lemma}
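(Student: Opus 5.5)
Write $u'=\guide{u}{\sigma}$ and $v'=\guide{v}{\sigma}$, and let $r=\ttp{\sigma}$; root $T^{\star}$ and $\widehat{T}$ at $r$. Since $u<_\sigma v$ and the two guiders differ, $v'$ is a {\forced} vertex with $u<_\sigma v'<_\sigma v$. In particular, some {\forced} vertex succeeds $u$. Let $u''$ be the closest one, so $u''\le_\sigma v'$.

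By the definition before Eq.~\eqref{eq-qv}, $Q(\sigma,u)$ is the vertex set of $P_{\{u',w_u\}}$ with $w_u=\parent{\sigma}{u''}$. Similarly, $Q(\sigma,v)$ is the vertex set of $P_{\{v',w_v\}}$, where $w_v$ is either $r$ or the parent of the forced vertex following $v$. The restriction of $\sigma$ to $V^{\star}$ is a DFS-ordering of $T^{\star}$ (we are in the parent assignment phase), and $u''$ immediately follows $u'$ in that restriction. So Observation~\ref{obs-dfs-u-v-path-r-u} gives that $w_u$ lies on the $r$--$u'$ path. Likewise $w_v$ is an ancestor of $v'$. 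Hence both $Q(\sigma,u)$ and $Q(\sigma,v)$ are ancestor segments: $Q(\sigma,u)$ runs from $u'$ up to $w_u$, and $Q(\sigma,v)$ runs from $v'$ up to $w_v$.

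The key structural step is the following claim. Let $x$ be the vertex where the $r$--$v'$ path leaves the $r$--$u'$ path, i.e., the lowest common ancestor of $u'$ and $v'$ in $T^{\star}$. Then $x$ equals $w_u$ or is a proper ancestor of $w_u$. To prove it, I will use the recursive description of DFS (Definition~\ref{def-recursive-dfs}) applied to $T^{\star}$. The vertex $u''$ is a child of $w_u$ that is not on the $r$--$u'$ path, and it is visited right after $u'$. So by the time $u''$ is visited, the subtrees hanging below $w_u$ along the $u'$-branch are finished. Every forced vertex visited from $u''$ onwards, including $v'$, is therefore either in a subtree of a child of $w_u$ off the $r$--$u'$ path, or outside the subtree of $w_u$ altogether. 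In both cases its path to $u'$ passes through $w_u$, respectively through an ancestor of $w_u$. This claim is the main obstacle: the bookkeeping of which branches are closed must be done carefully, including the degenerate case $v'=u''$.

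Given the claim, the three statements follow by distance comparisons in $T^{\star}$. Every $b\in Q(\sigma,v)$ is either off the $r$--$u'$ path or an ancestor of $v'$ at or above $x$. Either way $d(u',b)\ge d(u',x)\ge d(u',w_u)$, with equality only if $b=x=w_u$. Every $a\in Q(\sigma,u)$ lies between $u'$ and $w_u$, so $d(u',a)\le d(u',w_u)$, with equality only if $a=w_u$.
\begin{enumerate}[(1)]
\item The intersection $Q(\sigma,u)\cap Q(\sigma,v)$ can only be $\{w_u\}$ (when $x=w_u$ and $w_u\in Q(\sigma,v)$). Hence it has at most one vertex.
\item Let $a\in Q(\sigma,u)\setminus Q(\sigma,v)$. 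If $a\ne w_u$, then $d(u',a)<d(u',w_u)\le d(u',b)$ for every $b\in Q(\sigma,v)$. If $a=w_u\notin Q(\sigma,v)$, then no $b\in Q(\sigma,v)$ attains equality, so again $d(u',a)<d(u',b)$.
\item The intersection is at most $w_u$. Every $b\in Q(\sigma,v)\setminus Q(\sigma,u)$ satisfies $b\ne w_u$, so $d(u',b)>d(u',w_u)$.
\end{enumerate}
None of these comparisons uses $\widehat{T}$ beyond guaranteeing that the restricted ordering is a DFS-ordering of $T^{\star}$ and that $Q(\sigma,\cdot)$ is as described in Lemma~\ref{lem:undeter-vertex}.
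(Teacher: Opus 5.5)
Your proposal is correct and follows essentially the same route as the paper. Your $u''$ and $w_u$ are the paper's $w$ and $w'$, and your LCA claim is exactly the paper's observation that $v'$ is not a descendant of any vertex on $P_{\{u',w''\}}$, so that $Q(\sigma,v)\subseteq V(P_{\{v',r\}})$ meets $Q(\sigma,u)$ at most in $w_u$ and lies no closer to $u'$ than $w_u$. The differences are minor: you run the DFS argument on $T^{\star}$ with the restricted ordering rather than on $\widehat{T}$, and your explicit distance comparisons make the paper's terse derivation of Statements~(2) and~(3) fully precise.
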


\begin{proof}
    Assume that there exists $\widehat{T} \in \mathcal{T}^{\star}$  admitting~$\sigma$ as a DFS-ordering.
    Let $v' = \guide{v}{\sigma}$ denote the guider of $v$ in $\sigma$, which is a {\forced} vertex.
    Since $u<_{\sigma} v$, we have $u'<_{\sigma} u<_{\sigma} v'<_{\sigma}  v$.  Let~$r$ be the first vertex in~$\sigma$, and root both~$T^{\star}$ and~$\widehat{T}$ at~$r$.
    Let~$w$ be the nearest {\forced} vertex to the right of~$u$ in~$\sigma$, so that either $w=v'$ or $w<_{\sigma} v'$. Let $w' = \parent{w}{\sigma}$ denote the parent of~$w$ in~$T^{\star}$.
 Figure~\ref{fig-dfs-tree-b} illustrates the relative positions of these vertices in~$\sigma$.

    \begin{figure}
        \centering
        \includegraphics[width=0.85\textwidth]{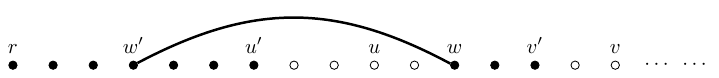}
        \caption{An illustration of the relative ordering of vertices in the ordering~$\sigma$ used in the proof of Lemma~\ref{lem-differ-stage}. Dark vertices represent {\forced} vertices, while the remaining vertices are {\free}. The edge between~$w$ and~$w'$ in~$T^{\star}$ is shown; all other edges are omitted for clarity.}
        \label{fig-dfs-tree-b}
    \end{figure}

We first establish Statement~(1). By definition, $Q(\sigma, u)$ is the set of vertices on the path $P_{\{u', w'\}}$.
If~$u'=w'$, then the statement is immediate.
Otherwise, let~$w''$ be the neighbor of~$w'$ on this path.
Consider the DFS traversal of~$\widehat{T}$ according to~$\sigma$.
Since~$\widehat{T}$ contains~$T^{\star}$ as a subtree and every {\free} vertex is a leaf, after visiting~$u'$ and before visiting~$w$, the traversal backtracks along the path~$P_{\{u', w'\}}$ from~$u'$ to~$w'$.
As neither~$v'$ nor~$v$ is visited before~$w$, neither is a descendant of any vertex in~$P_{\{u', w''\}}$ (Definition~\ref{def-recursive-dfs}). Hence,  either $P_{\{u', w'\}}$ and $P_{\{v', r\}}$ are vertex-disjoint, or they intersect only at~$w'$, the latter occurring if and only if~$v'$ is a descendant of~$w'$.
    By Observation~\ref{obs-dfs-u-v-path-r-u} and Lemma~\ref{lem:undeter-vertex}, $Q(\sigma, v)\subseteq \vset{{P_{\{v', r\}}}}$. Therefore,
    \[\abs{Q(\sigma, u) \cap Q(\sigma, v)} \le \abs{\vset{P_{\{u', w'\}}} \cap \vset{P_{\{v', r\}}}} \le 1.\]

We now prove Statements~(2) and~(3). As discussed, neither~$v'$ nor~$v$ is a descendant of any vertex on~$P_{\{u', w''\}}$ in~$\widehat{T}$. Consequently, for any vertex~$x \in Q(\sigma, v)$, the path from~$x$ to~$u'$ in~$\widehat{T}$ contains~$P_{\{w',u'\}}$ as a subpath. Since $Q(\sigma,u)$ is precisely the vertex set of~${P_{\{u', w'\}}}$, Statement~(2) follows immediately.  Moreover, we have shown that $Q(\sigma, u) \cap Q(\sigma, v) \in \{\emptyset, \{w'\}\}$, which establishes Statement~(3).
\end{proof}

For clarity, we use different notation for the tree before and after the assignment phase. Specifically, $T^{\star}$ denotes the tree before the assignment phase, while~$\widetilde{T}$ denotes the tree obtained afterward. Observe that~$\widetilde{T}\in\mathcal{T}^{\star}$.
 For each {\free} vertex~$x$, let~$\tilde{x}$ denote its unique neighbor in~$\widetilde{T}$. By construction,
$\tilde{x} = \arg\min_{y \in Q(x)} f(y)$, where~$f$ is the assignment function.

\begin{lemma}
\label{lem-correctness-assignment}
  If~$\Pi$ admits a DFS-tree support, then~$\widetilde{T}$ is a DFS-tree support of~$\Pi$.
\end{lemma}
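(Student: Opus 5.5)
We would prove that every ordering $\sigma\in\Pi$ is a DFS-ordering of $\widetilde{T}$ by checking DFS behaviour segment by segment along $\sigma$. When the assignment phase starts, $\sigma$ restricted to $V^{\star}$ is a DFS-ordering of $T^{\star}$. Every free vertex is a leaf of $\widetilde{T}$. So it suffices to show two things, with $\widetilde{T}$ rooted at $r=\ttp{\sigma}$.

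\begin{enumerate}[(a)]
\item Every free vertex $x$ appears, in the traversal, while the search is backtracking from its guider $u=\guide{x}{\sigma}$ toward $w$ (as in Lemma~\ref{lem:undeter-vertex}), at the moment the backtracking sits at $\tilde x$.
\item The free vertices in one block between consecutive forced vertices appear in the order in which their parents are reached along this backtracking, i.e.\ with nonincreasing depth.
\end{enumerate}

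Because the free vertices are leaves, condition~(b) plus $\tilde x\in Q(\sigma,x)$ is exactly what property~D of Lemma~\ref{lem-four-vertex} requires. We would verify it through the four-point condition: the only possible violating triples $a<_\sigma b<_\sigma c$ with $ac$ an edge and $ab$ not an edge involve a free vertex. We would handle these by whether $b$ and $c$ are forced or free and whether they share a guider.

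Membership $\tilde x\in Q(x)\subseteq Q(\sigma,x)$ holds by construction, since purification only shrinks $Q(x)$. This already places each free vertex correctly relative to the forced vertices: it is attached to a vertex of $P_{\{u,w\}}$, and by Observation~\ref{obs-dfs-u-v-path-r-u} the later forced vertex $u'$ is still reachable after the backtracking. The main work is condition~(b). For two free vertices $a<_\sigma b$ in the same block (same guider $g$) we must show that $\tilde a$ is at least as close to $g$ as $\tilde b$. For free vertices in different blocks, Lemma~\ref{lem-differ-stage} already guarantees that their candidate paths are ordered consistently, meeting in at most the single vertex $w'$. So no conflict can arise across blocks, and only the same-block case is delicate.

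For the same-block case we use exhaustiveness of the purification rule. Once the rule no longer applies, $Q_{(a,g)}^{<}(b)=\emptyset$ and $Q_{(b,g)}^{>}(a)=\emptyset$. Hence every vertex of $Q(b)$ is at least as far from $g$ as some vertex of $Q(a)$, and every vertex of $Q(a)$ is at least as close to $g$ as some vertex of $Q(b)$. Both sets are subpaths of the same $g$–$r$ path (Observation~\ref{lem-qv-subpath}). Therefore the near endpoint of $Q(a)$ is no farther from $g$ than the near endpoint of $Q(b)$, and likewise for the far endpoints. The two intervals are thus ``monotone'' but may overlap.

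This is where the fixed global bijection $f$ is essential, and we expect it to be the main obstacle. We must show that choosing the $f$-minimum on each interval respects the order. Suppose $\tilde b$ were strictly closer to $g$ than $\tilde a$. Then $\tilde b$ lies in $Q(b)$, so it is at least as far from $g$ as the near endpoint of $Q(b)$, which in turn is at least as far as the near endpoint of $Q(a)$. Also $\tilde b$ is strictly closer to $g$ than $\tilde a$, which lies in $Q(a)$. Hence $\tilde b$ lies between two points of the path $Q(a)$, so $\tilde b\in Q(a)$. Symmetrically $\tilde a\in Q(b)$. Minimality of $f$ on both sets then gives $f(\tilde a)\le f(\tilde b)$ and $f(\tilde b)\le f(\tilde a)$. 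Since $f$ is a bijection, $\tilde a=\tilde b$, a contradiction.

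This shows that every block is consistently ordered, so each $\sigma$ satisfies property~D on $\widetilde{T}$, and $\widetilde{T}$ is a DFS-tree support. The hypothesis that $\Pi$ admits a DFS-tree support enters through Lemma~\ref{lem-dfs-leaves}, which gives some $\widehat T\in\mathcal{T}^{\star}$ supporting $\Pi$. Together with Lemmas~\ref{lem-purify} and~\ref{lem-differ-stage} this ensures all $Q(x)\neq\emptyset$ and that the interval structure above is valid.
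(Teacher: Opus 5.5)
Your proof is correct, but it is organized differently from the paper's. The paper argues by minimal counterexample. It takes the first position $k$ at which some $\sigma\in\Pi$ ceases to be a partial DFS-ordering of $\widetilde{T}$. It uses Lemma~\ref{lem-prefix} to obtain a DFS-ordering $\sigma'$ of $\widetilde{T}$ with the same prefix $\sigma_{\le k-1}$, and shows that the $k$-th vertex $v$ of $\sigma'$ is free, like the $k$-th vertex $u$ of $\sigma$. It then derives two contradictory statements: $\tilde v$ lies between $w$ and $\tilde u$, and $\tilde u$ lies between $w$ and $\tilde v$. The second statement splits on whether $u$ and $v$ share a guider in $\sigma$. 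The same-guider case is exactly your purification-plus-$f$ argument. The different-guider case arises because $v$ may sit in a later block of $\sigma$, and that is where Lemma~\ref{lem-differ-stage} is used.

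Your direct verification gives a cleaner picture: only same-block pairs are delicate. In fact you do not need Lemma~\ref{lem-differ-stage} at all, since cross-block consistency already follows from $\tilde x\in Q(\sigma,x)$.

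What you still owe is the sufficiency step you only announce. It is routine via the fact that $\sigma$ is a DFS-ordering of the rooted tree $\widetilde{T}$ iff every subtree occupies a contiguous interval of $\sigma$ starting at its root. Take a forced $y$ whose last forced descendant is $z$:
\begin{itemize}
\item free vertices strictly between $y$ and $z$ attach inside the subtree of $y$ by (a);
\item in the block guided by $z$, those attached inside the subtree come first by (b);
\item every later free vertex has its guider outside the subtree of $y$, so its parent on the guider-to-$w$ path is outside as well.
\end{itemize}
This check also shows explicitly why the first failing vertex cannot be forced, which the paper only asserts.

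Two small presentational points. State (a) simply as $\tilde x\in Q(x)\subseteq Q(\sigma,x)$, rather than in terms of the traversal you are trying to establish. Also note that each $Q(x)$ remains a subpath of the guider-to-root path under purification, since the rule only removes an end segment ordered by distance from the guider; your interval argument relies on this.
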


\begin{proof}
   Assume that~$\Pi$ admits a DFS-tree support. Suppose for contradiction that there exists an ordering~$\sigma\in\Pi$ that is not a DFS-ordering of~$\widetilde{T}$. Let~$k$ be the smallest index such that~$\sigma_{\le k-1}$ is a partial DFS-ordering of~$\widetilde{T}$, but~$\sigma_{\le k}$ is not, and let~$u$ be the $k$-th vertex of~$\sigma$.
   Observe that~$u$ is a {\free} vertex.
    Let $r = \ttp{\sigma}$, and let $w = \guide{u}{\sigma}$.

    Let~$\sigma'$ be a DFS-ordering of~$\widetilde{T}$ containning $\sigma_{\leq k-1}$ as a prefix.
    Let~$v$ be the~$k$-th vertex of~$\sigma'$. By Lemma~\ref{lem-prefix}, such a DFS-ordering~$\sigma'$ exists. Clearly,~$v\neq u$.
   We next prove that~$v$ is also {\free}.

\begin{claim}
$v$ is a {\free} vertex.
\end{claim}

\begin{proof}
Root both~$T^{\star}$ and~$\widetilde{T}$ at~$r$.
Assume, for contradiction, that~$v$ is {\forced}.
Let~$v^*$ be the parent of~$v$ in~$T^{\star}$.
There are no {\forced} vertices between~$w$ and~$v$ in~$\sigma'$, and~$\sigma'$ restricted to the {\forced} vertices is a DFS-ordering of~$T^{\star}$.
By Observation~\ref{obs-dfs-u-v-path-r-u},~$v^*$ lies on the path between~$w$ and the root~$r$ in~$T^{\star}$.

Let~$a$ be the closest {\forced} vertex to the right of~$u$ in~$\sigma$; such a vertex exists since~$v$ is {\forced} and lies to the right of~$u$.
Let~$a^*$ be the parent of~$a$ in~$T^{\star}$.
Observe that all vertices between~$a$ and~$w$ in~$\sigma$ are {\free}.
Moreover, upon entering the parent assignment phase, every ordering in~$\Pi$, including~$\sigma$, restricted to the {\forced} vertices is a DFS-ordering of~$T^{\star}$.
Similarly, Observation~\ref{obs-dfs-u-v-path-r-u} implies that~$a^*$ lies on the path between~$w$ and~$r$ in~$T^{\star}$.

We claim that~$a^*$ lies on the path~$P_{\{v^*, w\}}$ in~$T^{\star}$.
If $a^*=v^*$, the claim is immediate.
Otherwise, by the choice of~$a$, we have $a <_{\sigma} v$.
Since~$\sigma$ restricted to the {\forced} vertices is a DFS-ordering of~$T^{\star}$, and $w <_{\sigma} a <_{\sigma} v$, Observation~\ref{obs-xyz} implies that~$a^*$ is closer to~$w$ than~$v^*$ in~$T^{\star}$. As both~$a^*$ and~$v^*$ lie on the path between~$w$ and~$r$ in~$T^{\star}$, it follows that~$a^*$ lies on~$P_{\{v^*, w\}}$, as claimed.

We now derive a contradiction.
Recall that~$\sigma'$ is a DFS-ordering of~$\widetilde{T}$.
Furthermore,~$u$ is a leaf in~$\widetilde{T}$ whose parent lies on the path between~$w$ and~$r$ (Lemma~\ref{lem:undeter-vertex}), we have
\[\tilde{u}\in Q(u)\subseteq Q(\sigma, u) = \vset{P_{\{a^*, w\}}},\]
and hence $\tilde{u}$ is on the path between~$w$ and~$a^*$ in~$\widetilde{T}$.
Since both~$v$ and~$u$ are visited after~$w$, and~$v$ is visited before~$u$ in~$\sigma'$, Observation~\ref{obs-xyz} implies that~$v^*$ lies on the path between~$\tilde{u}$ and~$w$.
This is possible only if $a^*=v^*=\tilde{u}$. That is,~$u$ and~$v$ have the same parent.
Replacing~$v$ with~$u$ in~$\sigma'_{\le k}$ therefore preserves the partial DFS-ordering of~$\widetilde{T}$ on the first~$k$ vertices.
Since $\sigma'_{\leq k-1}$ is the same as~$\sigma_{\leq k-1}$, this contradicts the minimality of~$k$.
%
\end{proof}

Recall that~$\tilde{u}$ and~$\tilde{v}$ denote the unique neighbors of~$u$ and~$v$ in~$\widetilde{T}$, respectively.
   Clearly,~$\tilde{u} \neq \tilde{v}$, since otherwise~$\sigma_{\leq k}$ would form a partial DFS-ordering of~$\widetilde{T}$, which contradicts our assumption.
We prove the following two statements.

    \begin{enumerate}[(i)]
        \item In~$T^{\star}$,~$\tilde{v}$ lies on the path between~$w$ and~$\tilde{u}$.

        \item
         In $T^{\star}$,~$\tilde{u}$ lies on the path between~$w$ and~$\tilde{v}$.
            \end{enumerate}

   We first consider Statement~(i).
   By Observation~\ref{lem-qv-subpath}, $Q(u)\subseteq \vset{P_{\{w, r\}}}$. Since $\tilde{u}\in Q(u)$, we know that~$\tilde{u}$ lies on the path $P_{\{w, r\}}$ between the root~$r$ and the guider~$w$ of~$u$ in~$\sigma$.
   By the definition of~$v$, we have that $w = \guide{v}{\sigma'}$. Recall that all {\free} vertices including~$v$ are leaves of~$\widetilde{T}$.
   Let~$X$ be the set of the first~$k-1$ vertices in~$\sigma$ and~$\sigma'$.
   After visiting the vertices of~$X$,~$\sigma'$ visits~$v$. By Lemma~\ref{lem:undeter-vertex},~$\tilde{v}$ locates on~$P_{\{r,w\}}$  too.
   Since~$\sigma'$ visits~$v$ before~$u$, and~$\tilde{u} \neq \tilde{v}$, by Observation~\ref{obs-xyz}, we have that~$\tilde{v}$ lies on the path between~$w$ and~$\tilde{u}$.

    Now we analyze Statement~(ii).
    We distinguish two cases according to whether~$u$ and~$v$ have the same guider in~$\sigma$.

    Consider first the case where~$\guide{u}{\sigma}= \guide{v}{\sigma}$.
    Since the purification rule has been exhaustively applied, in~$T^{\star}$, every vertex $x \in Q(u) \setminus Q(v)$ is closer to~$w$ than every vertex $y \in Q(v)$, and that every vertex $x \in Q(v) \setminus Q(u)$ is farther to~$w$ than every vertex $y \in Q(u)$. Since~$\widetilde{T}$ is obtained from~$T^{\star}$  by adding the {\free} vertices as leaves, these also hold in~$\widetilde{T}$. Then, if $\tilde{u} \in Q(u) \setminus Q(v)$ or $\tilde{v} \in Q(v) \setminus Q(u)$, the statement in~(ii) holds.
    Otherwise,~$\tilde{u}$ and~$\tilde{v}$ are both from $Q(u) \cap Q(v)$. By the parent assignment phase, we have that $\tilde{u}=\arg \min_{x \in Q(u)} f(x)$ and $\tilde{v}=\arg \min_{x \in Q(v)} f(x)$. It follows that $\tilde{u}=\tilde{v}$, a contradiction, implying that this cannot occur.


    Consider now the second case where $\guide{u}{\sigma}\neq \guide{v}{\sigma}$. By Lemma~\ref{lem-differ-stage}, we have that $| Q(u) \cap Q(v) \le 1 |$ and every vertex $x \in Q(u) \setminus Q(v)$ is closer to $w$ than every vertex of $Q(v)$, and every vertex $x \in Q(u) \cap Q(v)$ is closer to $w$ than every vertex of $Q(v) \setminus Q(u)$.
    The remainder of the proof proceeds analogously to the first case.

Finally, since $\tilde{u} \neq \tilde{v}$ and $T^{\star}$ is a tree, the two statements above contradict each other.
%
 %
\end{proof}

\subsection{Putting All Together}
We have described our polynomial-time algorithm in an expository manner, presenting each step sequentially together with a rigorous proof of its correctness. To provide a clearer overall understanding, this section presents the algorithm in its entirety (Algorithm~\ref{alg-dfs-tree-realize}) and provides a global proof of correctness by modularly combining the individual arguments established for the separate steps.

\begin{algorithm}[ht]
\caption{
}
\label{alg-dfs-tree-realize}
\prealg
{A vertex set $V$ and a set $\Pi$ of orderings of $V$}
{A DFS-tree support of $\Pi$, if one exists; otherwise, return ``\no''}
\begin{algorithmic}[1]
    \State $D\gets \attg{\Pi}$; $G\gets \attug{\Pi}$; \label{alg-label-attachment-graph}
    \If{$G$ is disconnected} \label{alg-label-D-disconnected-condition}
        \State \Return ``\no''; \label{lag-label-D-disconnected}
    \EndIf
    \State let $T^{\star}$ be the subgraph of $G$ induced by the forced vertices of~$D$; \label{alg-label-ini-t-star}
    \State let $V^{\star}\gets \vset{T^{\star}}$;  \label{alg-label-determined} \Comment{set of {\forced} vertices}
    \If{${V^{\star}} = V$} \label{alg-label-t-star-full-a}
        \If{all orderings in $\Pi$ are DFS-orderings of $T^{\star}$}
            \State \Return $T^{\star}$; \label{alg-label-t-star-full-b}
        \Else
            \State \Return ``\no''; \label{alg-label-t-star-full-no}
        \EndIf
    \EndIf

    \If{$\exists \sigma\in \Pi$ whose restriction to ${V^\star}$ is not a DFS-ordering of $T^{\star}$} \label{alg-label-restricted-a}
        \State \Return ``\no''; \label{alg-label-restricted-b}
    \EndIf

    \State $S \gets V \setminus {V^{\star}}$;  \label{alg-label-assign-S}  \Comment{set of {\free} vertices}
    \State compute $Q(v)$ for all $v \in S$ as defined in~\eqref{eq-qv}, and exhaustively apply the Purification Rule; \label{alg-label-purification}
    \If{$Q(v)=\emptyset$ for some $v\in S$} \label{alg-label-Qv-empty-a}
        \State \Return ``\no''; \label{alg-label-Qv-empty-b}
    \EndIf
        \State let $f:V\rightarrow[\abs{V}]$ be an arbitrary bijection;  \label{alg-label-parent-assignment-start}
        \For{each $v \in S$}
            \State let $u \gets \argmin_{x \in Q(v)} f(x)$;
            \State add vertex $v$ and edge $\edge{u}{v}$ to $T^{\star}$; \label{alg-label-parent-assignment-end}
        \EndFor
    \State \Return $T^{\star}$; \label{alg-label-end}
\end{algorithmic}
\end{algorithm}

\begin{theorem}
\label{thm:build-DFS-tree}
Let $\Pi$ be a set of orderings of a vertex set~$V$.
If $\Pi$ admits a DFS-tree support, then Algorithm~\ref{alg-dfs-tree-realize}
 constructs a DFS-tree support of~$\Pi$ in polynomial time;
otherwise, it reports \emph{no}.
\end{theorem}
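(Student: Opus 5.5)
The argument splits into soundness (the algorithm never rejects a yes-instance and never returns an invalid tree), completeness, and running time. I will walk through Algorithm~\ref{alg-dfs-tree-realize} line by line and justify each return statement with the lemma that was proved for the corresponding step.

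\textbf{Rejections are correct.} Suppose first that $\Pi$ admits a DFS-tree support. Every DFS-ordering is a GS-ordering, so $\Pi$ then has a GS-tree support. By Lemma~\ref{lem-characterization-gs}, $G=\attug{\Pi}$ is connected, so the ``\no'' at line~\ref{lag-label-D-disconnected} is never reached. By Lemma~\ref{lem-forced-vertex}, $T^{\star}$ is a tree. Corollary~\ref{cor-subtree-attachmentgraph} says $T^{\star}$ is a subtree of every support.
\begin{itemize}
\item If $V^{\star}=V$, then $T^{\star}$ is the unique candidate support, since it is a spanning subtree of any support and so equals it. The test at lines~\ref{alg-label-t-star-full-a}--\ref{alg-label-t-star-full-no} is therefore exact.
\item Otherwise, Lemma~\ref{lem-dfs-leaves} gives a support $\widehat{T}\in\mathcal{T}^{\star}$. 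Observation~\ref{obs-t-star-fail} rules out the rejection at line~\ref{alg-label-restricted-b}.
\item Lemma~\ref{lem:undeter-vertex} and Lemma~\ref{lem-purify} show, by induction on the number of purification steps, that for every free vertex $v$ the neighbour of $v$ in $\widehat{T}$ survives in $Q(v)$. Hence no $Q(v)$ becomes empty and line~\ref{alg-label-Qv-empty-b} is not reached.
\item The algorithm therefore reaches the parent assignment phase. Lemma~\ref{lem-correctness-assignment} shows the output $\widetilde{T}$ is a DFS-tree support.
\end{itemize}

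\textbf{Outputs are correct.} Conversely, whenever the algorithm outputs a tree, that tree is a support.
\begin{itemize}
\item At line~\ref{alg-label-t-star-full-b} this holds because it was tested explicitly.
\item At line~\ref{alg-label-end}, if $\Pi$ admits a support, the previous paragraph applies. If $\Pi$ admits none, some ordering fails on the constructed tree. So the algorithm needs a final verification step: check every $\sigma\in\Pi$ against $\widetilde{T}$, for instance via Property~D of Lemma~\ref{lem-four-vertex}, and output ``\no'' on failure.
\end{itemize}
This verification is the main subtlety. The listing as written returns $T^{\star}$ without checking, and Lemma~\ref{lem-correctness-assignment} is conditional on existence. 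I would either argue that the verification is implicit, or note that appending it is harmless and polynomial. With it in place, a ``\no'' answer is always correct by the contrapositive of the first part.

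\textbf{Running time.} Each step is polynomial in $n$ and $\abs{\Pi}$:
\begin{itemize}
\item Algorithm~\ref{alg-att-digraph} runs at most $n$ rounds, each polynomial.
\item Connectivity and the restricted DFS checks are polynomial, using the four-point characterization in $O(n^4)$ per ordering or a direct simulation.
\item Computing each $Q(\sigma,v)$ is a path computation in $T^{\star}$.
\item Purification applies at most $n^2$ times, as argued after the rule.
\item The assignment and the final verification are linear or polynomial.
\end{itemize}
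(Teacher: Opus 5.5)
Your argument follows the paper's proof essentially line for line. The same lemmas justify each rejection, and Lemma~\ref{lem-correctness-assignment} certifies the output on yes-instances. Your explicit induction over purification steps spells out something the paper leaves tacit: Lemma~\ref{lem-purify} is being applied to the already-shrunk sets $Q(\cdot)$, so it needs the invariant that the true neighbours are still present.

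The subtlety you flag is genuine. The paper's proof only cites Lemma~\ref{lem-correctness-assignment}, which presupposes that a support exists. It never shows that a no-instance surviving all earlier checks is rejected, so it does not actually establish the ``otherwise'' clause. Appending a final polynomial-time verification of $\widetilde{T}$ against $\Pi$, as you propose, closes this gap.

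If you would rather keep the algorithm exactly as listed, there is another route. Beyond the checks already passed, the proof of Lemma~\ref{lem-correctness-assignment} uses the existence hypothesis only in the form ``some tree of $\mathcal{T}^{\star}$ admits this particular $\sigma\in\Pi$''. It enters through Lemma~\ref{lem:undeter-vertex}, Observation~\ref{lem-qv-subpath}, and Lemma~\ref{lem-differ-stage}. Such a tree always exists once $\sigma$ restricted to $V^{\star}$ is a DFS-ordering of $T^{\star}$: attach every free vertex as a leaf to its guider in $\sigma$. So the lemma in fact holds unconditionally whenever the algorithm reaches the assignment phase.
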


\begin{proof}
The algorithm first computes the attachment digraph~$D$ and the attachment graph~$G$ of~$\Pi$, using the polynomial-time algorithm of~\cite{PetersYCE22} (Line~\ref{alg-label-attachment-graph}).

If~$\Pi$ admits a DFS-tree support, then it admits a GS-tree support by Lemma~\ref{lem-graph-traversal-relation}. By Lemma~\ref{lem-characterization-gs},~$G$ is connected. Therefore, if~$G$ is disconnected (Line~\ref{alg-label-D-disconnected-condition}), the algorithm correctly returns ``\no'' in Line~\ref{lag-label-D-disconnected}.

In Line~\ref{alg-label-ini-t-star}, the algorithm initializes~$T^{\star}$ as the subgraph of~$G$ induced by all forced vertices in~$D$. Line~\ref{alg-label-determined} defines~$V^{\star}$ as the set of all {\forced} vertices. By Lemmas~\ref{lem-forced-vertex},~$T^{\star}$ is a tree. Furthermore, by Lemmas~\ref{lem-graph-traversal-relation} and~\ref{lem-att-dig-realize}, every DFS-tree support of~$\Pi$ is a spanning tree of~$G$. Therefore, if~$V^{\star}=V$, then~$T^{\star}$ is the unique candidate DFS-tree support of~$\Pi$. Consequently, the algorithm correctly returns ``\yes'' if and only if every ordering in~$\Pi$ is a DFS-ordering of~$T^{\star}$ in Lines~\ref{alg-label-t-star-full-a}--\ref{alg-label-t-star-full-no}.

The algorithm next checks whether $T^{\star}$ is a DFS-tree support of $\Pi_{|{V^{\star}}}$ (Lines~\ref{alg-label-restricted-a} and~\ref{alg-label-restricted-b}). By Observation~\ref{obs-t-star-fail}, if this is not the case, then $\Pi$ admits no DFS-tree support and the algorithm returns ``\no''.

If the algorithm has not terminated so far, then every ordering in~$\Pi$ restricted to~$V^{\star}$ is a DFS-ordering of~$T^{\star}$.
By Lemmas~\ref{lem-dfs-leaves} and~\ref{lem-purify}, if~$\Pi$ admits a DFS-tree support, then  it admits one that contains~$T^{\star}$ as a subtree
and places all {\free} vertices as leaves.

In Line~\ref{alg-label-assign-S}, the algorithm identifies the set~$S$ of all
{\free} vertices. In Line~\ref{alg-label-purification}, it computes
$Q(v)$ for each~$v\in S$ according to~\eqref{eq-qv} and exhaustively applies
the purification rule.

In Lines~\ref{alg-label-Qv-empty-a} and~\ref{alg-label-Qv-empty-b}, the
algorithm returns ``\no'' whenever there exists a vertex~$v\in S$ with
$Q(v)=\emptyset$. By Lemmas~\ref{lem-purify} and~\ref{lem-dfs-leaves}, every
DFS-tree support must assign to each~$v\in S$ a unique neighbor from~$Q(v)$.
Therefore, if $Q(v)=\emptyset$ for some~$v$, $\Pi$ admits no DFS-tree support,
and the algorithm returns ``\no''.
%

If~$Q(v)$ is nonempty for all {\free} vertices~$v\in S$, the
algorithm proceeds to the parent assignment phase
(Lines~\ref{alg-label-parent-assignment-start}--\ref{alg-label-parent-assignment-end}),
whose correctness is established in Lemma~\ref{lem-correctness-assignment},
and outputs~$T^{\star}$ in Line~\ref{alg-label-end}.

Since both the exhaustive application of the purification rule and the parent
assignment phase can be performed in polynomial time, the entire algorithm runs
in polynomial time.
\end{proof}

{
\section{Concluding Remarks}
In this paper, we introduced a unified framework for studying structured preferences through graph search paradigms. We investigated whether a given preference profile admits a graph support whose traversals generate the observed preference orderings under six fundamental graph search paradigms: BFS, DFS, LexBFS, LexDFS, MCS, and MNS. We established NP-hardness results for both \textsc{Edge-Bounded-S-SP} and \textsc{Deg-Bounded-S-SP} for all considered paradigms, thereby extending the previously known hardness results for generic search. For the tree-support variant, we developed a polynomial-time algorithm for recognizing DFS-tree supports, resolving one of the two previously open cases and establishing several structural properties of DFS on trees that may be of independent interest. Together with existing results, our results yield a nearly complete complexity classification of graph-search-based preference domains, with the complexity of recognizing BFS-tree supports remaining the only open case.

\begin{openquestion}
    What is the complexity of {\prob{SP-$\mathcal{S}$-Tree}}.
\end{openquestion}

Our results also motivate a closer examination of graph-search-based preference domains on more restricted graph classes.
On paths and cycles, GS-, MCS-, and MNS-orderings coincide. Consequently, MCS- and MNS-compatible preferences on paths (respectively, cycles) coincide exactly with single-peaked preferences (respectively, single-peaked preferences on a circle), which can be determined in polynomial time~\cite{PetersYCE22}.

The remaining cases admit the following simple characterization.

\begin{theorem}
For every~$\mathcal{S}\in\{\mathrm{DFS},\mathrm{LexDFS},\mathrm{BFS},\mathrm{LexBFS}\}$, the following recognition problems are polynomial-time solvable:
\begin{itemize}
    \item Determining whether a set~$\Pi$ of orderings admits an~$\mathcal{S}$-support that is a path.
    \item Determining whether a set~$\Pi$ of orderings admits an~$\mathcal{S}$-support that is a cycle.
\end{itemize}
\end{theorem}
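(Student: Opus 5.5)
The plan is to exploit how rigid DFS, LexDFS, BFS and LexBFS are on paths and cycles. On a path or a cycle every vertex has degree at most two. As a result, the first vertex of an ordering essentially fixes the whole traversal up to a few binary choices.

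\textbf{Paths.} Let $P$ be a path with endpoints $x,y$ and let the traversal start at a vertex $s$.
\begin{itemize}
\item For DFS, the ordering runs from $s$ to one endpoint and then jumps back to the other side of $s$. Hence $\sigma$ is a DFS-ordering of $P$ if and only if $\sigma$ consists of $s$, then one of the two subpaths of $P-s$ read outward from $s$, then the other one read outward from $s$.
\item For BFS, the vertices are taken at distances $1,2,\dots$ from $s$. Within a layer there are at most two vertices, and (using Property~B of Lemma~\ref{lem-four-vertex}) their order is forced to alternate consistently with the first choice, until one side is exhausted.
\item LexBFS coincides with BFS on paths, and LexDFS with DFS. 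This follows from Lemma~\ref{lem-graph-traversal-relation} together with the observation that on graphs of maximum degree two the lexicographic tie-breaking adds nothing beyond BFS/DFS.
\end{itemize}
Since every such ordering is a GS-ordering, any $\mathcal{S}$-path support is a GS-path support, so $\Pi$ is single-peaked. The set of axes compatible with a single-peaked profile has a compact description (PQ-tree style, cf.~\cite{PetersYCE22}). I would not enumerate that set. Instead I would extract the axis directly from the last vertices:
\begin{itemize}
\item In a DFS-ordering of a path, the first vertex $s$ together with the block after the jump determines all edges except the one at the jump.
\item Concretely, consecutive vertices in $\sigma$ are adjacent in $P$, except at exactly one position. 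At that position the vertex is adjacent to $s$.
\end{itemize}
This gives a polynomial-size candidate set. Take one ordering $\sigma\in\Pi$ and guess the jump position; there are at most $n$ choices. This determines $P$ uniquely, and we then test all orderings of $\Pi$ against $P$ in polynomial time. This brute force over $O(n)$ candidates is the whole algorithm for DFS and LexDFS.

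For BFS and LexBFS, a single ordering $\sigma$ determines $P$ once we know how $\sigma_{\ge2}$ splits into the two sides of $s$. In a BFS-ordering of a path the sides alternate, so the split is determined by at most $O(n)$ choices. The relevant parameters are the parity pattern and the index where one side runs out. Each choice yields one candidate path, and we check it against $\Pi$.

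\textbf{Cycles.} Let $C$ be a cycle and start at $s$.
\begin{itemize}
\item DFS goes around in one direction and visits every vertex before backtracking. So a DFS-ordering of $C$ is exactly a cyclic walk, and $\sigma$ itself determines $C$: the edges are the consecutive pairs plus the pair (last, first). Hence there is a unique candidate, obtained from any $\sigma\in\Pi$.
\item BFS alternates between the two arcs. Again the structure of $\sigma$ gives $O(n)$ candidates, differing in which vertices go to which side.
\end{itemize}
In each case we test every candidate against all orderings of $\Pi$ by simulation, or equivalently via the four-point characterizations in Lemma~\ref{lem-four-vertex}, in polynomial time.

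\textbf{Main obstacle.} The step I expect to be hardest is the BFS case: proving rigorously that the number of ways a BFS-ordering of a path or cycle can interleave the two sides is polynomial, and that each way yields a unique candidate. My first attempt would be a lemma based on Property~B. On degree-$\le 2$ graphs, the vertex at position $i\ge 2$ is adjacent to the earliest vertex still having an unvisited neighbour. From this one can reconstruct the edge set greedily, with a single binary branch at the second vertex and at the point where one side terminates. I would also need to confirm that LexBFS adds no constraints beyond BFS on these graphs, so that the candidate set is the same.
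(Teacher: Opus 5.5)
Your proposal is correct, but it takes a different route from the paper. You generate $O(n)$ candidate supports from a single ordering $\sigma\in\Pi$ and then check each candidate against all of $\Pi$ via Lemma~\ref{lem-four-vertex}. The candidates are: the position of the (at most one) jump back to a neighbour of the start vertex $s$ for DFS on paths; the index at which one side of the forced alternation runs out for BFS on paths; and the single cycle formed by the consecutive pairs plus the closing pair for DFS on cycles. The paper handles DFS on paths differently: the last vertex of a DFS-ordering of a path is an endpoint, so more than two distinct last vertices means ``No''; otherwise it fixes the ends and completes the path ``greedily''. For BFS, the paper asserts that a single ordering is a BFS-ordering of a \emph{unique} path and verifies the other orderings against it. Your enumerate-and-verify scheme is uniform across the four searches and needs only the easy structural description of these orderings. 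The paper avoids enumeration, but at the price of a greedy step it does not specify. More importantly, the paper's uniqueness claim for BFS on paths is false. The ordering $(a,b,c)$ is a BFS-ordering of both the path with edges $\{a,b\},\{b,c\}$ and the path with edges $\{a,b\},\{a,c\}$. Moreover, $\{(a,b,c),(c,b,a)\}$ is supported only by the former, while $\{(a,b,c),(c,a,b)\}$ is supported only by the latter, so no canonical choice works and an enumeration like yours is genuinely needed there. On cycles the uniqueness claim does hold for both searches: for BFS the alternation is complete and the two arcs close up at the last one or two positions. So your $O(n)$ bound for BFS on cycles can be sharpened to a single candidate. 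One small fix: when $s$ is an endpoint there is no jump, so ``exactly one position'' should read ``at most one'', and the no-jump candidate must be included.
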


\begin{proof}
Consider the first problem for DFS. If there are more than two vertices appearing last in the given orderings, output {\no}. If exactly two such vertices exist, place them at the two ends of the path and determine the remaining vertices greedily. If there is only one such vertex, place it at one end of the path and again determine the remaining vertices greedily.

For the first problem for BFS, observe that for any single ordering, there is a unique path that admits it as a BFS-ordering, which can be identified in polynomial time. Therefore, to solve the problem, we first arbitrarily select an ordering in $\Pi$, and then verify whether every other ordering in $\Pi$  is a BFS-ordering of the path determined by the selected ordering.

The algorithms for the second problem are analogous to those for the first. Since LexDFS and DFS (respectively, LexBFS and BFS) coincide on paths and cycles, the corresponding problems for LexDFS and LexBFS are also polynomial-time solvable.
\end{proof}
}

\section*{Acknowledgements}

This work was supported by the National Natural Science Foundation of China (Grant No.~62302060).


\end{document}